\documentclass[11pt]{article}
\usepackage{amsmath,amsthm,amssymb,amsfonts,graphicx,multicol,multirow,subfigure}
\usepackage{cite}
\usepackage[all]{xy}
\usepackage{fullpage,tabularx}
\usepackage[small,compact]{titlesec}
\usepackage[small,it]{caption}
\usepackage{hyperref}

\usepackage{times}

 \newenvironment{cit}
{
    \begin{list}{- \ }{}
        \setlength{\topsep}{0pt}
        \setlength{\parskip}{0pt}
        \setlength{\partopsep}{0pt}
        \setlength{\parsep}{0pt}         
        \setlength{\itemsep}{0pt}
     \setlength{\leftmargin}{0em}
     \setlength{\labelwidth}{1.5em}
     \setlength{\labelsep}{0.5em} 
}
{
    \end{list} 
}

\def\etal{\emph{et al.} }

\setlength{\itemsep}{-1em} 
 \setlength{\partopsep}{0pt}

\newcommand{\stackanchor}[2]{{ \scriptsize \begin{array}{c} \scriptsize #1 \\ \scriptsize #2 \end{array}}}

\parindent = 0in

\theoremstyle{plain}
\newtheorem{theorem}{Theorem}

\newtheorem{observation}[theorem]{Observation}

\theoremstyle{definition}

\usepackage[section]{algorithm}
\usepackage{algorithmic}

\newcommand{\comment}[1]{}




\def\cF{\mathcal{F}}
\def\R{\mathbb{R}}
\def\forb{\operatorname{forb}}
\def\etal{\textit{et al.}}

\begin{document}
\setlength{\parskip}{0em}
\setlength{\parindent}{.25in}

\title{\Large Automated Discharging Arguments for Density Problems in Grids \\
			 (Extended Abstract\thanks{ %
			 	Pages 2-\pageref{paper:end} of this PDF contain the extended abstract, 
				with some 
				figures and tables
				appearing in appendices (pp. \pageref{apx:begin}-\pageref{apx:end}).  
			 } )}
\author{Derrick Stolee\\
		Department of Computer Science\\
		Department of Mathematics\\
		Iowa State University\\
		\texttt{dstolee@iastate.edu}
	}

\maketitle

\begin{abstract}
Discharging arguments demonstrate a connection between local structure and global averages.
This makes it an effective tool for proving lower bounds on the density of special sets in infinite grids.
However, the minimum density of an identifying code in the hexagonal grid remains open, with an upper bound of $\frac{3}{7} \approx 0.428571$ and a lower bound of $\frac{5}{12}\approx 0.416666$.
We present a new, experimental framework for producing discharging arguments using an algorithm.
This algorithm replaces the lengthy case analysis of human-written discharging arguments with a linear program that produces the \emph{best possible} lower bound using the specified set of discharging rules.
We use this framework to present a lower bound of $\frac{23}{55} \approx 0.418181$ on the density of an identifying code in the hexagonal grid, and also find several sharp lower bounds for variations on identifying codes in the hexagonal, square, and triangular grids.
\end{abstract}

\clearpage
\section{Introduction}

\def\figtwoheight{0.8in}
\begin{figure}[bp]
\centering
	\begin{tabular}{ccc}
		\includegraphics[height=\figtwoheight]{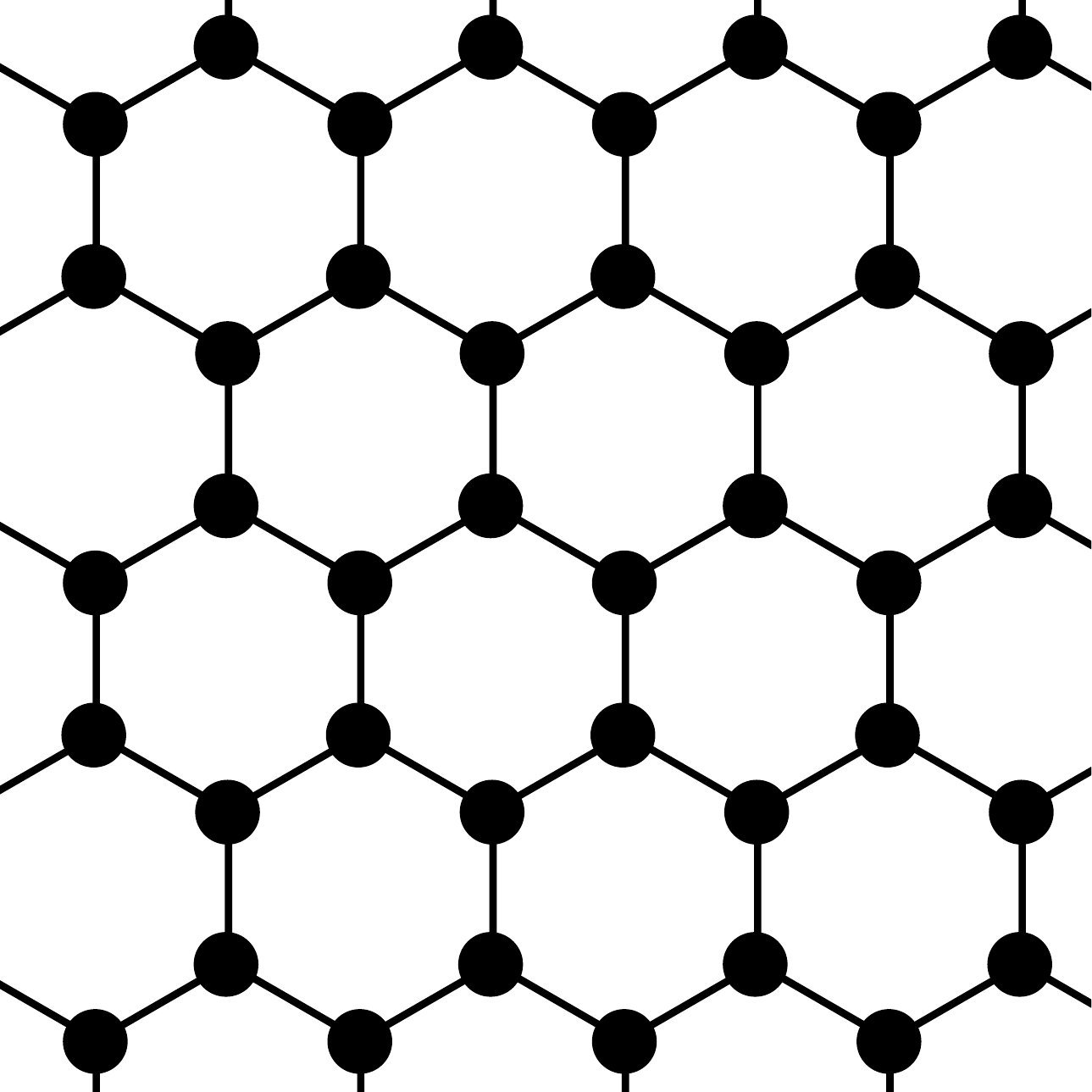}&
		\includegraphics[height=\figtwoheight]{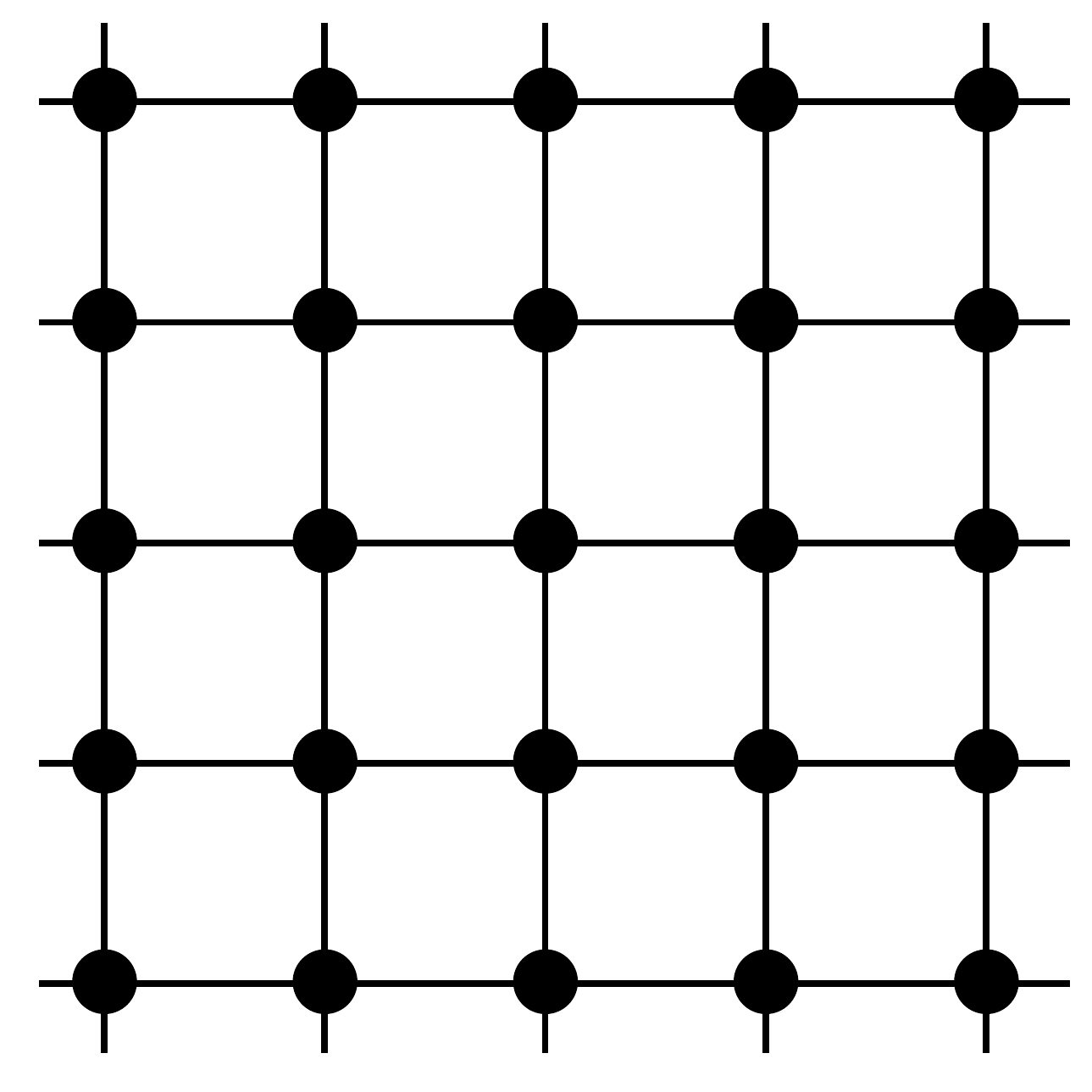}&
		\includegraphics[height=\figtwoheight]{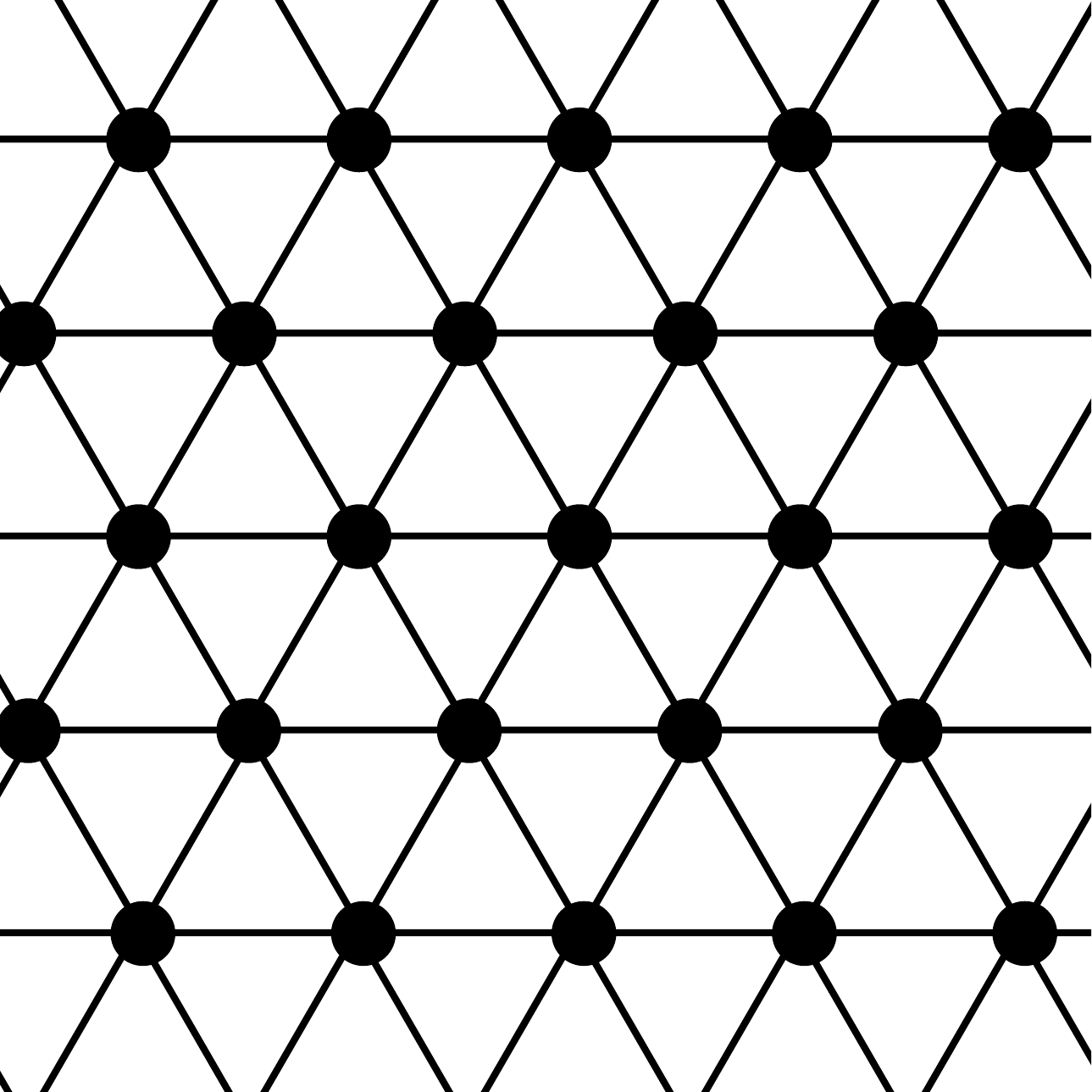}\\
		 (a) Hexagonal Grid & (b) Square Grid & (c) Triangular Grid 
	\end{tabular}
	\caption{\label{fig:grids}Examples of plane grids.}
\end{figure}

The discharging method is a well-known technique in discrete mathematics, especially due to its use in the computer-assisted proofs of the Four Color Theorem~\cite{appel1977every,appel1977every2,robertson1997four}.
Since that first incredible achievement, almost all other discharging proofs have been done manually.
Applications of discharging include coloring planar graphs~\cite{cranston2013guide}, density problems in grids~\cite{cranston2009new,cukierman2013new}, and structural problems on circulant graphs~\cite{hartke2012uniquely}.
Despite its wide use, producing an effective discharging argument is very challenging and the proofs become weighted down by a lengthy case analysis.
We present an experimental method for automatically producing discharging arguments and apply this method to prove lower bounds on the density of sets in infinite grids.

A \emph{plane grid} is an infinite graph embedded in the plane. 
We will consider three plane grids: the hexagonal grid, the square grid, and the triangular grid, as shown in Figure~\ref{fig:grids}.
These grids model the structure of a wireless sensor network where the nodes are placed in a rigid lattice, as would be typical for use in a field for drought monitoring~\cite{dong2013autonomous}.
Due the low cost of wireless sensor nodes, these networks can be so large that the boundary of the network is a small portion of the entire network, so using an infinite grid is an effective way to approximate the network.
Karpovsky, Chakrabarty, and Levitin~\cite{karpovsky1998new} considered the problem of detecting faults in such a network and defined an \emph{identifying code} to be a set $X$ of vertices in the grid where the intersection of $X$ with the closed neighborhood of each vertex is distinct. 
If $N(v)$ is the set of vertices adjacent to $v$, then the closed neighborhood $N[v]$ is the set $N(v) \cup \{v\}$.
Thus an identifying code $X$  in a grid $G$ satisfies
\[
	N[v] \cap X \neq \varnothing, \forall v \in V(G), \qquad\text{and}\qquad N[v]\cap X \neq N[u]\cap X, \forall u,v \in V(G), u \neq v.
\]
An identifying code exists in a graph $G$ if and only if there are no \emph{twins} (distinct vertices $u, v$ where $N[u] = N[v]$) since using the entire vertex set can identify all vertices.
The interesting problem is to determine the \emph{smallest} identifying code in $G$, to minimize the cost of placing fault-detection devices on the nodes representing elements of the code.

In an infinite grid, any identifying code will be infinite, so we need a notion of density instead of size.
For a vertex $v$, let $B_r(v)$ be the set of vertices within distance $r$ of $v$ in $G$.
The \emph{density} of a set $X \subseteq V(G)$, denoted $\delta(G)$, is defined as a limit of the proportion of elements of $X$ in the ball of radius $r$, as $r$ grows.
\[
	\delta(X) = \limsup_{r\to\infty}  \frac{|B_r(v) \cap X|}{|B_r(v)|} \text{ for some vertex $v \in V(G)$.}
\]
Cohen \etal~\cite{cohen2000bounds} demonstrated an identifying code in the hexagonal grid of density $\frac{3}{7}$, and Cukierman and Yu~\cite{cukierman2013new} found several other constructions of this density.
However, lower bounds on the optimal density have not reached this upper bound, despite several attempts~\cite{karpovsky1998new,cohen2000bounds,cranston2009new,cukierman2013new}.
Table~\ref{tab:identbounds} summarizes these efforts, including the new lower bound demonstrated in this paper.

\begin{table}[t]
\centering
\renewcommand{\arraystretch}{1.25}
\begin{tabular}[h]{|r|r@{\ }c@{\ }c@{\ }c@{\ }c@{\ }c@{\ }c@{\ }c@{\ }c|}
\hline
Karpovsky, Chakrabarty, and Levitin~\cite{karpovsky1998new} & 0.400000 & $=$ & $\frac{2}{5}$ & $\leq$ & $\delta$ &$\leq$ & $\frac{1}{2}$ & $=$ & $0.500000$ \\
\hline
Cohen, Honkala, Lobstein, and Z\'emor \cite{cohen2000bounds} & 0.410256 & $\approx$ & $\frac{16}{39}$ & $\leq$ & $\delta$ &$\leq$ & $\frac{3}{7}$ & $\approx$ & $0.428571$ \\
\hline
Cranston and Yu \cite{cranston2009new}& 0.413793 & $\approx$ & $\frac{12}{29}$ & $\leq$ & $\delta$ &&&& \\
\hline
Cukierman and Yu \cite{cukierman2013new} & 0.416666 & $\approx$ & $\frac{5}{12}$ & $\leq$ & $\delta$ &&&& \\
\hline
\textbf{New Lower Bound} & 0.418182 & $\approx$ & $\frac{23}{55}$ & $\leq$ & $\delta$ &&&& \\
\hline
\end{tabular}

\caption{\label{tab:identbounds}Existing bounds on $\delta$, the minimum density of an identifying code in the hexagonal grid.}
\end{table}

\begin{theorem}\label{thm:main}
If $X$ is an identifying code in the hexagonal grid, then $\delta(X) \geq \frac{23}{55} = 0.4\overline{18}$.
\end{theorem}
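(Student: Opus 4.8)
The plan is to prove the bound by a discharging argument, which the paper's framework assembles and optimizes automatically. I begin by assigning to each vertex $v$ of the hexagonal grid an initial charge $\mu(v) = 1$ if $v \in X$ and $\mu(v) = 0$ otherwise, so that the average initial charge over any large ball $B_r(v)$ equals $|B_r(v) \cap X|/|B_r(v)|$, whose limit supremum is $\delta(X)$. I then redistribute charge among nearby vertices according to a fixed collection of local rules. Because every rule only moves charge and never creates or destroys it, the total charge---and hence the average charge over large balls---is preserved. Consequently, if I can guarantee that after discharging every vertex carries charge at least $\frac{23}{55}$, then the average, and therefore $\delta(X)$, is at least $\frac{23}{55}$.

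The content of the argument lies in choosing rules that exploit the identifying-code constraints. Each vertex $v$ has degree $3$, so $|N[v]| = 4$, and the two defining conditions---$N[v] \cap X \neq \varnothing$ for all $v$ (domination) and $N[u] \cap X \neq N[v] \cap X$ for distinct $u,v$ (separation)---forbid $X$ from being too sparse near any vertex. In particular, a non-code vertex whose closed neighborhood meets $X$ in a single vertex is tightly constrained, since that lone intersection must differ from the intersection forced at each of its neighbors. The first step is therefore to enumerate, up to the symmetries of the grid, all local configurations of $X$ restricted to a ball of some fixed radius $t$ about a central vertex that are consistent with the domination and separation conditions visible within that ball. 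Each such configuration records exactly the data a rule may depend on, and the degree-$3$ structure keeps the number of configurations finite for each fixed $t$.

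With the configurations in hand, the discharging rules are encoded as nonnegative variables: for each admissible configuration and each nearby vertex to which charge may be sent, a variable records the amount transferred. The requirement that the final charge at the center of every configuration be at least a common value $c$ becomes a system of linear inequalities in these variables and in $c$; charge conservation is automatic because sent and received amounts are bookkept symmetrically across configurations. Maximizing $c$ subject to these constraints is a linear program, and---this is the crucial soundness point---enumerating a \emph{superset} of the genuinely realizable configurations only adds constraints, so the optimal $c$ is always a valid lower bound on $\delta(X)$, tight or not. Running this program at a sufficiently large radius $t$ yields the optimum $c = \frac{23}{55}$, certified by an explicit feasible assignment of the rule variables together with a dual solution.

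The hard part is managing the tension between tractability and strength. Prior human-written arguments, limited to short-range rules and hand analysis, reached only $\frac{5}{12}$; pushing to $\frac{23}{55}$ requires rules of larger reach, and the number of admissible radius-$t$ configurations grows quickly. The main obstacle is thus twofold: generating the configuration set efficiently while guaranteeing that it genuinely over-approximates every neighborhood occurring in a real identifying code (so that the lower bound stays sound), and keeping the resulting linear program small enough to solve exactly in rational arithmetic (so that the bound is a clean fraction rather than a floating-point estimate). Once the linear program is solved, the theorem follows immediately from charge conservation, with the rule assignment serving as a fully checkable certificate.
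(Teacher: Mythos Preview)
Your proposal captures the paper's overall framework accurately: assign $0/1$ charges, enumerate local configurations consistent with the forbidden patterns of an identifying code, encode the discharging rules as variables, and solve a linear program maximizing the uniform lower bound on final charge. That part is essentially the same as the paper.

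There is, however, one structural discrepancy that matters. You redistribute charge only \emph{among vertices}, whereas the paper makes the \emph{faces} of the hexagonal grid chargeable as well: every face carries initial charge~$0$, rules may exchange charge between a vertex and its incident faces, and the LP carries a second family of constraints $\nu'(f)\ge 0$ ensuring faces are mere messengers. The specific rule that attains $\tfrac{23}{55}$ in the paper is the rule~$N$, which sends charge between a vertex and the three hexagonal faces incident to it (using the full pattern of $X$ on those three hexagons). The paper's own experiments with purely vertex-to-vertex rules in the hexagonal grid ($C_1$, and $C_1$ together with $C_2^+$) top out at $\tfrac{33}{80}=0.4125$, strictly below $\tfrac{23}{55}$. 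So your assertion that ``running this program at a sufficiently large radius~$t$ yields $c=\tfrac{23}{55}$'' is not supported by the paper's data for vertex-only rules, and you have not supplied an independent certificate. A second, smaller point: you stipulate that rule variables are nonnegative; the paper does not impose this, allowing charge to flow in either direction via signed exchange values.

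In short, the high-level plan matches, but the paper's actual $\tfrac{23}{55}$ proof hinges on routing charge through faces, a feature absent from your description; without it (or an explicit alternative certificate), the claimed optimum is unjustified.
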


We prove Theorem~\ref{thm:main} using a new computer-automated method for constructing discharging arguments.
Due to the generality of the method and the modular nature of the implementation, the method also demonstrates lower bounds for other grids and for other variants on identifying codes.
Thus, several sharp lower bounds are proven for these variants in the hexagonal, square, and triangular grids, which were major theorems of previous work~\cite{ben2005exact,karpovsky1998new,honkala2006locating,slater2002fault,seo2010open,kincaid2014optimal}.

Our main contribution is the development of this new computational approach to producing discharging arguments.
In Section~\ref{sec:charge}, we discuss the structure of our discharging arguments.
A recent survey by Cranston and West~\cite{cranston2013guide} includes a more general perspective, but focuses more on coloring planar graphs.
Briefly, discharging arguments all feature three main steps: (1) assigning initial charge, (2) distributing charge according to certain discharging rules, and (3) verifying that all objects have the goal amount of charge.
The most novel contribution of this new framework is that steps (2) and (3) are done in opposite order.
By using a combinatorial generation algorithm, we enumerate every possible way that the discharging rules can interact on a single object and create a linear program based on those interactions.
Every feasible solution corresponds to a correct discharging argument, and an optimal solution provides the largest lower bound possible using that set of rules.
This pairing of combinatorial generation and linear programming is similar to the use of generating and solving a semidefinite program in Razborov's flag algebra method~\cite{razborov2007flag}, which has gained significant attention in recent years (see Razborov's survey~\cite{razborov2013flag}).

The purpose of this extended abstract is to demonstrate how a custom computer algorithm can produce a discharging argument using minimal human interaction.
We name our method \emph{ADAGE} for ``Automated Discharging Arguments using GEneration,'' and a specific proof using the framework is an \emph{adage}.
While an adage is a short saying that conveys a general truth, an adage proof has a very short description while the computer handles the significant case analysis.
In Section~\ref{sec:grids} we discuss the important properties of the plane grids in more detail, followed by a definition of a \emph{configuration} and \emph{forbidden configuration}.
In Section~\ref{sec:charge} we set up the discharging argument, demonstrate how it is linked to the density of a set $X$, and discuss the structure of discharging rules.
The most crucial step is discussed in Section~\ref{sec:lp}, where a linear program is built to satisfy the assertions of the discharging argument.
Section~\ref{sec:variations} defines several variations on an identifying code, and the results on these variations are listed in Table~\ref{tab:bounds}.
Appendices~\ref{sec:ruletables} and \ref{sec:bounds} list several options for possible discharging rules in the three grids and the lower bounds demonstrated by the adage proofs using those rules.

%
%
%

\section{Grids, Density, and Configurations}
\label{sec:grids}

For a grid $G$, we use $V(G)$ to denote the set of vertices in $G$ and $F(G)$ to denote the set of faces in $G$.
We shall use standard graph theory terminology (see West~\cite{west2001introduction}) to treat a grid $G$ as an infinite plane graph with vertex set $V(G)$, edge set $E(G)$, and face set $F(G)$.

The grids $G$ have an automorphism group as graphs, but we will restrict the automorphisms to be affine linear maps on the plane with determinant 1.
Specifically, we allow rigid motions that are translations and rotations, but do not allow reflection (as such maps would have determinant $-1$).
We make this restriction based on intuition since previous discharging arguments in grids have made use of distinguishing between clockwise and counter-clockwise arrangements, which would be lost if we allowed reflection.
Under these automorphisms, all three grids are vertex- and face-transitive.
If we did not allow rotation, then the hexagonal grid would not be vertex-transitive and the triangular grid would not be face-transitive.

For a vertex $v \in V(G)$ and an integer $r \geq 0$, the \emph{ball of radius $r$ around $v$} is  the set of vertices within distance $r$ of $v$ and is denoted by $B_r(v)$.
The \emph{faces within distance $r$ of $v$} is the set of faces incident to vertices in $B_{r-1}(v)$ and is denoted by $F_r(v)$. 
For a face $f \in F(G)$, define $B_r(f)$ to be the vertices $v$ where $f \in B_r(v)$ and $F_r(f)$ to be the faces incident to vertices in $B_{r-1}(f)$.

A grid is \emph{amenable} if the maximum degree of a vertex in $G$ is finite, the maximum length of a face in $G$ is finite, and $\lim_{r\to\infty} \frac{|B_{r+d}(v)\setminus B_{r}(v)|}{|B_r(v)|} = 0$ for all finite values $d \geq 0$.
Amenable grids have the property that the boundary of a ball is a vanishing proportion of the volume of the ball as the radius of the ball grows.
Using this basic property, it is not difficult to make the following observation.

\begin{observation}\label{obs:densitydiffs}
Let $G$ be an amenable plane grid.
Then, for $u, v \in V(G)$, $\lim_{r \to \infty} \frac{|B_r(u) \cap B_r(v)|}{|B_r(u) \cup B_r(v)|} = 1,$ and hence for any set $X \subseteq V(G)$,
\[
	\limsup_{r \to \infty} \frac{|B_r(u) \cap X|}{|B_r(u)|} =  \limsup_{r \to \infty} \frac{|B_r(v) \cap X|}{|B_r(v)|}.
\]
\end{observation}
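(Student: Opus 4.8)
The plan is to prove the first limit directly from amenability and then derive the density equality as an easy consequence. The key observation is that the symmetric difference $B_r(u) \triangle B_r(v)$ is contained in a thin annulus around each ball, whose relative size vanishes by amenability.

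First I would reduce to the case where $u$ and $v$ are at a fixed finite distance. Let $d = \operatorname{dist}(u,v)$ in $G$; since $G$ is amenable, $d$ is finite. The triangle inequality gives the containments
\[
	B_{r-d}(u) \subseteq B_r(v) \quad\text{and}\quad B_{r-d}(v) \subseteq B_r(u).
\]
Indeed, if $w$ is within distance $r-d$ of $u$, then $\operatorname{dist}(w,v) \leq \operatorname{dist}(w,u) + \operatorname{dist}(u,v) \leq (r-d) + d = r$. From the first containment, $B_{r-d}(u) \subseteq B_r(u) \cap B_r(v)$, so $B_r(u) \setminus B_r(v) \subseteq B_r(u) \setminus B_{r-d}(u)$, which is exactly the annulus $B_r(u) \setminus B_{r-d}(u)$. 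A symmetric argument bounds $B_r(v) \setminus B_r(u)$. Hence
\[
	|B_r(u) \triangle B_r(v)| \leq |B_r(u)\setminus B_{r-d}(u)| + |B_r(v)\setminus B_{r-d}(v)|.
\]

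Next I would invoke the amenability hypothesis to control these annuli. Writing $r' = r-d$, the defining limit $\lim_{r'\to\infty}\frac{|B_{r'+d}(v)\setminus B_{r'}(v)|}{|B_{r'}(v)|}=0$ says precisely that each annulus is a vanishing fraction of the ball it sits in. Since $|B_r(u)\cup B_r(v)| \geq |B_r(u)|$ and $|B_r(u) \cap B_r(v)| = |B_r(u)\cup B_r(v)| - |B_r(u)\triangle B_r(v)|$, dividing through gives
\[
	1 \geq \frac{|B_r(u)\cap B_r(v)|}{|B_r(u)\cup B_r(v)|} \geq 1 - \frac{|B_r(u)\triangle B_r(v)|}{|B_r(u)|} \xrightarrow{r\to\infty} 1,
\]
establishing the first claim by the squeeze theorem. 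For the density equality, I would intersect with $X$: the numerators $|B_r(u)\cap X|$ and $|B_r(v)\cap X|$ differ by at most $|B_r(u)\triangle B_r(v)|$, and dividing by $|B_r(u)|$ (comparable to $|B_r(v)|$ by the same annulus bound) shows the two quotients have the same $\limsup$.

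The main obstacle I anticipate is bookkeeping rather than conceptual: one must be careful that $B_r(u)$ and $B_r(v)$ grow at comparable rates so that dividing by either volume is interchangeable in the limit. This follows from the annulus estimate applied once more, giving $|B_r(v)|/|B_r(u)| \to 1$, but it should be stated explicitly to justify replacing the denominator $|B_r(v)|$ by $|B_r(u)|$ when comparing the two $\limsup$ expressions.
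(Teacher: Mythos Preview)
Your proposal is correct. The paper does not actually prove this observation; it merely remarks that ``it is not difficult to make the following observation'' and states it without argument. Your proof via the triangle-inequality containment $B_{r-d}(u)\subseteq B_r(v)$ and the amenability annulus estimate is the standard and natural one, and the care you take with the denominator comparison $|B_r(v)|/|B_r(u)|\to 1$ is exactly the bookkeeping needed to pass from the symmetric-difference bound to equality of the two $\limsup$'s.
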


By this observation, we can define one vertex in $V(G)$ as the ``zero vertex,'' denoted by $v_0$, and define the \emph{density} of a set $X \subseteq V(G)$ to be the limit $\delta(X) = \limsup_{r\to\infty}  \frac{|B_r(v_0) \cap X|}{|B_r(v_0)|}$.
We also define one face in $F(G)$ as the ``zero face'' denoted by $f_0$.


A \emph{configuration} is a tuple $(V, S_0, S_1, F)$ where $V$ is a finite set of vertices in $V(G)$,  $S_0$ and  $S_1$ are disjoint subsets of $V$, and $F$ is a finite set of faces in $F(G)$.
We call $S_1$ the \emph{elements} and $S_0$ the \emph{nonelements}.
While we require that $S_0 \cap S_1 = \varnothing$, we do not require that $S_0 \cup S_1 = V$.
Vertices in $V \setminus (S_0 \cup S_1)$ are considered \emph{undetermined vertices}.
Frequently, we will denote a configuration by $C$ and refer to the entries of the tuple $(V,S_0,S_1,F)$ by $V(C)$, $S_0(C)$, $S_1(C)$, and $F(C)$. 
The automorphism group of $G$ naturally acts on configurations to produce a notion of isomorphism between configurations.

Such a configuration $C$ represents a finite induced subgraph of the grid $G$ and its planar dual $G^*$, as well as some information about $X$ on that induced subgraph.
Specifically, a configuration $C$ is \emph{embedded} in $X$ if $S_1(C) \subseteq X$, and $S_0(C) \subseteq V(G) \setminus X$. 
Further, $C$ is \emph{embeddable} in $X$ if there exists a configuration $C'$ isomorphic to $C$ such that $C'$ is embedded in $X$.


Many families of subsets of $V(G)$, such as {identifying codes}, can be defined in terms of \emph{forbidden configurations}.
Given a collection $\cF = \{ C_1, \dots, C_k \}$ of configurations, the family $\forb(\cF)$ consists of sets $X \subset V(G)$ where for every $C_i \in \cF$, the configuration $C_i$ is not embeddable in the set $X$.
For a single configuration $C$, we use $\forb(C)$ to denote $\forb(\{C\})$.

For example, a \emph{dominating set} is a set $X \subseteq V(G)$ such that $N[v] \cap X \neq \varnothing$ for all vertices $v \in V(G)$.
If $v \in V(G)$ and $C$ is the configuration with $V(C) = S_0(C) = N[v]$, then $\forb(C)$ is the family of dominating sets in $G$.
Observe that for the family $\cF$ of configurations in Figure~\ref{fig:forb}, $\forb(\cF)$ is the family of identifying codes in the hexagonal grid.

\begin{figure}[htp]
\centering
\scalebox{1}{
\begin{tabular}[h]{m{0.5in}m{0.75in}m{0.75in}m{1in}}
\includegraphics[width=0.5in]{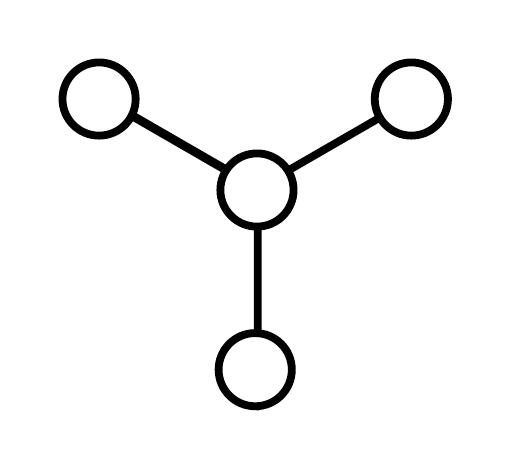} &
\includegraphics[width=0.75in]{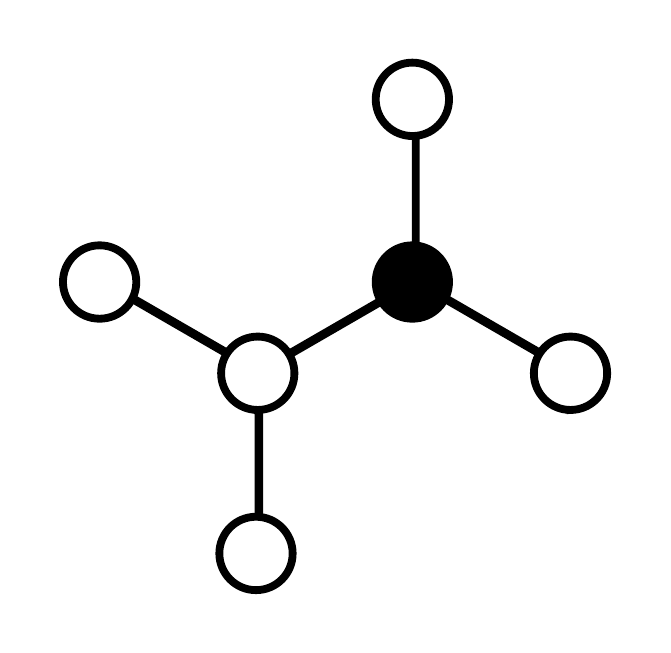} &
\includegraphics[width=0.75in]{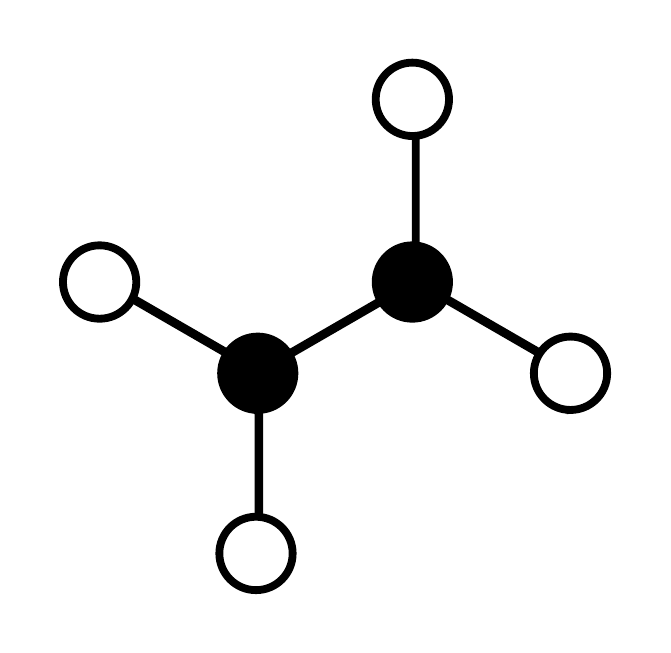} &
\includegraphics[width=1in]{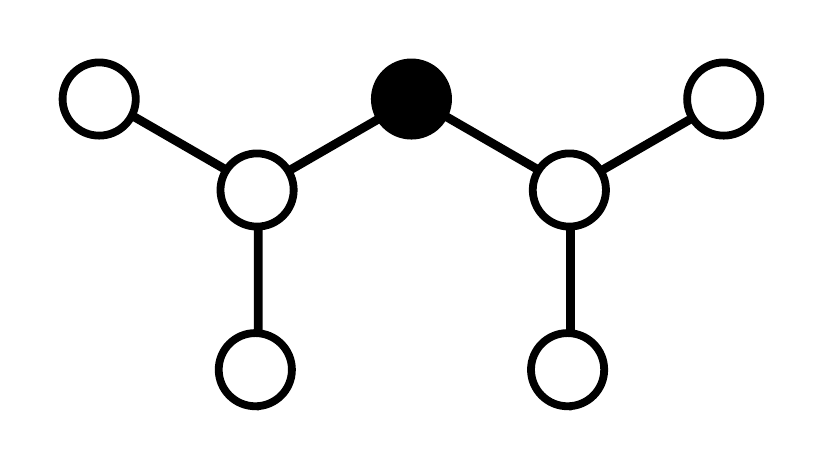} 
\end{tabular}}
\vspace{-1.5em}
\caption{\label{fig:forb}Forbidden configurations for identifying codes in the hexagonal grid.}
\end{figure}

For a family $\cF = \{C_1,\dots,C_k\}$ of forbidden configurations and a configuration $C$ where $S_0(C) = S_1(C) = \varnothing$, we say that a configuration $C'$ is an $\cF$-realization of $C$ if $V(C') = V(C)$, $S_0(C') \cup S_1(C') = V(C')$, and $C'$ does not contain any configuration $C_i \in \cF$.
It is not difficult to generate all $\cF$-realizations of a configuration $C$ up to isomorphism using standard techniques.
If $\cF = \varnothing$, then there are $2^{|V(C)|}$ realizations of $C$, and possibly fewer when $\cF \neq \varnothing$.

We will use this method of generating $\cF$-realizations of a configuration $C$ to examine all cases of how an embedding of $C$ in the grid $G$ can intersect a set $X$.
But first, we must discuss the structure of our discharging argument.

%
%
%

\section{Charge Assignment and Discharging Rules}
\label{sec:charge}

For our discharging argument, we will consider vertices and faces of $G$ to be \emph{chargeable objects} in that we will associate them with a numerical value, called a \emph{charge}.
We assign a charge function $\mu \colon V(G) \to \R$ on the vertices of $G$ and a charge function $\nu \colon F(G) \to \R$ on the faces of $G$.
These functions are based on the positions of the elements in a set $X \subseteq V(G)$:
for every vertex $v \in V(G)$, $\mu(v) = \begin{cases}1 & v\in C,\\ 0 & v\notin C\end{cases}$, and for every face $f \in F(G)$, let $\nu(f) = 0$.
Observe that $\sum_{v \in B_r(v_0)} \mu(v) + \sum_{f \in F_r(v_0)} \nu(f) = |X \cap B_r(v_0)|$ and hence 
\[
\delta(X) = \limsup_{r\to\infty} \frac{|X \cap B_r(v_0)|}{|B_r(v_0)|} = \limsup_{r\to\infty}\frac{\sum_{v \in B_r(v_0)} \mu(v) + \sum_{f \in F_r(v_0)} \nu(f)}{|B_r(v_0)|}.
\]

A \emph{discharging function} is a function $D_X \colon \left(V(G) \cup F(G)\right)\times \left(V(G) \cup F(G)\right) \to \R$ where $D_X(x,y) = -D_X(y,x)$ for all $x,y \in V(G) \cup F(G)$. 
Specifically, we can say that for two chargeable objects $x, y \in V(G) \cup F(G)$, the value $D_X(x,y)$ is the amount of charge to \emph{exchange from $x$ to $y$}.
Given a discharging function $D_X$, we define the resulting charge functions $\mu' \colon V(G) \to \R$ and $\nu' \colon F(G) \to \R$ as follows:
\begin{align*}
	\mu'(v) &= \mu(v) + \sum_{u \in V(G)} D_X(u,v) + \sum_{g \in F(G)} D_X(g,v).\\
	\nu'(v) &= \nu(f) + \sum_{u \in V(G)} D_X(u,f) + \sum_{g \in F(G)} D_X(g,f).
\end{align*}

For values $c, d > 0$, we say that $D_X$ is \emph{$(c,d)$-local} if  $|D_X(x,y)| \leq c$ always, and $D_X(x,y) = 0$ whenever the distance between $x$ and $y$ in $G$ exceeds $d$.
If a discharging function $D_X$ is $(c,d)$-local, then as a ball grows, the change in the total charge between the initial charge functions $\mu$, $\nu$ and the final charge functions $\mu'$, $\nu'$ are negligible compared to the size of the ball.

Our main assertion for a ``good'' discharging function is that the resulting charge functions satisfy $\mu'(v) \geq w$ and $\nu'(f) \geq 0$ for all vertices $v \in V(G)$ and faces $f \in F(G)$.
Roughly, this means that the initial charge on the vertices was ``spread out'' evenly so that every vertex has at least $w$ units of charge, and the faces did not contribute any positive charge to the vertices and instead were simply ``messengers'' of charge.
In the hexagonal grid, passing charge between vertices and faces can be particularly effective, since a face is incident to three antipodal pairs of vertices.

We make this assertion of a good discharging function concrete in the following theorem.

\begin{theorem}\label{thm:dischargingworks}
Let $G$ be an amenable grid.
Let $X \subseteq V(G)$, $c, d, w \geq 0$, and let $D_X$ be a $(c,d)$-local discharging function.
Define the charge functions $\mu, \mu', \nu, \nu'$ by the discharging process using $X$ and $D_X$.
If $\mu'(v) \geq w$ for all $v \in V(G)$ and $\nu'(f) \geq 0$ for all $f \in F(G)$, then $\delta(X) \geq w$.
\end{theorem}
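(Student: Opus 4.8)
The plan is to track the total charge on a large region and exploit the fact that discharging only moves charge around, conserving the total except for a small amount that leaks across the region's boundary. Fix a radius $r$ and let $R_r$ denote the collection of chargeable objects consisting of the vertices of $B_r(v_0)$ together with the faces of $F_r(v_0)$. Let $M_r = \sum_{v \in B_r(v_0)}\mu(v) + \sum_{f \in F_r(v_0)}\nu(f)$ be the total \emph{initial} charge on $R_r$ and let $M'_r = \sum_{v \in B_r(v_0)}\mu'(v) + \sum_{f \in F_r(v_0)}\nu'(f)$ be the total \emph{final} charge. By the computation preceding the theorem, $M_r = |X \cap B_r(v_0)|$, and by the hypotheses $\mu'(v)\geq w$ and $\nu'(f)\geq 0$ we immediately get $M'_r \geq w\,|B_r(v_0)|$. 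So it suffices to show that $M_r$ and $M'_r$ differ by a quantity that is negligible compared to $|B_r(v_0)|$.

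Next I would expand the difference using the definitions of $\mu'$ and $\nu'$: every object $x$ satisfies (final charge) $=$ (initial charge) $+ \sum_{y \in V(G)\cup F(G)} D_X(y,x)$, so
\[
M'_r - M_r \;=\; \sum_{x \in R_r}\ \sum_{y \in V(G)\cup F(G)} D_X(y,x).
\]
I then split the inner sum according to whether $y \in R_r$ or $y \notin R_r$. The contribution of pairs with both $x,y\in R_r$ vanishes: pairing the ordered pair $(x,y)$ with $(y,x)$ and using the antisymmetry $D_X(y,x) = -D_X(x,y)$ (which also forces $D_X(x,x)=0$) cancels every term. Hence $M'_r - M_r$ equals the net charge crossing the boundary, namely $\sum_{x\in R_r}\sum_{y\notin R_r} D_X(y,x)$.

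I would then bound this boundary term using $(c,d)$-locality together with amenability. By locality, a term $D_X(y,x)$ with $x \in R_r$ and $y\notin R_r$ can be nonzero only when the distance between $x$ and $y$ is at most $d$; in particular such an $x$ lies within distance $d$ of an object outside $R_r$, so it sits in a boundary shell contained in $B_r(v_0)\setminus B_{r-d'}(v_0)$ for a constant $d'$ depending only on $d$ and the (finite) maximum degree and face length. Because $G$ is amenable, the number of objects within distance $d$ of any fixed object is bounded by a constant $K = K(d)$, and each such term is at most $c$ in absolute value; thus $|M'_r - M_r| \leq cK \cdot |\{\text{boundary objects}\}|$, which is at most a constant multiple of $|B_r(v_0)\setminus B_{r-d'}(v_0)|$. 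Amenability gives $\frac{|B_r(v_0)\setminus B_{r-d'}(v_0)|}{|B_r(v_0)|}\to 0$, so $\frac{|M'_r - M_r|}{|B_r(v_0)|}\to 0$.

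Finally, combining the pieces, $\frac{|X\cap B_r(v_0)|}{|B_r(v_0)|} = \frac{M_r}{|B_r(v_0)|} = \frac{M'_r}{|B_r(v_0)|} - \frac{M'_r - M_r}{|B_r(v_0)|} \geq w - \frac{|M'_r-M_r|}{|B_r(v_0)|}$, and taking $\limsup$ as $r\to\infty$ yields $\delta(X)\geq w$. The main obstacle is the boundary estimate in the third step: one must convert the purely local, bounded nature of $D_X$ into a genuinely $o(|B_r(v_0)|)$ error, which is exactly where the three amenability conditions (bounded degree, bounded face length, and vanishing relative boundary) are each needed — bounded degree and face length to control how many objects a boundary object can trade charge with and to relate face-distances to vertex-distances, and the vanishing boundary to ensure the shell is negligible.
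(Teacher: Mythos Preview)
Your proposal is correct and follows essentially the same approach as the paper: compare the total initial and final charge on $B_r(v_0)\cup F_r(v_0)$, use the antisymmetry of $D_X$ to reduce the difference to a boundary-crossing term, and then bound that term using $(c,d)$-locality together with the amenability hypotheses. The only cosmetic difference is that the paper indexes the boundary term by the objects \emph{outside} the ball (landing in $B_{r+d}(v_0)\setminus B_r(v_0)$ and the analogous face shell), whereas you index it by the objects \emph{inside} near the boundary (in $B_r(v_0)\setminus B_{r-d'}(v_0)$); both shells are $o(|B_r(v_0)|)$ by amenability, so the estimates are equivalent.
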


\begin{proof}
By Observation~\ref{obs:densitydiffs}, we can select the zero vertex $v_0$ and zero face $f_0$ such that $|B_d(v_0)| = \max \{ |B_d(v)| : v \in V(G)\}$ and $|F_d(f_0)| = \max \{ |F_d(f)| : f \in F(G)\}$.

Recall that by the definitions of $\mu$ and $\nu$, 
\begin{equation}
	\delta(X) = \limsup_{r\to\infty} \frac{|X \cap B_r(v_0)|}{|B_r(v_0)|} = \limsup_{r\to\infty}\frac{\sum_{v \in B_r(v_0)} \mu(v) + \sum_{f \in F_r(v_0)} \nu(f)}{|B_r(v_0)|} \label{eqn:delta}
\end{equation}
By hypothesis,
\begin{equation}
\limsup_{r\to\infty} \frac{\sum_{v \in B_r(v_0)} \mu'(v) + \sum_{f \in F_r(v_0)} \nu'(f)}{|B_r(v_0)|} \geq \limsup_{r\to\infty} \frac{\sum_{v \in B_r(v_0)} w + \sum_{f \in F_r(v_0)} 0}{|B_r(v_0)|} = w. \label{eqn:discharged}
\end{equation}
Our goal is to prove that the limit at the end of (\ref{eqn:delta}) and the limit at the beginning of (\ref{eqn:discharged}) are equal, thereby showing that $\delta(X) \geq w$.

Using the definition of $\mu'$ and $\nu'$ and the fact that $D_X$ is $(c,d)$-local, we find that the absolute difference $
\left|\sum_{v\in B_r(v_0)} \left[\mu'(v)-\mu(v)\right] + \sum_{f \in F_r(v_0)} \left[\nu'(f)-\nu(f)\right]\right|$ is equal to the magnitude of the charge that $D_X$ exchanges across the boundaries of $B_r(v_0)$ and $F_r(f)$:
\begin{align*}
&\hspace{-0.1in}\left| \sum_{v\in B_r(v_0)}\left[\sum_{u \notin B_r(v_0)} D_X(u,v) + \sum_{g \notin F_r(v_0)} D_X(g,v)\right]  + \sum_{f\in F_r(v_0)}\left[\sum_{u \notin B_r(v_0)} D_X(u,v) + \sum_{g \notin F_r(v_0)} D_X(g,f)\right]\right|\\
&\leq \sum_{u \in B_{r+d}(v_0)\setminus B_r(v_0)} \left[\sum_{v \in B_d(u)} |D_X(u,v)| + \sum_{f \in F_d(u)} |D_X(u,f)|\right] \\
&\qquad\qquad+  \sum_{g \in F_{r+d}(v_0)\setminus F_r(v_0)} \left[\sum_{v \in B_d(g)} |D_X(g,v)| + \sum_{f \in F_d(g)} |D_X(g,f)|\right] \\
&\leq  |B_{r+d}(v_0)\setminus B_r(v_0)| \cdot \left[ |B_d(v_0)| + |F_d(v_0)| \right] \cdot c + |F_{r+d}(v_0)\setminus F_r(v_0)| \cdot \left[|B_d(f_0)|+|F_d(f_0)|\right] \cdot c \\
&\leq c' \cdot \left[ |B_{r+d}(v_0) \setminus B_r(v_0)| + |F_{r+d}(v_0)\setminus F_r(v_0)| \right],
\end{align*}
where $c' = c \left[|B_d(v_0)| + |F_d(f_0)|\right]$.
Since $G$ is amenable and has finite maximum degree $\Delta(G)$, 
\[
\limsup_{r \to \infty} \frac{ |B_{r+d}(v_0) \setminus B_r(v_0)| + |F_{r+d}(v_0)\setminus F_r(v_0)| }{|B_{r}(v_0)|} \leq 
\limsup_{r \to \infty} \frac{ (\Delta(G)+1)|B_{r+d}(v_0) \setminus B_r(v_0)| }{|B_{r}(v_0)|} = 0,
\] and therefore
\[\limsup_{r\to\infty} \frac{\left|\sum_{v\in B_r(v_0)} \left[\mu'(v)-\mu(v)\right] + \sum_{f \in F_r(v_0)} \left[\nu'(f)-\nu(f)\right]\right|}{|B_{r}(v_0)|} = 0,\]
proving the claim.
\end{proof}

Theorem~\ref{thm:dischargingworks} demonstrates that $(c,d)$-local discharging functions provide a way to bound the density of a set $X$.
However, as defined, the function $D_X$ depends on the entire (possibly infinite) set $X$.
This is not an effective strategy for us to prove anything about a discharging function.
In order to build \emph{effective} discharging functions, we will assemble one using \emph{discharging rules}.

Informally, a discharging rule is a way to examine the local situation around a chargeable object, and then decide to exchange a certain amount of charge among nearby chargeable objects.
Such a rule could, for instance, consider which elements in $B_2(v)$ are in $X$, and use that information to exchange charge between $v$ and the vertices adjacent to $v$, or between $v$ and the faces incident to $v$.
If the amount of charge exchanged depends only on the isomorphism class of the configuration $(B_2(v),B_2(v) \setminus X, B_2(v) \cap X, F_1(v))$, then this rule has a finite description, even though it is applied an infinite number of times.
When several discharging rules are applied simultaneously, the discharging function $D_X$ is defined by collecting all of the charge exchanges from all instances of the discharging rules.

Formally, a \emph{discharging rule} is a tuple $R = (C, z, y_1, \dots, y_t, \sigma)$ where $C$ is a configuration, $z$, $y_1,\dots, y_t$ are chargeable objects in $C$, and $\sigma$ is a function $\sigma : \{0,1\}^{V(C)} \times \{y_1,\dots,y_t\} \to \R$.
We will consider the first parameter of $\sigma$ to be the incidence vector corresponding to the set $X \cap V(C)$.
Thus $\sigma(X\cap V(C), y_i)$ determines how much charge to exchange from $y_i$ to $z$, given the realization of $C$
For an embedding $\pi$ of $C$ into $G$, the rule considers $\pi^{-1}(X) \cap V(C)$ and the function $\sigma$ defines that some amount of charge is exchanged from each $\pi(y_i)$ to $\pi(z)$.
Thus, the rule defines a discharging function $D_X^R$ as
\[
	D_X^R(a,b) = \sum_{\stackanchor{\pi : \pi(z) = b}{i : \pi(y_i) = a}} \sigma( \pi^{-1}(X) \cap V(C), y_i ) - \sum_{\stackanchor{\pi : \pi(z) = a}{i : \pi(y_i) = b}}\sigma( \pi^{-1}(X) \cap V(C), y_i ).
\]
The above definition states that the amount of charge sent from $a$ to $b$ is the combination of the charge sent from $a$ to $b$ via all embeddings of the rule where $\pi(z) = b$ and $\pi(y_i) = a$ for some $i$, minus the charge sent from $b$ to $a$ via all embeddings of the rule where $\pi(z) = a$ and $\pi(y_i) = b$ for some $i$.


If $R_1, \dots, R_m$ is a list of rules, then the discharging function $D_X$ resulting from using these rules simultaneously is defined as $D_X(a,b) = \sum_{i=1}^m D_X^{R_i}(a,b)$.


We can very quickly describe the configuration $C$ and chargeable objects $z, y_1,\dots, y_t$ of a discharging rule.
The function $\sigma$ is more complicated, and in fact we do not specify it at all.
For each possible element of the domain of $\sigma$, we create a variable.
In the next section, we will describe how to create a linear program to assign value to these variables, thereby completely defining the discharging rules.

We now describe a few discharging rules in the hexagonal grid.
The following list carefully defines each rule, but these rules can be simply described visually by Figure~\ref{fig:hexrulesex}.

\begin{cit}
\item $V_i$ : Let $z$ be a vertex, and consider the configuration on $B_i(z)$ and $F_1(z)$ and let $\{y_1,\dots,y_t\} = F_1(z)$.
	Thus, this rule uses the information from $X \cap B_i(z)$ and uses that to specify how charge is exchanged to $z$ from the faces incident to $z$.

\item $N$ : Let $z$ be a vertex and consider the configuration on $F_1(z)$ that contains all vertices incident to the faces in $F_1(z)$. In the hexagonal grid, $N$ is larger than $V_2$ but smaller than $V_3$.


\item $F_{1,3}$ : Let $z$ be a face and let $f_1,f_2,f_3$ be three consecutive faces adjacent to $z$.
	Let $y_1, y_2$ be the vertices incident to both $z$ and $f_2$.
	Consider the configuration of all vertices incident to $z, f_1, f_2, f_3$.

\item $C_1$ : Let $y_1$ and $z$ be adjacent vertices and consider the configuration of vertices incident to the two faces that are incident to $y_1$ and $z$.

\item $C_2$ : Let $z$ be a vertex, $f$ be one of the faces incident to $z$, and let $y_1$ and $y_2$ be the vertices incident to $f$ that are at distance 2 from $z$.
	Consider the configuration of vertices that are incident to $f$ or adjacent to a vertex incident to $f$.


\item $E_1$ : Let $z$ and $y_1$ adjacent faces, and consider the configuration of vertices incident to $z$ or $y_1$.


\item $E_6$ : Let $z$ be a face, $y_1,\dots, y_6$ be the faces adjacent to $z$, and consider the configuration of vertices incident to $z$ or adjacent to a vertex incident to $z$.
\end{cit}

\begin{figure}[tp]
\centering
\def\cellwidth{0.33in}
\def\rulewidth{0.9in}
\begin{tabular}[h]{|m{\rulewidth}|m{\rulewidth}|m{\rulewidth}||m{\rulewidth}|m{\rulewidth}|m{\rulewidth}|}
\hline
\multicolumn{1}{|c|}{$V_2$} & 
\multicolumn{1}{c|}{$N$} & 
\multicolumn{1}{c||}{$F_{1,3}$} &
\multicolumn{1}{c|}{$C_1$} & 
\multicolumn{1}{c|}{$C_2$} & 
\multicolumn{1}{c|}{$E_1$}
\\
\includegraphics[width=\rulewidth]{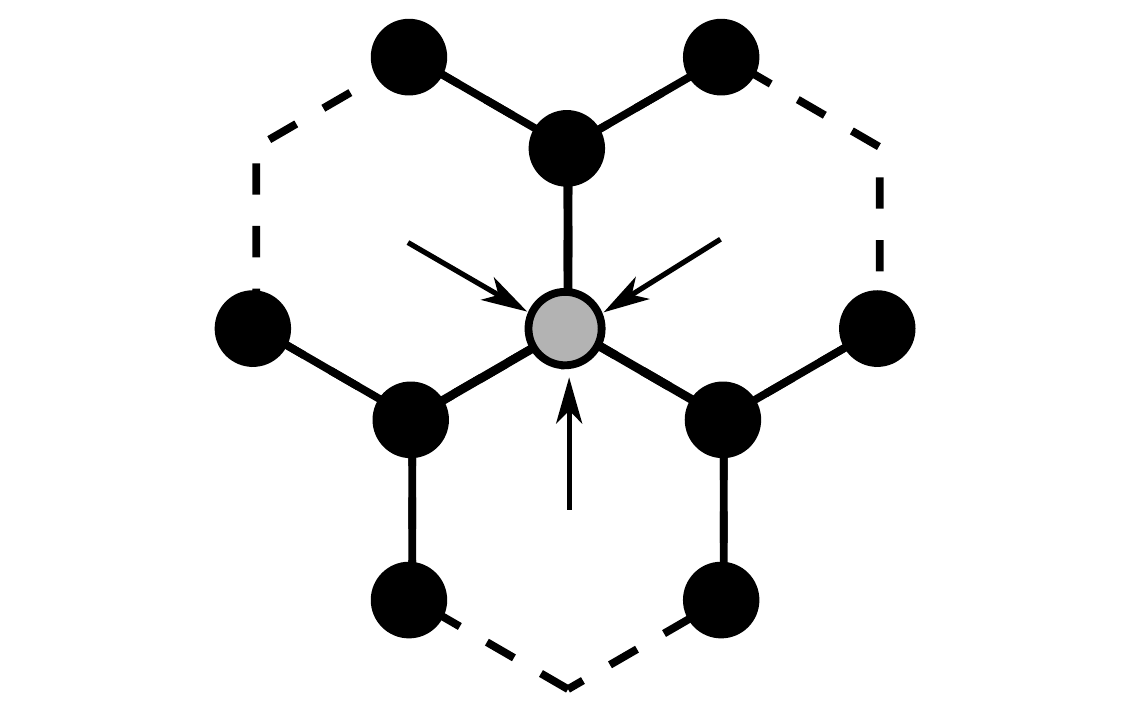} & 
\includegraphics[width=\rulewidth]{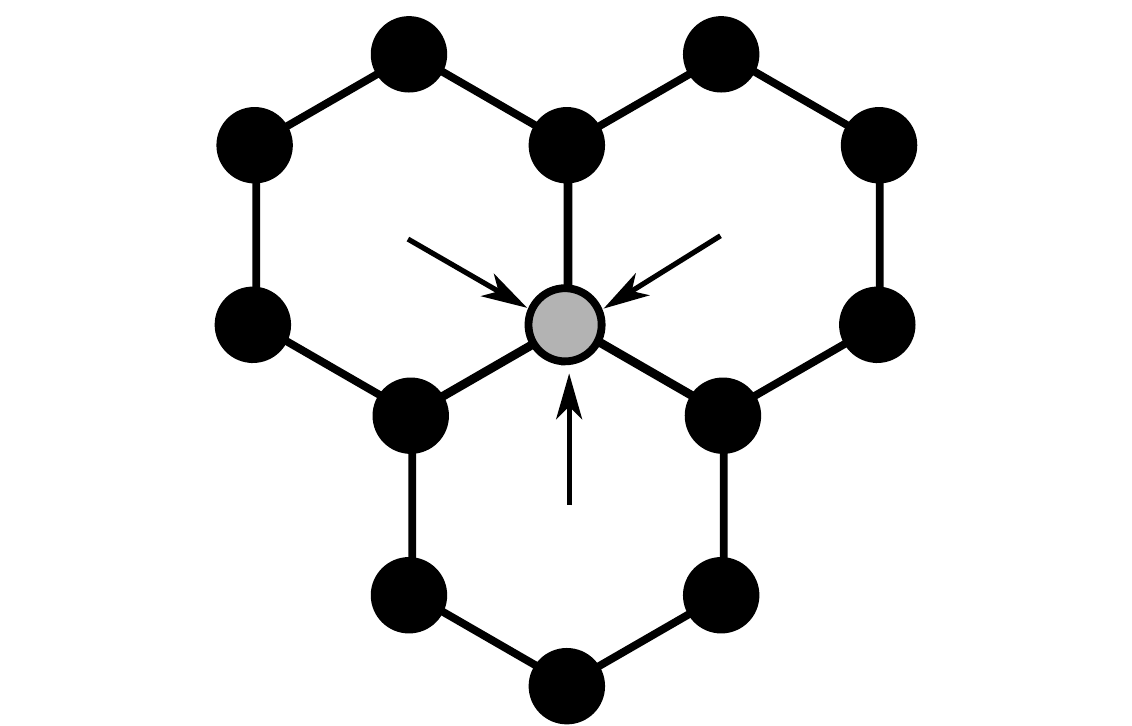} & 
\includegraphics[width=\rulewidth]{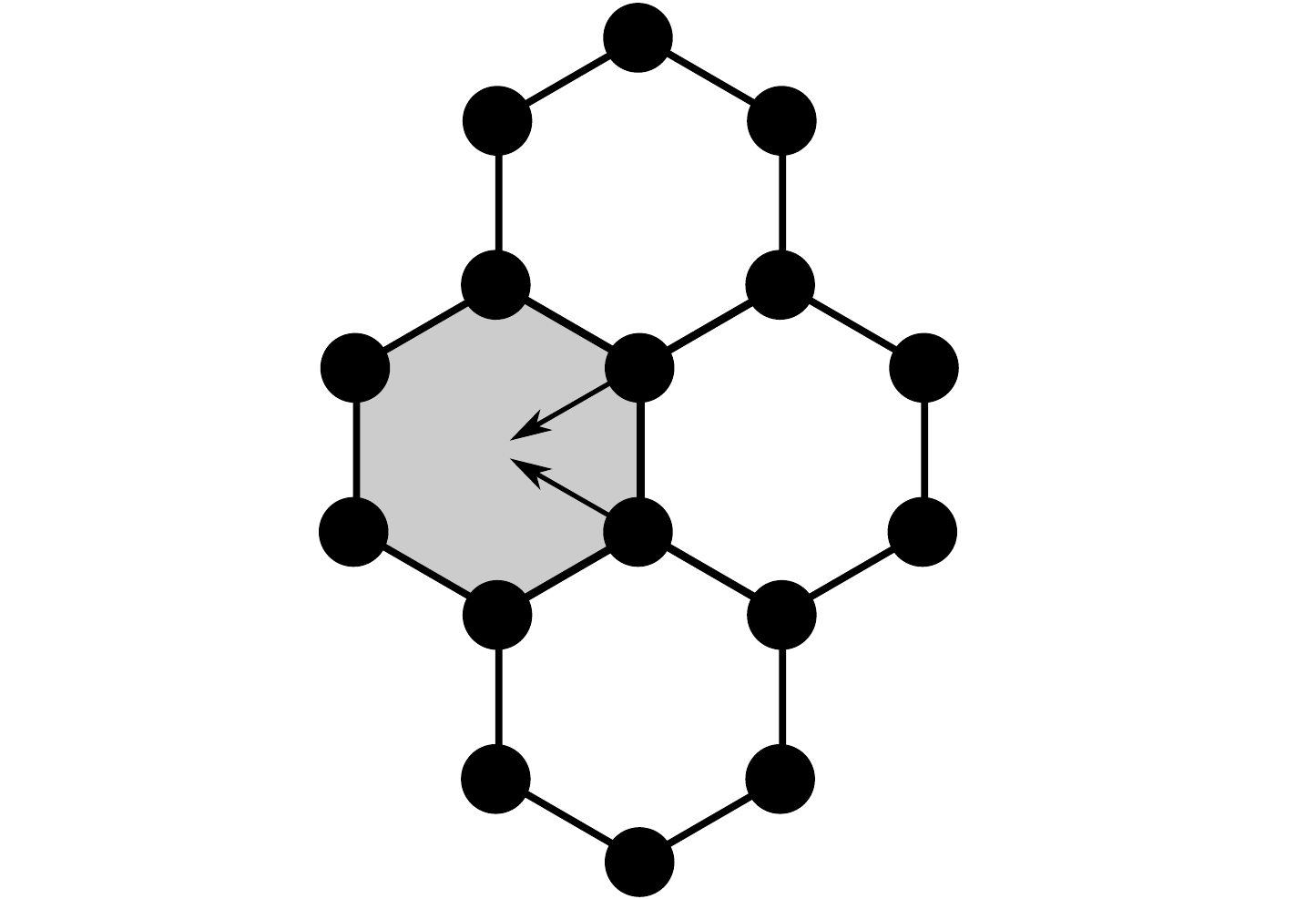}
 &
\includegraphics[width=\rulewidth]{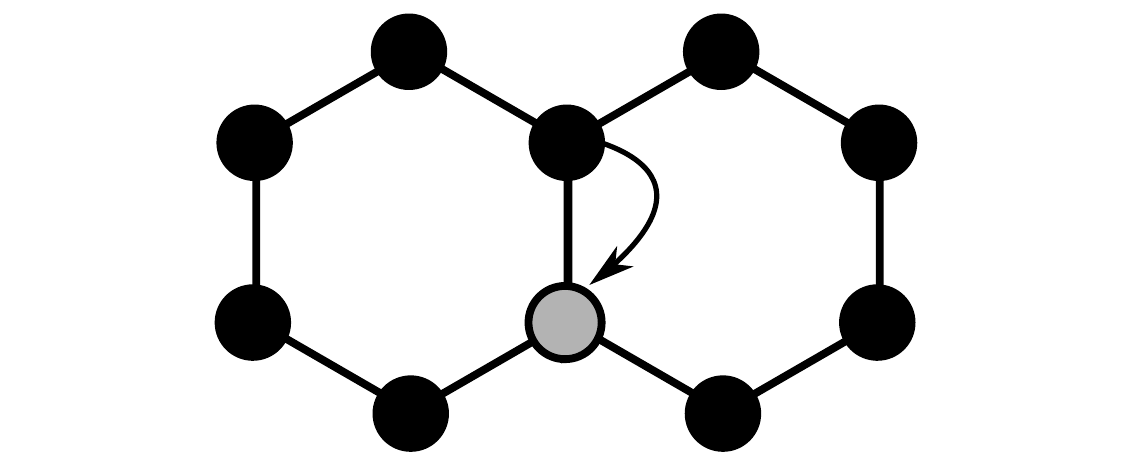} & 
\includegraphics[width=\rulewidth]{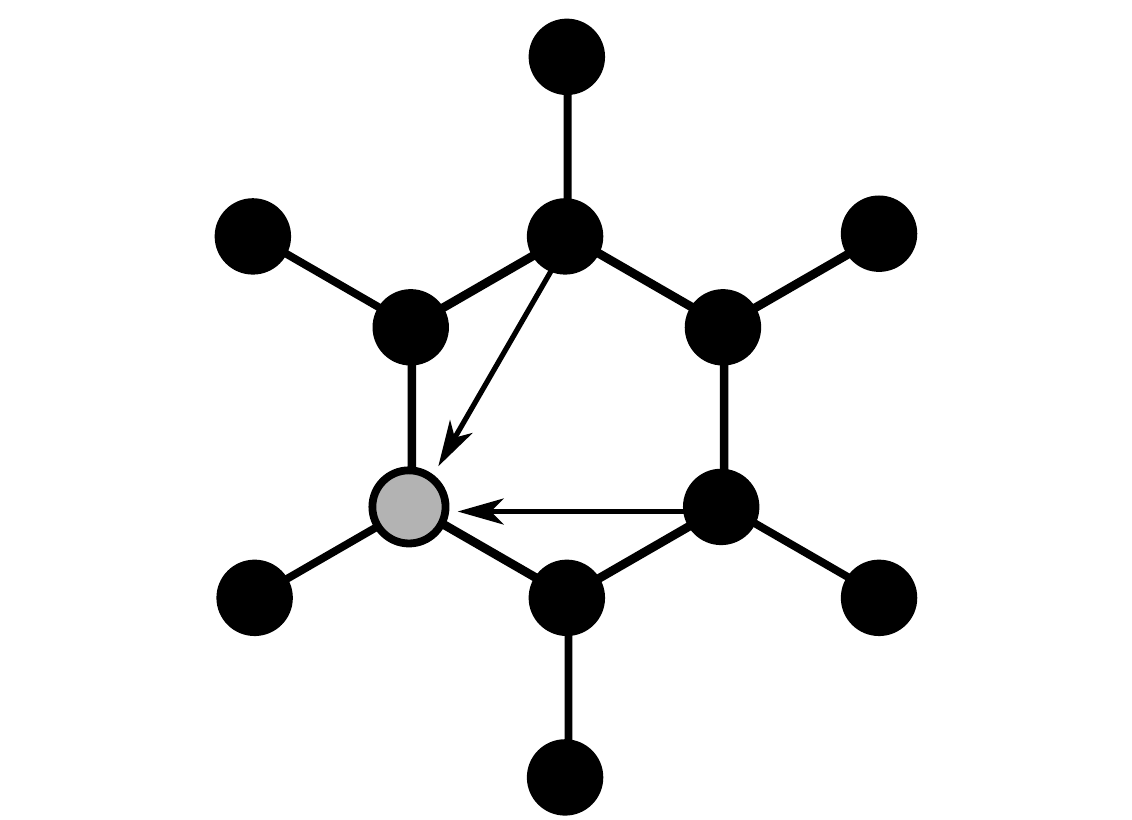} &
\includegraphics[width=\rulewidth]{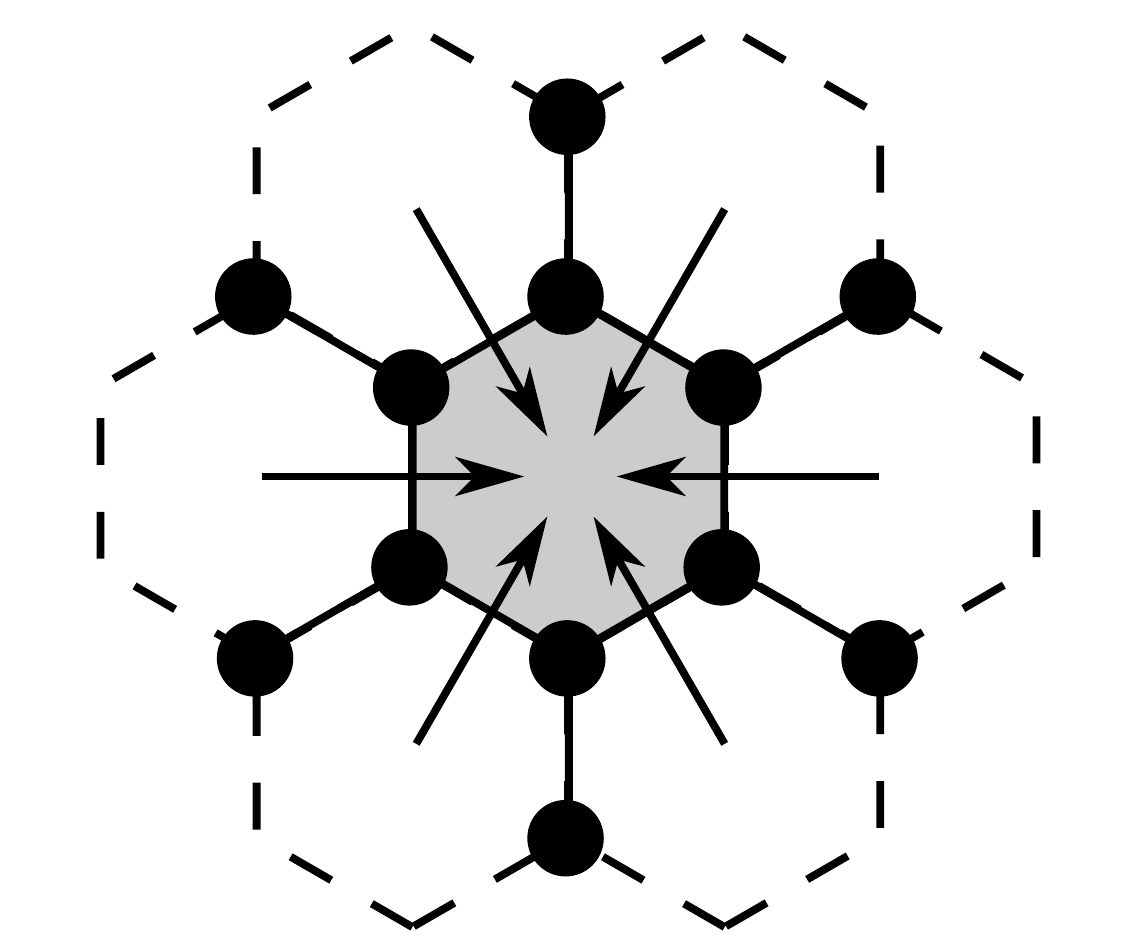} \\ 
\hline
\end{tabular}
\caption{\label{fig:hexrulesex}Examples of Rules in the Hexagonal Grid.}
\end{figure}

It is possible to define an infinite number of discharging rules.
Note that some rules are inherently \emph{more complex} than others, so a partial ordering can be defined on the rules.
For instance, $V_1 \subset V_2 \subset N \subset V_3$, so $N$ is at least as effective as $V_2$.
We call attention to a few features of the discharging rules that should be balanced carefully in order to create the most effective rules.

\emph{Scope of Information.} The larger the configuration used for the discharging rule, the more information is known about the local environment of the chargeable object receiving charge. However, as the configuration $C$ grows, the number of realizations of the rule grows  approximately as $2^{|V(C)|}$, with some loss for symmetry and for avoiding forbidden configurations.

\emph{Range of Exchange.} Depending on the distance between $z$ and the $y_i$'s, charge can be exchanged across several distances. It may be beneficial to allow for charge to move longer distances, especially if it is possible to have large regions in $G$ where $X$ is much less dense than in other areas. 

\emph{Dependence.} For nearby chargeable objects, the configurations for different discharging rules overlap. Thus, some information is shared between the chargeable objects and that information can be used to assign value to the rule.

This step of creating a discharging argument is the step that requires the most amount of creativity and human intervention.
Creating interesting and effective rules is really where the proof author has most control, and this step is absolutely crucial in determining whether a discharging proof will provide a strong lower bound.
The strength of the rules must be balanced with the computational challenge of verifying their correctness, which is the topic of the next section.
Armed with a set of discharging rules, we can now define the algorithmic process for assigning value to the discharging rules.

\section{The Linear Program}\label{sec:lp}

The most difficult part of assigning value to discharging rules is verifying that objects of low charge receive enough charge to meet the goal charge while guaranteeing that objects of high charge do lot lose so much charge they drop below the goal charge.
In the contrapositive, it must be impossible to construct a configuration around a chargeable object where every discharging rule is evaluated and the resulting charge violates the goal charge.
Thus, we shall create a configuration $C$ around each chargeable object (up to isomorphism) such that $C$ contains the shape of each rule that can exchange charge to or from that object, and then generate each $\cF$-realization of $C$.
Every such realization determines which realizations of the discharging rules to use, and these values are combined to form a constraint in a linear program.

Recall that our goal requirement for vertices and faces are the following inequalities:
\begin{align*}
\mu'(v) &= \mu(v) + \sum_{u \in V(G)} D_X(u,v) + \sum_{g \in F(G)} D_X(g,v) \geq w\\
\nu'(f) &= \nu(f) + \sum_{u \in V(G)} D_X(u,f) + \sum_{g \in F(G)} D_X(g,f) \geq 0\\
\end{align*}

\vspace{-2em}
Since we are using a finite list of finite-sized discharging rules, this inequality will in fact use a finite number of nonzero terms.
Also, the amount of exchanged charge depends on a finite-sized local region about each chargeable object.
Given a grid $G$ and a list $R_1,\dots, R_m$ of discharging rules, the \emph{constraint configuration} about a chargeable object $x$ is defined as the set of faces and vertices that appear in an embedding $\pi(C)$ of a configuration $C=C(R_j)$ such that $\pi(z(R_j)) = x$ or $\pi(y_i(R_j)) = x$ for some $i$.
Observe that the constraint configurations about two chargeable objects, $x$ and $x'$, are isomorphic if and only if $x$ and $x'$ are in orbit within $G$.

For example, consider the rules $N$ and $J_2$ in the hexagonal grid.
Since the hexagonal grid is vertex-transitive and face-transitive, we only need to consider the constraint configurations for $v_0$ and $f_0$.
About $v_0$, there are three embeddings of $N$ and three embeddings of $J_2$ such that the vertices $z(N)$ and $z(J_2)$ are mapped to $v_0$. 
Together, these embeddings form a constraint configuration about $v_0$ consisting of all faces in $F_2(v_0)$, and all vertices incident to a face in $F_2(v_0)$.
About $f_0$, there are 18 embeddings of $N$ and six embeddings of $J_2$ such that one of the faces $y_1(N), y_2(N), y_3(N)$ or the face $y_1(J_2)$ are mapped to $f_0$.
Together, these embeddings form a constraint configuration about $f_0$ consisting of all faces in $F_1(f_0)$ and all vertices incident to a face in $F_1(f_0)$.
These constraint configurations are shown in Figure~\ref{fig:constraintex}.

\begin{figure}[htp]
\centering
\def\rulewidth{1in}
\begin{tabular}[h]{|m{1in}m{1in}|m{1in}m{1in}|}
\hline
\multicolumn{2}{|c|}{Rules} & 
\multicolumn{2}{c|}{Constraint Configurations} \\
\hline
\multicolumn{1}{|c}{$N$} &
\multicolumn{1}{c|}{$J_2$} & 
\multicolumn{1}{c}{Vertex} & 
\multicolumn{1}{c|}{Face} \\ 
\includegraphics[width=\rulewidth]{hexgrid-rule-n.pdf} & 
\includegraphics[width=\rulewidth]{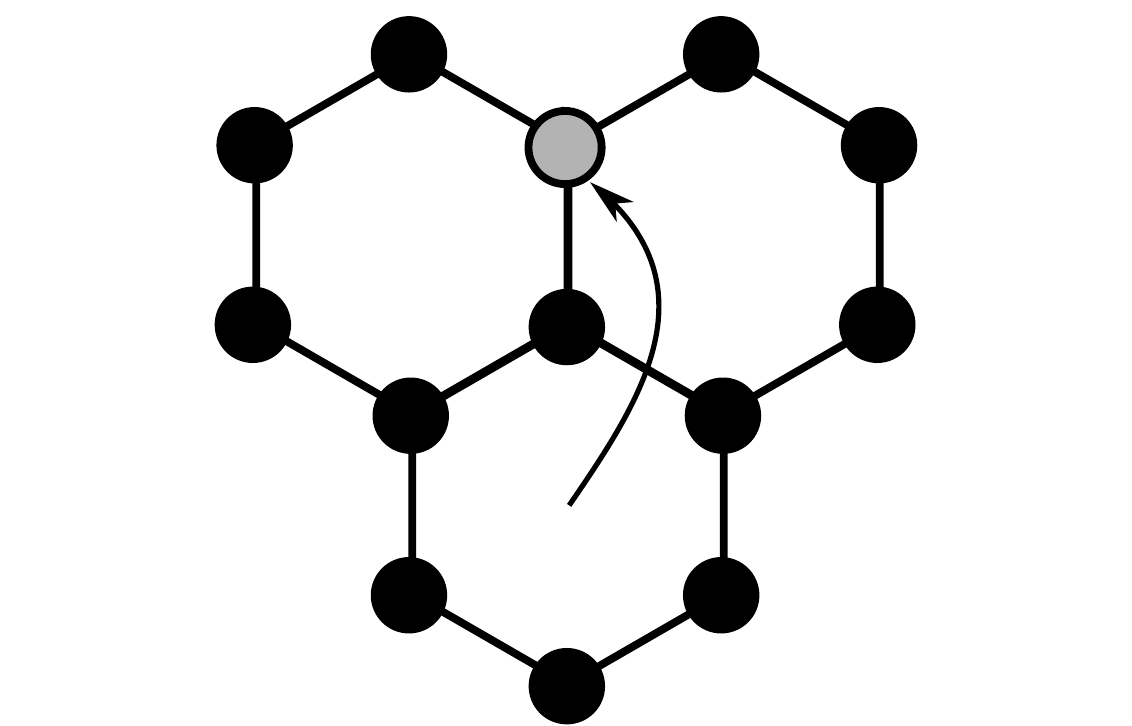} &
\includegraphics[width=\rulewidth]{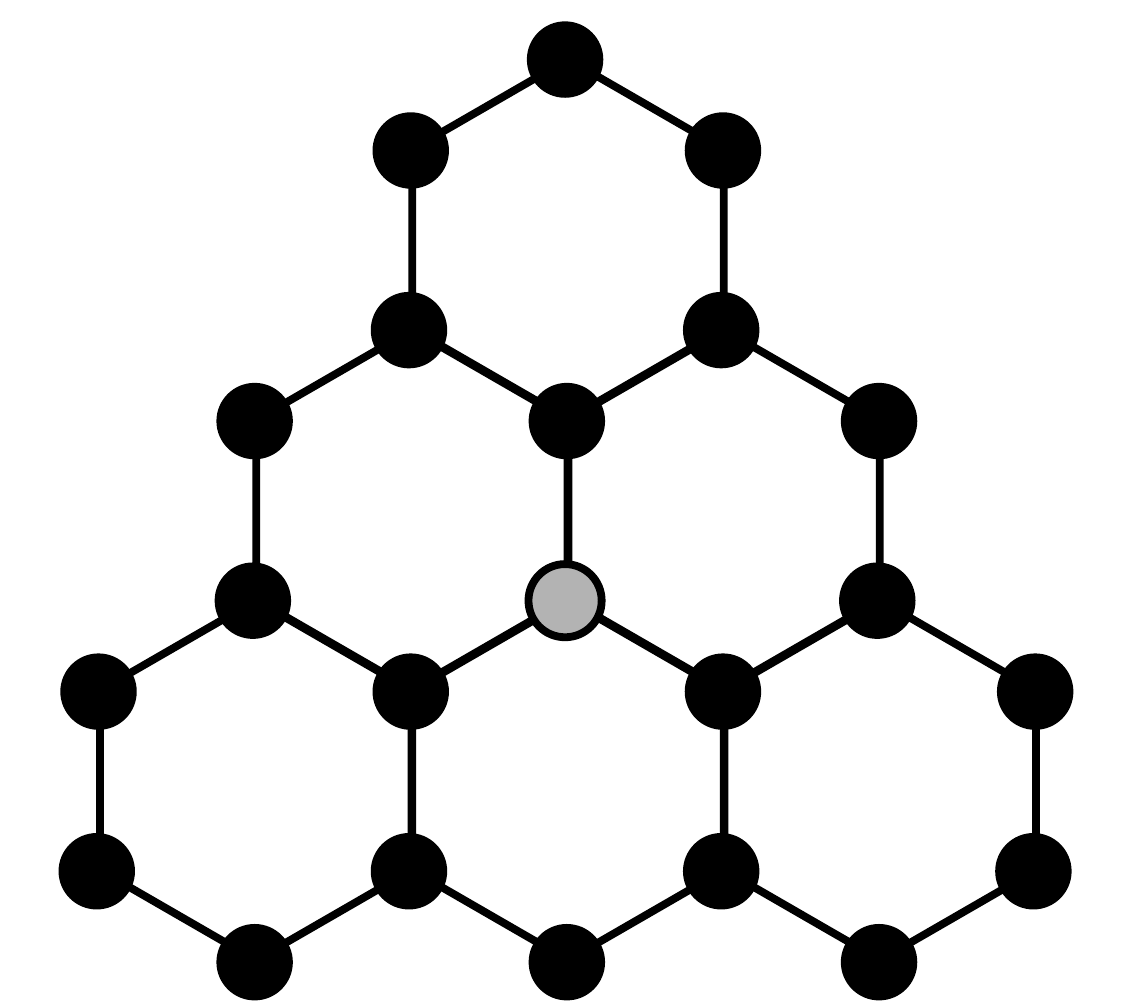} &
\includegraphics[width=\rulewidth]{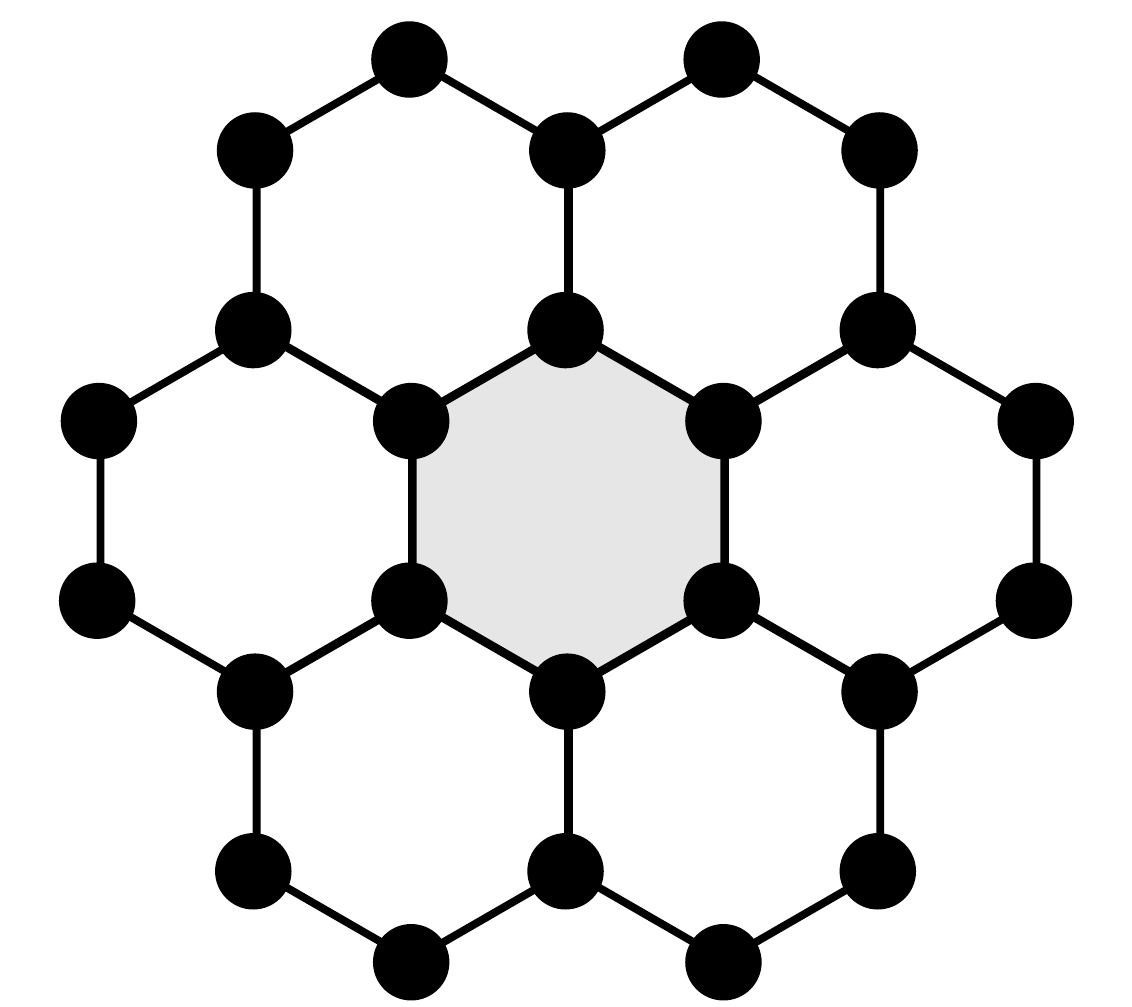} \\
\hline
\end{tabular}
\caption{\label{fig:constraintex}The rules $N$ and $J_2$ and the resulting constraint configurations.}
\end{figure}

Given a constraint configuration, the way the discharging rules assign value to the charge exchange is determined exactly by the way that $X$ intersects the vertices of this configuration.
Therefore, we generate all $\cF$-realizations of the constraint configuration.
Given an $\cF$-realization $C'$ of our constraint configuration $C$, we add a constraint to our linear program.

Suppose we are using the rules $R_1,\dots,R_k$.
If $C$ is a constraint configuration centered on a vertex $v$ and $C'$ is an $\cF$-realization of $C$, then we enforce that $\mu'(v) \geq w$ after the discharging process is complete by adding the constraint
\[
	\mu(v) + \sum_{j=1}^k \sum_{\pi : \pi(z(R_j)) = v} \sum_{i} \sigma_j(S_1(C'), y_i) -  \sum_{j=1}^k\sum_{i, \pi : \pi(y_i(R_j)) = v} \sigma_j(S_1(C'), y_i)  \geq w
\]
to the linear program.
If $C$ is a constraint configuration centered on a face $f$ and $C'$ is an $\cF$-realization of $C$, then we enforce that $\nu'(f) \geq w$ after the discharging process is complete by adding the constraint
\[
	\nu(f) + \sum_{j=1}^k \sum_{\pi : \pi(z(R_j)) = f} \sum_{i} \sigma_j(S_1(C'), y_i) - \sum_{j=1}^k \sum_{i, \pi : \pi(y_i(R_j)) = f} \sigma_j(S_1(C'), y_i)  \geq 0
\]
to the linear program.

Observe that whenever these constraints are satisfied, the discharging argument demonstrates a lower bound of $\delta(X) \geq w$ for any $X \in \forb(\cF)$.
In order to produce the largest lower bound, use $\max w$ as the optimization function of the linear program.

Thus, we have a complete description of an adage proof.
In summary, the three main steps are:
(1) Define a set of rules $R_1,\dots,R_k$, and generate all $\cF$-realizations $C'$ of their configurations, mapping the values $\sigma_j(S_1(C'),y_i)$ to a list of variables;
(2) For each constraint configuration (up to isomorphism), generate all $\cF$-realizations and add the resulting constraint to the linear program;
(3) Solve the linear program to determine the values $\sigma_j(S_1(C'),y_i)$ and the lower bound $w$.

These steps were implemented and executed for several sets of rules, which are shown along with their constraint configurations in Appendix~\ref{sec:ruletables}.
All software and data are available online\footnote{See {\scriptsize\url{http://www.math.iastate.edu/dstolee/r/adage.htm}} for all software and data.}.
The following theorem implies Theorem~\ref{thm:main}.

\begin{theorem}
Let $X$ be an identifying code in the hexagonal grid.
The adage proof using rule $N$ demonstrates $\delta(X) \geq \frac{23}{55} \approx 0.4181818$.
\end{theorem}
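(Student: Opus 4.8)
The plan is to reduce the statement entirely to the linear program of Section~\ref{sec:lp}, since Theorem~\ref{thm:dischargingworks} already guarantees that any feasible solution of value $w$ forces $\delta(X) \geq w$. Rule $N$ is $(c,d)$-local for suitable constants: here $z$ is a vertex, the objects $y_1,\dots,y_t$ range over the three faces in $F_1(z)$, and $\sigma$ depends only on $X \cap V(C)$, where $V(C)$ collects the vertices incident to a face of $F_1(z)$. Consequently, once we exhibit an assignment of the variables $\sigma(S_1(C'),y_i)$ that is feasible for every constraint and attains $w = \frac{23}{55}$, the theorem is immediate. The whole content therefore lies in (i) building the linear program faithfully from rule $N$ and the forbidden configurations $\cF$ of Figure~\ref{fig:forb}, and (ii) certifying that its optimum is exactly $\frac{23}{55}$.

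First I would fix the two constraint configurations---one centered at the zero vertex $v_0$, one at the zero face $f_0$---since vertex- and face-transitivity under the allowed automorphisms means these two orbits produce all of the constraints. Because every $y_i$ is a face, only the three embeddings of $N$ with $z \mapsto v_0$ contribute to the vertex constraint; dually, because $z$ is a vertex, only the eighteen embeddings with some $y_i \mapsto f_0$ contribute to the face constraint, exactly as in the worked example of Figure~\ref{fig:constraintex}. For each such embedding I would record which variable $\sigma(S_1(C'),y_i)$ it activates and its sign, that is, charge leaving the face $y_i$ and arriving at $z$.

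Next I would enumerate every $\cF$-realization $C'$ of each constraint configuration, setting each undetermined vertex to an element or a nonelement in all ways and discarding those that embed one of the four forbidden configurations. Each surviving $C'$ yields one linear inequality: $\mu'(v_0) \geq w$ about $v_0$ (with the initial term $\mu(v_0)$ equal to $1$ or $0$ according to whether $v_0 \in S_1(C')$) and $\nu'(f_0) \geq 0$ about $f_0$, each written as the signed sum of the activated variables. Maximizing $w$ over this finite rational system is a linear program whose optimum is $\frac{23}{55}$; the associated optimal primal point gives explicit charge amounts with $\mu'(v) \geq \frac{23}{55}$ and $\nu'(f) \geq 0$ at every vertex and face, and feeding these into Theorem~\ref{thm:dischargingworks} delivers $\delta(X) \geq \frac{23}{55}$.

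The main obstacle is not any individual inequality but the exhaustive, trustworthy generation of the realizations and the correspondence between geometric embeddings and program variables: the number of realizations grows roughly like $2^{|V(C)|}$, tempered by symmetry and by the forbidden-configuration filter, so the verification is genuinely computer-assisted. To render the result checkable by hand I would extract a rational optimal solution with small denominators, list the finitely many binding constraints, and---for the assertion that rule $N$ cannot beat $\frac{23}{55}$---supply a dual certificate, namely nonnegative multipliers on the constraints whose combination equals $w = \frac{23}{55}$. Both feasibility and optimality then reduce to confirming a short list of exact rational identities rather than rerunning the search.
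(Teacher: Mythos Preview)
Your proposal is correct and follows exactly the approach of the paper: the theorem is established by instantiating the general framework of Sections~\ref{sec:charge}--\ref{sec:lp} with the single rule $N$, generating all $\cF$-realizations of the vertex- and face-centered constraint configurations, and solving the resulting linear program, whose optimum $w=\tfrac{23}{55}$ then feeds into Theorem~\ref{thm:dischargingworks}. The paper itself offers no proof beyond pointing to the implementation and data; your additional suggestion of extracting explicit primal and dual rational certificates is a sound way to make the computer-assisted step independently checkable, and goes slightly beyond what the paper records.
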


The ADAGE framework as described is not tied to any specific grid or family of forbidden configurations $\cF$.
In the next section, we discuss variations on identifying codes and summarize the adage proofs of sharp lower bounds for those variations.

\section{Variations}
\label{sec:variations}

Due to the modular development of the ADAGE framework for grids, the components for the grid and the forbidden configurations can be interchanged. 
This allows for adage proofs to be constructed for the hexagonal, square, and triangular grids.
Several discharging rules and corresponding constraint configurations for these grids are demonstrated in Tables~\ref{tab:hexagonalrules}, \ref{tab:squarerules}, and \ref{tab:triangularrules}. 
More planar grids could be implemented and used, including those that are not vertex- or face-transitive, such as the hexagon-triangle grid.

There are several variations of an identifying code, each with its own application to fault-detection in computer networks.
A set $X \subset V(G)$ matches these variations if  the following constraints are satisfied:

\begin{cit}
\item Dominating Set: $N[v] \cap X \neq \varnothing$ for all $v \in V(G)$.

\item Identifying Code: $N[v] \cap X \neq \varnothing$ and $(N[v] \cap X) \neq (N[u] \cap X)$ for all distinct $u, v \in V(G)$.

\item Strong Identifying Code: $N[v] \cap X \neq \varnothing$ and $\{ N[v] \cap X, N(v) \cap X\} \cap \{ N[u] \cap X, N(u)\cap X\} = \varnothing$ for all distinct $u, v \in V(G)$ (see~\cite{honkala2010optimal,honkala2002strongly}).

\item Locating-Dominating Code: $N(v) \cap X \neq \varnothing$ for $v \notin X$, and $N(v) \cap X \neq N(u) \cap X$ for all distinct $u, v \in V(G) \setminus X$ (see~\cite{caceres2013locating,honkala2006optimal,honkala2006locating,slater1995locating}).

\item Open-Locating-Dominating (OLD) Code: $N(v) \cap X \neq \varnothing$ and $N(u) \cap X \neq N(v) \cap X$ for all distinct $u, v \in V(G)$ (see~\cite{kincaid2014optimal,seo2010open}).

\end{cit}

All of these variations are implemented in the current version of ADAGE on grids.
Several collections of discharging rules were used to find adage proofs of lower bounds on these variations, and the results can be found in Table~\ref{tab:bounds}.
We summarize the sharp bounds below, with attribution to the first authors to find such bounds.
See Appenix~\ref{sec:bounds} for lower bounds demonstrated by other rule sets.

\begin{theorem}[Ben-Haim and Litsyn~\cite{ben2005exact}]
Let $X$ be an identifying code in the square grid.
The adage proof using the rule $V_2$ demonstrates $\delta(X) \geq \frac{7}{20}$.
\end{theorem}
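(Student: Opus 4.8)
The plan is to instantiate the ADAGE framework of the preceding sections with $G$ the square grid, $\cF$ the forbidden configurations defining identifying codes, and the single discharging rule $V_2$, and then to exhibit a feasible solution of the resulting linear program whose objective value is $w = \frac{7}{20}$; the conclusion $\delta(X) \geq \frac{7}{20}$ then follows immediately from Theorem~\ref{thm:dischargingworks}. First I would pin down $\cF$. Exactly as in Figure~\ref{fig:forb} for the hexagonal case, identifying codes in the square grid are $\forb(\cF)$ for the family $\cF$ consisting of (i) the domination configuration with $V = S_0 = N[v]$ for a vertex $v$, and (ii) for every pair $u,v$ with $1 \leq d(u,v) \leq 2$, the ``confusion'' configuration with $S_0 = N[u]\, \triangle\, N[v]$, which forbids $X$ from failing to separate $u$ and $v$. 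Since the square grid is $4$-regular with $4$-gonal faces, $N[v]$ has five vertices and $N[u]\cap N[v] \neq \varnothing$ only when $d(u,v)\leq 2$ (so distant pairs are already separated by domination), and hence $\cF$ is a short explicit list.

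Next I would fix the data of the rule $V_2$: its configuration is supported on $B_2(z) \cup F_1(z)$ for a vertex $z$, where $|B_2(z)| = 13$ and $\{y_1,\dots,y_4\} = F_1(z)$ is the set of four faces incident to $z$; the unknowns of the proof are the values $\sigma(S_1, y_i)$, one for each $\cF$-realization $S_1$ of the thirteen-vertex configuration and each incident face $y_i$. Because every exchange moves a bounded amount of charge between a vertex and an incident face, the induced discharging function $D_X$ is $(c,d)$-local for finite $c,d$, as Theorem~\ref{thm:dischargingworks} requires.

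Then I would form the two constraint configurations, one per orbit of chargeable object, using vertex- and face-transitivity. About a vertex $v$ the only embeddings of $V_2$ exchanging charge with $v$ are those with $\pi(z) = v$, so the vertex constraint configuration is $B_2(v) \cup F_1(v)$. About a face $f$ the relevant embeddings are those with $\pi(y_i) = f$, i.e.\ those centered at one of the four vertices incident to $f$, so the face constraint configuration is the union of $B_2(z)\cup F_1(z)$ over the four vertices $z$ incident to $f$; a short count shows this union has exactly $24$ vertices. For each configuration I would generate all $\cF$-realizations by the standard generation of Section~\ref{sec:grids}, quotienting by the $C_4$ stabilizer and pruning realizations that embed a member of $\cF$, and emit the corresponding inequality exactly as displayed in Section~\ref{sec:lp}: $\mu(v) + (\text{net }\sigma\text{-charge into }v) \geq w$ for vertex realizations and $0 + (\text{net }\sigma\text{-charge into }f) \geq 0$ for face realizations. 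Maximizing $w$ over this system and reading off an optimal rational vertex of the polytope produces the certificate, which I expect to be sharp (optimum $=\frac{7}{20}$, matching the Ben-Haim--Litsyn construction), though for the stated lower bound only feasibility at $w=\frac{7}{20}$ is needed.

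The main obstacle is the combinatorial size of the face constraint configuration: with $24$ vertices, a naive enumeration confronts roughly $2^{24}$ realizations, so the proof is only tractable because of aggressive pruning by $\cF$ and reduction by the face stabilizer, and this generation step is where essentially all of the effort lives. Once the inequality system is in hand, solving the linear program and certifying the value is routine: I would finish by verifying that the returned $\sigma$-values satisfy every generated inequality with $w = \frac{7}{20}$, at which point Theorem~\ref{thm:dischargingworks} yields $\delta(X) \geq \frac{7}{20}$ for every identifying code $X$ in the square grid.
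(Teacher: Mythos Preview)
Your proposal is correct and follows exactly the approach of the paper: the paper's ``proof'' of this theorem is simply the execution of the ADAGE pipeline with $G$ the square grid, $\cF$ the forbidden configurations for identifying codes, and the single rule $V_2$, and you have accurately spelled out each step of that computation (the $13$-vertex rule configuration, the $24$-vertex face constraint configuration, the generation of $\cF$-realizations, and the LP). The paper offers no additional argument beyond pointing to the implementation and the resulting optimum, so there is nothing to compare.
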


\begin{theorem}[Karpovsky, Chakrabarty, and Levitin~\cite{karpovsky1998new}]
Let $X$ be an identifying code in the triangular grid.
The adage proof using the rule $V_1$ demonstrates $\delta(X) \geq \frac{1}{4}$.
\end{theorem}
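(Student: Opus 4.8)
The plan is to instantiate the ADAGE construction explicitly for rule $V_1$ on the triangular grid and to exhibit a feasible point of the resulting linear program with objective value $w=\frac14$. By Theorem~\ref{thm:dischargingworks}, any such feasible point already certifies $\delta(X)\ge\frac14$ for every identifying code $X\in\forb(\cF)$, so the entire task reduces to (a) writing down the correct linear program and (b) producing a valid assignment of the $\sigma$-variables satisfying it with $w=\frac14$. Because the triangular grid is vertex- and face-transitive under the allowed determinant-$1$ motions, there is a single vertex orbit and a single face orbit, so only two families of constraints arise: one enforcing $\mu'(z)\ge\frac14$ at a representative vertex $z$ and one enforcing $\nu'(f)\ge 0$ at a representative face $f$.

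First I would record the local geometry. A vertex $z$ has degree $6$, so $B_1(z)=N[z]$ consists of $7$ vertices and $F_1(z)$ of $6$ triangular faces; rule $V_1$ assigns, for each realization of $X\cap B_1(z)$ and each incident face $g\in F_1(z)$, a value $\sigma(X\cap B_1(z),g)$ equal to the charge moved from $g$ to $z$. Since a vertex plays only the role of the center $z$ of a $V_1$ instance, its final charge is $\mu'(z)=\mu(z)+\sum_{g\in F_1(z)}\sigma(X\cap B_1(z),g)$, while a triangular face $f$ is incident to exactly three vertices, so $\nu'(f)=-\sum_{z:\,f\in F_1(z)}\sigma(X\cap B_1(z),f)$. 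The vertex constraint configuration is therefore $B_1(z)\cup F_1(z)$, and the face constraint configuration is the union $\bigcup_{z}B_1(z)\cup\{f\}$ over the three vertices $z$ of $f$, carrying the three relevant $V_1$-neighborhoods.

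Next I would pin down $\cF$: the forbidden configurations characterizing identifying codes in the triangular grid (the domination requirement $N[v]\cap X\neq\varnothing$ together with the separation requirements $N[u]\cap X\neq N[v]\cap X$), transcribed as finitely many configurations on $B_1$ and on pairs of adjacent closed neighborhoods. I would then enumerate all $\cF$-realizations of the two constraint configurations, exploiting the stabilizer of $z$ (the cyclic group of rotations by multiples of $60^\circ$, with no reflections) to collapse isomorphic realizations and reduce the variable count. Each realization yields one linear inequality, and the finite system is the LP. The creative core is to solve this system by hand: guess a symmetric $\sigma$ (for instance, a code vertex uniformly donating charge to its incident faces while a non-code vertex with few code neighbors recovers charge through a shared face), and verify that every admissible vertex pattern leaves $\mu'(z)\ge\frac14$ and every admissible face pattern leaves $\nu'(f)\ge 0$.

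The main obstacle is exactly this feasibility check. Since $V_1$ is the smallest rule in the hierarchy ($V_1\subset V_2\subset N$), it sees only $N[z]$, so the charge a vertex donates is forced to be constant across all configurations sharing the same $X\cap B_1(z)$; one must confirm that the identification constraints forbid enough low-charge patterns to make $w=\frac14$ attainable while keeping every face nonnegative, and the vertex and face inequalities pull in opposite directions and must be balanced simultaneously. Sharpness is not required for the stated inequality, but it follows from the density-$\frac14$ identifying code of Karpovsky, Chakrabarty, and Levitin: that construction shows the true optimum is $\frac14$, so the LP value cannot exceed $\frac14$, and exhibiting a single tight realization drawn from their periodic code certifies that the feasible value $\frac14$ is also optimal.
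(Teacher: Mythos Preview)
Your proposal is correct and follows essentially the same approach as the paper. The paper does not give a separate written proof of this theorem: it is one of the batch of results obtained by executing the ADAGE pipeline (generate the $\cF$-realizations of the $V_1$ constraint configurations, build the LP, and solve), with the computation delegated to the software; your plan simply proposes to carry out that same LP construction and feasibility check by hand, invoking Theorem~\ref{thm:dischargingworks} exactly as the framework prescribes.
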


\begin{theorem}[Honkala~\cite{honkala2006optimal}]
Let $X$ be a locating-dominating code in the hexagonal grid.
The adage proof using the rule $V_2$ demonstrates $\delta(X) \geq \frac{1}{3}$.
\end{theorem}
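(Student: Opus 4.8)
The plan is to instantiate the ADAGE framework developed in Sections~\ref{sec:grids}--\ref{sec:lp} with three ingredients: the hexagonal grid, a forbidden-configuration family $\cF$ encoding the locating-dominating condition, and the single discharging rule $V_2$. Once these are fixed, the theorem reduces to exhibiting a feasible solution of the associated linear program at objective value $w = \frac{1}{3}$; Theorem~\ref{thm:dischargingworks} then immediately upgrades this to the density bound $\delta(X) \geq \frac{1}{3}$ for every $X \in \forb(\cF)$.

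First I would express a locating-dominating code as $\forb(\cF)$ for a finite, local family $\cF$. The condition splits in two. The domination requirement $N(v) \cap X \neq \varnothing$ for $v \notin X$ forbids a single configuration: a nonelement vertex all three of whose neighbors are also nonelements. The separation requirement $N(u) \cap X \neq N(v) \cap X$ for distinct $u, v \notin X$ is where care is needed, but it localizes cleanly in the hexagonal grid: since every vertex has degree $3$ and the grid is bipartite of girth $6$, two distinct vertices share at most one common neighbor, so two nonelements can have equal \emph{nonempty} code-neighborhoods only when they lie at distance $2$, share a single code neighbor, and have all remaining neighbors outside $X$ (the empty-neighborhood case being already excluded by domination). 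Thus $\cF$ consists of exactly these two configuration types, and the first point requiring justification is that this short list is genuinely complete, i.e. that $\forb(\cF)$ is precisely the family of locating-dominating codes.

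Next I would run the discharging setup of Section~\ref{sec:charge}: put $\mu(v) = 1$ for $v \in X$ and $0$ otherwise, $\nu(f) = 0$ on every face, and take goal charges $w = \frac{1}{3}$ on vertices and $0$ on faces. Rule $V_2$ then lets each vertex $z$ exchange charge with its three incident faces $F_1(z)$ as a function of the realization $X \cap B_2(z)$, with the exchange amounts $\sigma(S_1(C'), y_i)$ left as the variables of the linear program. I would form the two constraint configurations the framework prescribes: the vertex configuration about $v_0$, spanning $B_2(v_0)$ together with $F_1(v_0)$ from the three embeddings of $V_2$ with $z \mapsto v_0$; and the larger face configuration about $f_0$, obtained as the union of the neighborhoods $B_2(z)$ over the six vertices $z$ incident to $f_0$, since these index exactly the embeddings in which some face $y_i$ maps to $f_0$. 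For each constraint configuration I would generate all $\cF$-realizations and translate each into its charge inequality $\mu'(v_0) \geq \frac{1}{3}$ or $\nu'(f_0) \geq 0$, a linear constraint in the $\sigma$ variables.

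Finally I would maximize $w$ over this program; the claim is that the optimum equals $\frac{1}{3}$, so the proof is finished by the solver's feasible assignment of the $\sigma$ variables attaining $w = \frac{1}{3}$ and the appeal to Theorem~\ref{thm:dischargingworks}. I expect the main obstacle to be twofold. Conceptually, it is confirming completeness of the separation-type forbidden configurations, since a single omitted pattern would silently admit invalid realizations and corrupt the bound. Computationally and from a proof-design standpoint, it is that $V_2$ sees only $B_2(z)$, so the real question is whether this limited scope carries enough information for the program to be feasible at $\frac{1}{3}$ at all; that it is suffices for the stated inequality. That the value cannot exceed $\frac{1}{3}$ is not needed here, but it follows from Honkala's density-$\frac{1}{3}$ construction, which certifies that rule $V_2$ is already sharp for this variation.
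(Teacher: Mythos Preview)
Your proposal is correct and follows exactly the route the paper takes: the theorem is one of the computational instances listed in Section~\ref{sec:variations}, and its proof consists precisely of specifying the grid, the forbidden family $\cF$ for locating-dominating codes, and the rule $V_2$, then invoking the LP of Section~\ref{sec:lp} together with Theorem~\ref{thm:dischargingworks}. Your derivation of $\cF$ (one domination configuration and one distance-$2$ separation configuration, using that the hexagonal grid has girth $6$ so distinct vertices share at most one neighbor) and your identification of the vertex and face constraint configurations for $V_2$ match what the framework produces.
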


\begin{theorem}[Slater~\cite{slater2002fault}]
Let $X$ be a locating-dominating code in the square grid.
The adage proof using the rule $C_1$ demonstrates $\delta(X) \geq \frac{3}{10}$.
\end{theorem}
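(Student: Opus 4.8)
The plan is to realize locating-dominating codes in the square grid as $\forb(\cF)$ for an explicit finite family $\cF$, instantiate the framework with the single rule $C_1$, and then invoke Theorem~\ref{thm:dischargingworks} with goal charge $w = \tfrac{3}{10}$. By that theorem it suffices to exhibit values for the charge-exchange function $\sigma$ of $C_1$ so that the resulting $(c,d)$-local discharging function $D_X$ drives every vertex to charge at least $\tfrac{3}{10}$ and leaves every face with nonnegative charge, for every $X \in \forb(\cF)$. Because $C_1$ sends charge only between the two adjacent \emph{vertices} $y_1$ and $z$, no charge is ever exchanged to or from a face, so the face condition $\nu'(f)\ge 0$ reduces to $0\ge 0$ and holds automatically; the entire content lies in the vertex condition.

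First I would write down $\cF$. Domination amounts to forbidding the configuration in which a vertex $v\notin X$ has $N(v)\cap X=\varnothing$, i.e. a non-element all of whose neighbors are non-elements. For the separation condition I observe that two non-elements $u,v$ can satisfy $N(u)\cap X = N(v)\cap X \neq \varnothing$ only if they share a code neighbor, forcing $u,v\in N(x)$ for some $x$ and hence $\mathrm{dist}(u,v)\le 2$. Thus only finitely many relative positions of $u$ and $v$ arise, and for each position the requirement $N(u)\cap X = N(v)\cap X$ pins down a forbidden pattern on a bounded vertex set. Collecting the domination pattern together with these finitely many separation patterns yields a finite $\cF$ with $\forb(\cF)$ equal to the locating-dominating codes, which is exactly the input handed to the algorithm.

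Next I would assemble the constraint configuration and the linear program as in Section~\ref{sec:lp}. Since the square grid is vertex- and face-transitive there is one vertex orbit and one face orbit, and since $C_1$ touches only vertices, only the vertex constraint configuration $C$ about $v_0$ is nontrivial: it is the union of the six-vertex shapes of all embeddings of $C_1$ for which $v_0$ plays the role of $z$ or of $y_1$, namely the copies of $C_1$ sitting on each of the four edges incident to $v_0$. For each $\cF$-realization $C'$ of $C$ I would emit the vertex constraint of Section~\ref{sec:lp}, a linear inequality in the variables $\sigma(S_1(C'),y_1)$ and $w$. Enumerating all $\cF$-realizations and maximizing $w$ then produces the linear program whose optimum the theorem claims to be $\tfrac{3}{10}$.

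The main obstacle, and the only genuinely computational step, is certifying that the resulting system is feasible at $w=\tfrac{3}{10}$, that is, producing a single assignment of the finitely many $\sigma$-values that simultaneously satisfies every realization's inequality. I would do this by exhibiting the explicit optimal vertex of the linear program, accompanied by a dual certificate (a convex combination of realization constraints forcing $w\le \tfrac{3}{10}$) so that optimality, not merely feasibility, is verified; checking the primal point against all $\cF$-realizations of $C$ is then a finite, mechanical verification. Given that certificate, Theorem~\ref{thm:dischargingworks} applies verbatim with this $D_X$, $w=\tfrac{3}{10}$, and suitable local constants $c,d$, yielding $\delta(X)\ge \tfrac{3}{10}$ for every locating-dominating code $X$ in the square grid and matching Slater's construction.
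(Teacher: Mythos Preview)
Your proposal is correct and follows the paper's approach exactly: express locating-dominating codes as $\forb(\cF)$ for a finite local family, instantiate the single vertex-to-vertex rule $C_1$ (so the face constraint is vacuous), generate the vertex constraint configuration and its $\cF$-realizations, and solve the resulting linear program to obtain $w=\tfrac{3}{10}$, after which Theorem~\ref{thm:dischargingworks} yields the bound. The paper itself offers no further argument beyond reporting the LP output, so your added remarks about primal feasibility and a dual certificate are a reasonable elaboration of the same computation rather than a different method.
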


\begin{theorem}[Seo and Slater~\cite{seo2010open}]
Let $X$ be an open-locating dominating code in the hexagonal grid.
The adage proof using the rule $V_2$ demonstrates $\delta(X) \geq \frac{1}{2}$.
\end{theorem}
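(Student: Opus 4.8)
The plan is to invoke Theorem~\ref{thm:dischargingworks} with $w = \tfrac{1}{2}$: it suffices to assemble, from rule $V_2$, a $(c,d)$-local discharging function whose resulting charges satisfy $\mu'(v) \geq \tfrac12$ at every vertex and $\nu'(f) \geq 0$ at every face. Since the initial charge of a ball equals $|X \cap B_r(v_0)|$ and rule $V_2$ moves charge only between a vertex and its three incident faces (and so is $(c,d)$-local for finite $c,d$), such a function yields $\delta(X) \geq \tfrac12$ at once. The first concrete step is to record the forbidden configurations $\cF$ that define open-locating-dominating codes in the hexagonal grid. Open domination forbids a vertex all three of whose neighbours lie in $S_0$. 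For open separation, the cubic, bipartite, girth-$6$ structure of the grid makes adjacent pairs and pairs at distance at least $3$ automatically separated once domination holds, so the only additional forbidden pattern is the distance-$2$ one: a shared neighbour $w \in S_1$ of two vertices $u,v$, together with the four remaining neighbours of $u$ and $v$ all placed in $S_0$ (which forces $N(u)\cap X = N(v)\cap X = \{w\}$). Every such pattern lies inside $B_2(w)$, so rule $V_2$ has enough scope to react to all of them.

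Next I would build the two constraint configurations afforded by vertex- and face-transitivity. About $v_0$, rule $V_2$ contributes only through embeddings with $z \mapsto v_0$, since its charge partners $y_i$ are faces and no embedding maps a face to a vertex; these embeddings cover $B_2(v_0)$ together with $F_1(v_0)$. About $f_0$, the relevant embeddings are those in which $f_0$ plays the role of one of the three incident faces $y_i$ of some central vertex $z$, and as $z$ ranges over the six vertices of $f_0$ the union of the associated $B_2(z)$ and $F_1(z)$ forms the face constraint configuration. I would then generate every $\cF$-realization of each constraint configuration and, for each, emit the corresponding inequality ($\mu'(v_0) \geq \tfrac12$ or $\nu'(f_0) \geq 0$) in the variables $\sigma(S_1(C'),y_i)$, exactly as in Section~\ref{sec:lp}. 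Maximizing $w$ over this finite system is the adage proof, and the goal is to certify that its optimum equals $\tfrac12$.

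To establish the bound it is enough to exhibit one feasible point at $w=\tfrac12$, and I would search for a highly symmetric one dictated by the extremal behaviour: in a density-$\tfrac12$ code every vertex should finish with charge exactly $\tfrac12$ and every face with $0$, so each element sheds its surplus $\tfrac12$ onto incident faces and each nonelement draws $\tfrac12$ back, with every face breaking even. The $B_2(z)$ information in $\sigma$ routes these exchanges so that the forbidden configurations guarantee each nonelement has enough nearby elements to be supplied; I would read a candidate $\sigma$ off this picture and verify the finitely many constraints. The main obstacle is twofold. First, one must be sure the list $\cF$ above is \emph{complete}, because a missed separation pattern would admit spurious realizations and could depress $w$ below $\tfrac12$; the distance-based case analysis sketched above is meant to close exactly this gap. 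Second, the face constraint configuration spans two layers around six vertices, so the number of $\cF$-realizations, and hence of constraints, is large enough that checking feasibility is realistically a machine computation rather than a hand verification. Confirming that $\tfrac12$ is the optimum rather than merely feasible then follows either from an LP dual certificate or, more transparently, from the known density-$\tfrac12$ open-locating-dominating code of Seo and Slater, which no discharging function built from $V_2$ can beat.
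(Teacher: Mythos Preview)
Your proposal is correct and follows exactly the paper's approach: instantiate the ADAGE framework with the forbidden configurations for open-locating-dominating codes, build the vertex- and face-centred constraint configurations for rule $V_2$, enumerate the $\cF$-realizations, and solve the resulting linear program. The paper itself gives no further argument beyond this---the theorem is simply reported as the output of the implemented algorithm---so your write-up is in fact more detailed than the paper's, correctly identifying the relevant forbidden patterns (open-domination plus the distance-$2$ separation configuration, with the girth-$6$ structure disposing of all other pairs) and correctly noting that the feasibility check at $w=\tfrac12$ is a machine computation.
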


\begin{theorem}[Seo and Slater~\cite{seo2010open}]
Let $X$ be an open-locating dominating code in the square grid.
The adage proof using the rule $C_1^+$ demonstrates $\delta(X) \geq \frac{2}{5}$.
\end{theorem}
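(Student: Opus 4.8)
The plan is to specialize the general machinery of Section~\ref{sec:lp} to open-locating-dominating (OLD) codes in the square grid with the single rule $C_1^+$, and to read off the optimal value $w = \frac{2}{5}$ from the resulting linear program; feasibility of that value then yields $\delta(X) \geq \frac{2}{5}$ through Theorem~\ref{thm:dischargingworks}.

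First I would express the family of OLD codes as $\forb(\cF)$ for an explicit finite family $\cF$. The open-domination requirement $N(v) \cap X \neq \varnothing$ forbids the configuration in which the four neighbors of a vertex are all nonelements. The separation requirement $N(u) \cap X \neq N(v) \cap X$ only constrains pairs with $d(u,v) \leq 2$, since otherwise the open neighborhoods are disjoint. For adjacent $u, v$ the two open neighborhoods share no common vertex, so equality of $N(u)\cap X$ and $N(v) \cap X$ would force both to be empty and is already precluded by open domination; thus only the distance-$2$ pairs contribute new forbidden configurations, one for the diagonal position and one for the collinear position. In each case equality of the two open neighborhoods on $X$ is equivalent to every vertex of the symmetric difference $N(u) \triangle N(v)$ lying outside $X$, which is exactly the pattern we forbid.

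Next I would set the initial charge $\mu(v) = \mathbf{1}[v \in X]$ and $\nu(f) = 0$, and observe that $C_1^+$, like $C_1$, transfers charge only between pairs of adjacent vertices. Consequently no charge ever reaches a face, so $\nu'(f) = 0 \geq 0$ holds automatically and only the vertex inequality $\mu'(v) \geq w$ must be enforced. Because the square grid is vertex-transitive under the determinant-$1$ automorphisms, it suffices to treat the single constraint configuration centered at $v_0$: I would collect every embedding of $C_1^+$ in which $z$ or $y_1$ lands on $v_0$, take the union of their vertices and faces, and generate all $\cF$-realizations of that configuration. Each realization contributes one linear inequality of the form $\mu(v_0) + (\text{charge in}) - (\text{charge out}) \geq w$ in the variables $\sigma(\cdot)$ of $C_1^+$, and maximizing $w$ over this system is the linear program whose optimum I claim equals $\frac{2}{5}$.

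The main obstacle is twofold. The delicate modeling step is getting $\cF$ exactly right: the separation constraints for the diagonal and collinear distance-$2$ pairs must be enumerated completely, since a missing forbidden pattern would produce spurious realizations and depress the bound, while an over-aggressive one would invalidate the argument. The substantive step is verifying that the single rule $C_1^+$ is rich enough to certify $w = \frac{2}{5}$ rather than a weaker value; here the enrichment of $C_1$ to $C_1^+$ is what supplies the additional local information needed. Sharpness is not itself part of the lower-bound argument, but the density-$\frac{2}{5}$ construction of Seo and Slater confirms that $\frac{2}{5}$ is the true optimum, so the only real content is that the linear program's optimum attains this value — a fact witnessed by an explicit feasible solution (equivalently, a dual certificate) produced and checked by the computer in place of a hand case analysis.
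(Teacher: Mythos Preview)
Your proposal is correct and follows exactly the approach the paper takes: the theorem is a computational result obtained by instantiating the general ADAGE machinery of Sections~\ref{sec:charge}--\ref{sec:lp} with the OLD forbidden configurations in the square grid and the single rule $C_1^+$, then reading off the optimal $w=\tfrac{2}{5}$ from the resulting linear program. Your analysis of the forbidden family $\cF$ (open domination plus the two distance-$2$ separation patterns, with the adjacent case subsumed by open domination since $N(u)\cap N(v)=\varnothing$) and your observation that $C_1^+$ moves charge only vertex-to-vertex, so the face constraint is vacuous, are both accurate and match what the implementation does.
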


\begin{theorem}[Kincaid, Oldham, and Yu~\cite{kincaid2014optimal}]
Let $X$ be an open-locating dominating code in the triangular grid.
The adage proof using the rule $C_1^+$ demonstrates $\delta(X) \geq \frac{4}{13}$.
\end{theorem}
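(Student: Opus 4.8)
The plan is to instantiate the framework of Theorem~\ref{thm:dischargingworks} for the triangular grid $G$, the family $\cF$ of forbidden configurations describing open-locating-dominating codes, and the single rule $C_1^+$, and then to exhibit an explicit feasible solution of the resulting linear program whose objective value is $w = \frac{4}{13}$. By Theorem~\ref{thm:dischargingworks}, any such solution yields a $(c,d)$-local discharging function with $\mu'(v) \geq \frac{4}{13}$ for every vertex and $\nu'(f) \geq 0$ for every face, and hence $\delta(X) \geq \frac{4}{13}$ for every $X \in \forb(\cF)$. Note that for the stated inequality only primal feasibility at $w=\frac{4}{13}$ is required; matching the known sharp value of Kincaid, Oldham, and Yu then amounts to checking that the solver's optimum does not exceed this.

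First I would record $\cF$. The open-domination requirement $N(v)\cap X \neq \varnothing$ contributes one forbidden configuration: the ball $B_1(v_0)$ with all six neighbors of $v_0$ placed in $S_0$. The distinguishing requirement $N(u)\cap X \neq N(v)\cap X$ contributes one forbidden configuration per orbit of ordered pairs $(u,v)$ with $u\neq v$; since two equal \emph{nonempty} open neighborhoods force $N(u)\cap N(v)\neq\varnothing$, only pairs with $d(u,v)\leq 2$ are relevant, giving finitely many configurations, each supported on the bounded set $N(u)\cup N(v)$ and recording the common intersection pattern that witnesses a violating pair.

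Next I would set up $C_1^+$ and its constraint configuration. Because its chargeable objects $z$ and $y_1$ are adjacent \emph{vertices}, charge moves only between vertices, so $\nu'(f)=\nu(f)=0$ for every face and the face constraints $\nu'(f)\geq 0$ hold vacuously; only vertex constraints survive. Using vertex-transitivity of $G$ (Section~\ref{sec:grids}), there is a single vertex orbit, so one constraint configuration $C$ about $v_0$ suffices, obtained as the union of the images of $C_1^+$ over all embeddings sending $z$ or $y_1$ to $v_0$, a bounded ball about $v_0$. I would then generate every $\cF$-realization $C'$ of $C$ and, for each, record the constraint
\[
  \mu(v_0) + \sum_{\pi : \pi(z)=v_0}\sigma(S_1(C'),y_1) - \sum_{\pi : \pi(y_1)=v_0}\sigma(S_1(C'),y_1) \geq w,
\]
where the values $\sigma(\cdot,\cdot)$ are the decision variables, one per realization class of $C_1^+$, and maximize $w$.

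The substantive step, and the one I expect to be the main obstacle, is solving this linear program and certifying the value $\frac{4}{13}$ in a human-checkable way. I would exhibit the explicit rational charge table $\sigma$ returned by the solver and verify, realization by realization, that each of the finitely many vertex constraints is met with $w=\frac{4}{13}$; feeding this $\sigma$ into Theorem~\ref{thm:dischargingworks} then closes the argument. The difficulty is purely one of scale and bookkeeping: the number of $\cF$-realizations of $C$ grows roughly like $2^{|V(C)|}$ after quotienting by symmetry and deleting forbidden patterns, so the real work lies in (a) generating these realizations correctly, matching the combinatorial-generation step of the framework, and (b) presenting a compact certificate—an explicit optimal $\sigma$ for the lower bound, and a dual feasible point of value $\frac{4}{13}$ to confirm the rule cannot do better—so that a reader can independently check that the automated bound coincides with the sharp value of Kincaid, Oldham, and Yu.
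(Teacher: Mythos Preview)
Your proposal is correct and follows exactly the paper's approach: the theorem has no separate written proof in the paper, being one of the computational results obtained by instantiating the ADAGE framework (Sections~\ref{sec:charge}--\ref{sec:lp}) with the triangular grid, the forbidden configurations for OLD codes, and the rule $C_1^+$, and reading off the LP optimum. Your observation that $C_1^+$ exchanges charge only between vertices, so the face constraints $\nu'(f)\geq 0$ are vacuous, and your reduction of the distinguishing condition to pairs at distance at most~$2$ are both sound and match the setup implicit in the paper.
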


Observe that among all variations on all three grids, the only variations that failed to find a sharp lower bound were identifying codes on the hexagonal grid, and strong identifying codes on all three grids.
Likely, the strong identifying codes are more challenging because a strong identifying code is both an identifying code and an open-locating dominating code, so the optimal density is highest among all of these variations.
Also, there are more forbidden configurations and this leads to fewer realizations of the discharging rules (and hence fewer variables in the linear program).

There are also variations on identifying codes that are robust against edge changes~\cite{honkala2004optimal,honkala2006robust,honkala2007identifying,laihonen2006robust,slater2002fault}, or identify all sets of vertices of size at most $\ell$~\cite{foucaud2011improved,laihonen2005optimal,laihonen2006optimal}, or consider balls of larger radius~\cite{junnila2013new,junnila2012optimal,karpovsky1998new,martin2010lower,roberts2008locating,stanton2011improved,ville2011optimality}.
These variations are good candidates for future implementation.

\section{Conclusions and Future Work}

This first application of the ADAGE framework is successful in showing alternative proofs of existing sharp bounds~\cite{ben2005exact,karpovsky1998new,honkala2006locating,slater2002fault,seo2010open,kincaid2014optimal}, and surpassing the human-written proofs of lower bounds on identifying codes in the hexagonal grid~\cite{karpovsky1998new,cohen2000bounds,cranston2009new,cukierman2013new}.
The computer-automated portions of the method replace lengthy case analysis and can be more detailed than something within the reach of a human prover.
However, the simple description of the discharging rules can perhaps lead to a deeper understanding of the structure and success of a discharging argument.
By automating the process of assigning value to the discharging rules, a proof author can focus more on the creative process of designing rules. 
Thus, the most important part is to balance the strength of the rules against the size of the constraint configurations.

There are some features that will be added to the ADAGE framework in order to make the proofs more robust.
The rules used so far are based entirely on the realization of the rule configuration.
This leads to an exponential growth in the number of variables and constraints as the rules grow.
To lower the number of variables, the rules could be clustered by families of realizations.
For example, a rule could be based on the number of elements incident to a face instead of the exact arrangement of elements on the face.
This is equivalent to placing equality constraints among groups of variables coming from similar realizations.
Such a clustering of rules can also greatly decrease the number of distinct constraints, as several realizations of a constraint configuration will result in the same combination of variables.

Another feature is to use the discharging argument to characterize sharp examples.
If the discharging proof presents a sharp lower bound on the density of a set, then we can use this to generate a class of optimal examples. 
Among all optimal sets, the configurations of optimal density must be those where the discharging arguments are sharp, except at a density-zero portion of the chargeable objects. 
Thus, it must be possible to construct arbitrarily large configurations that do not contain a forbidden configuration and the discharging rules result in charge exactly $w$ on every internal vertex and exactly $0$ on every internal face.
A combinatorial generation algorithm could discover such configurations.


\section*{Acknowledgements}

Thanks to Michael Ferrara, Stephen G. Hartke, Bernard Lidick\'y, Ryan R. Martin, and Paul S. Wenger for several very helpful discussions about the discharging method and identifying codes.

\label{paper:end}

{
\small
\frenchspacing
\setlength{\parsep}{0em}
\setlength{\itemsep}{0em}


\bibliographystyle{abbrv}
\bibliography{identcodes.bib}

\begin{thebibliography}{10}

\bibitem{appel1977every}
K.~Appel, W.~Haken, et~al.
\newblock Every planar map is four colorable. {P}art {I}: {D}ischarging.
\newblock {\em Illinois Journal of Mathematics}, 21(3):429--490, 1977.

\bibitem{appel1977every2}
K.~Appel, W.~Haken, J.~Koch, et~al.
\newblock Every planar map is four colorable. {P}art {II}: {R}educibility.
\newblock {\em Illinois Journal of Mathematics}, 21(3):491--567, 1977.

\bibitem{ben2005exact}
Y.~Ben-Haim and S.~Litsyn.
\newblock Exact minimum density of codes identifying vertices in the square
  grid.
\newblock {\em SIAM Journal on Discrete Mathematics}, 19(1):69--82, 2005.

\bibitem{caceres2013locating}
J.~C{\'a}ceres, C.~Hernando, M.~Mora, I.~M. Pelayo, and M.~L. Puertas.
\newblock Locating--dominating codes: Bounds and extremal cardinalities.
\newblock {\em Applied Mathematics and Computation}, 220:38--45, 2013.

\bibitem{cohen2000bounds}
G.~D. Cohen, I.~Honkala, A.~Lobstein, and G.~Z{\'e}mor.
\newblock Bounds for codes identifying vertices in the hexagonal grid.
\newblock {\em SIAM Journal on Discrete Mathematics}, 13(4):492--504, 2000.

\bibitem{cranston2013guide}
D.~W. Cranston and D.~B. West.
\newblock A guide to the discharging method.
\newblock {\em arXiv preprint arXiv:1306.4434}, 2013.

\bibitem{cranston2009new}
D.~W. Cranston and G.~Yu.
\newblock A new lower bound on the density of vertex identifying codes for the
  infinite hexagonal grid.
\newblock {\em The Electronic Journal of Combinatorics}, 16(1):R113, 2009.

\bibitem{cukierman2013new}
A.~Cukierman and G.~Yu.
\newblock New bounds on the minimum density of an identifying code for the
  infinite hexagonal grid.
\newblock {\em Discrete Applied Mathematics}, 161(18):2910--2924, 2013.

\bibitem{dong2013autonomous}
X.~Dong, M.~C. Vuran, and S.~Irmak.
\newblock Autonomous precision agriculture through integration of wireless
  underground sensor networks with center pivot irrigation systems.
\newblock {\em Ad Hoc Networks}, 11(7):1975--1987, 2013.

\bibitem{foucaud2011improved}
F.~Foucaud, T.~Laihonen, and A.~Parreau.
\newblock An improved lower bound for $(1,\leq\hspace{-0.25em}2)$-identifying
  codes in the king grid.
\newblock {\em arXiv preprint arXiv:1111.2477}, 2011.

\bibitem{hartke2012uniquely}
S.~G. Hartke and D.~Stolee.
\newblock Uniquely ${K}_r$-saturated graphs.
\newblock {\em The Electronic Journal of Combinatorics}, 19(4):P6, 2012.

\bibitem{honkala2004optimal}
I.~Honkala.
\newblock An optimal edge-robust identifying code in the triangular lattice.
\newblock {\em Annals of Combinatorics}, 8(3):303--323, 2004.

\bibitem{honkala2006optimal}
I.~Honkala.
\newblock An optimal locating-dominating set in the infinite triangular grid.
\newblock {\em Discrete Mathematics}, 306(21):2670--2681, 2006.

\bibitem{honkala2010optimal}
I.~Honkala.
\newblock An optimal strongly identifying code in the infinite triangular grid.
\newblock {\em The Electronic Journal of Combinatorics}, 17(1):R91, 2010.

\bibitem{honkala2006robust}
I.~Honkala, M.~G. Karpovsky, and L.~B. Levitin.
\newblock On robust and dynamic identifying codes.
\newblock {\em Information Theory, IEEE Transactions on}, 52(2):599--612, 2006.

\bibitem{honkala2006locating}
I.~Honkala and T.~Laihonen.
\newblock On locating--dominating sets in infinite grids.
\newblock {\em European Journal of Combinatorics}, 27(2):218--227, 2006.

\bibitem{honkala2007identifying}
I.~Honkala and T.~Laihonen.
\newblock On identifying codes that are robust against edge changes.
\newblock {\em Information and Computation}, 205(7):1078--1095, 2007.

\bibitem{honkala2002strongly}
I.~Honkala, T.~Laihonen, and S.~Ranto.
\newblock On strongly identifying codes.
\newblock {\em Discrete Mathematics}, 254(1):191--205, 2002.

\bibitem{junnila2013new}
V.~Junnila.
\newblock New lower bound for 2-identifying code in the square grid.
\newblock {\em Discrete Applied Mathematics}, 161(13):2042--2051, 2013.

\bibitem{junnila2012optimal}
V.~Junnila and T.~Laihonen.
\newblock Optimal lower bound for 2-identifying code in the hexagonal grid.
\newblock {\em arXiv preprint arXiv:1202.0670}, 2012.

\bibitem{karpovsky1998new}
M.~G. Karpovsky, K.~Chakrabarty, and L.~B. Levitin.
\newblock On a new class of codes for identifying vertices in graphs.
\newblock {\em Information Theory, IEEE Transactions on}, 44(2):599--611, 1998.

\bibitem{kincaid2014optimal}
R.~Kincaid, A.~Oldham, and G.~Yu.
\newblock Optimal open-locating-dominating sets in infinite triangular grids.
\newblock {\em arXiv preprint arXiv:1403.7061}, 2014.

\bibitem{laihonen2005optimal}
T.~Laihonen.
\newblock On optimal edge-robust and vertex-robust
  $(1,\leq\hspace{-0.25em}\ell)$-identifying codes.
\newblock {\em SIAM Journal on Discrete Mathematics}, 18(4):825--834, 2005.

\bibitem{laihonen2006robust}
T.~Laihonen.
\newblock On robust identification in the square and king grids.
\newblock {\em Discrete Applied Mathematics}, 154(17):2499--2510, 2006.

\bibitem{laihonen2006optimal}
T.~Laihonen.
\newblock Optimal $t$-edge-robust $r$-identifying codes in the king lattice.
\newblock {\em Graphs and Combinatorics}, 22(4):487--496, 2006.

\bibitem{martin2010lower}
R.~Martin and B.~Stanton.
\newblock Lower bounds for identifying codes in some infinite grids.
\newblock {\em The Electronic Journal of Combinatorics}, 17(R122):1, 2010.

\bibitem{razborov2007flag}
A.~A. Razborov.
\newblock Flag algebras.
\newblock {\em The Journal of Symbolic Logic}, 72(04):1239--1282, 2007.

\bibitem{razborov2013flag}
A.~A. Razborov.
\newblock Flag algebras: an interim report.
\newblock In {\em The Mathematics of Paul Erd{\H{o}}s II}, pages 207--232.
  Springer, 2013.

\bibitem{roberts2008locating}
D.~L. Roberts and F.~S. Roberts.
\newblock Locating sensors in paths and cycles: The case of 2-identifying
  codes.
\newblock {\em European Journal of Combinatorics}, 29(1):72--82, 2008.

\bibitem{robertson1997four}
N.~Robertson, D.~Sanders, P.~Seymour, and R.~Thomas.
\newblock The four-colour theorem.
\newblock {\em Journal of Combinatorial Theory, Series B}, 70(1):2--44, 1997.

\bibitem{seo2010open}
S.~J. Seo and P.~J. Slater.
\newblock Open neighborhood locating-dominating sets.
\newblock {\em Australasian J. Combin}, 46:109--120, 2010.

\bibitem{slater1995locating}
P.~J. Slater.
\newblock Locating dominating sets and locating-dominating sets.
\newblock In {\em Graph Theory, Combinatorics and Applications: Proceedings of
  the Seventh Quadrennial International Conference on the Theory and
  Applications of Graphs}, volume~2, pages 1073--1079, 1995.

\bibitem{slater2002fault}
P.~J. Slater.
\newblock Fault-tolerant locating-dominating sets.
\newblock {\em Discrete Mathematics}, 249(1):179--189, 2002.

\bibitem{stanton2011improved}
B.~Stanton.
\newblock Improved bounds for $r$-identifying codes of the hex grid.
\newblock {\em SIAM Journal on Discrete Mathematics}, 25(1):159--169, 2011.

\bibitem{ville2011optimality}
J.~Ville and T.~Laihonen.
\newblock Optimality of a 2-identifying code in the hexagonal grid.
\newblock In {\em WCC 2011-Workshop on coding and cryptography}, pages 47--56,
  2011.

\bibitem{west2001introduction}
D.~B. West.
\newblock {\em Introduction to graph theory}, volume~2.
\newblock Prentice Hall Upper Saddle River, 2001.

\end{thebibliography}
}

\clearpage
\appendix
\section{Discharging Rules and Constraint Configurations}\label{apx:begin}\label{sec:ruletables}

\begin{table}[hp]\small
\centering
\def\rulefigwidth{0.85in}
\def\constraintwidth{0.85in}
\def\bigconstraintwidth{1.7in}
\def\medconstraintwidth{1.3in}
\mbox{
\begin{tabular}[h]{|m{0.1in}m{\rulefigwidth}|m{\constraintwidth}m{\constraintwidth}|}
\hline
\multicolumn{2}{|c|}{\textbf{Rules} }
&
\multicolumn{2}{c|}{\textbf{Constraint Configurations}}\\
\hline&&&\\[-3ex]
\hline&&&\\[-2ex]
$V_1$ & \includegraphics[width=\rulefigwidth]{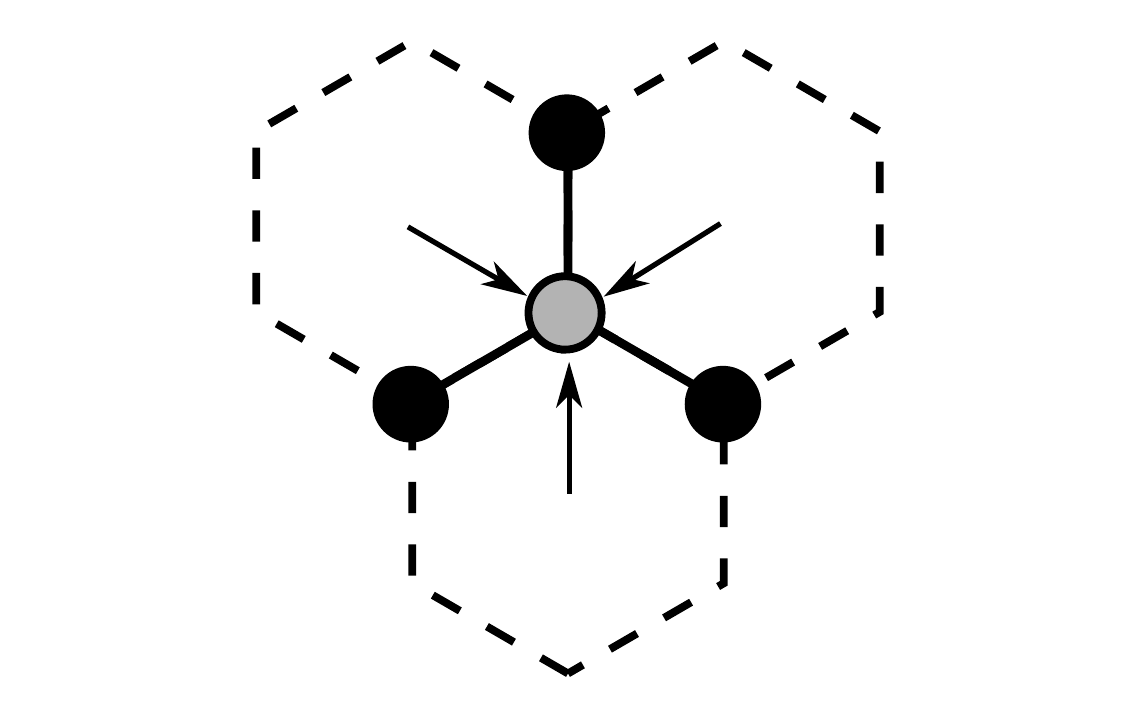} & 
\includegraphics[width=\constraintwidth]{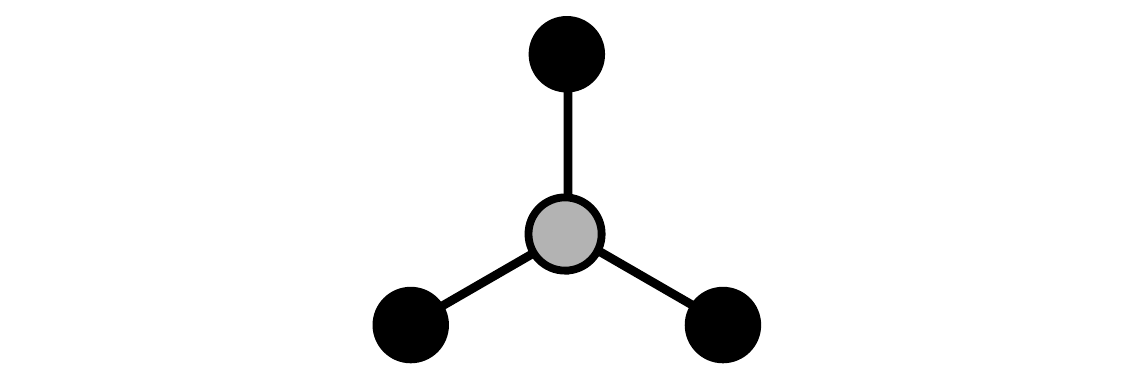} & 
\includegraphics[width=\constraintwidth]{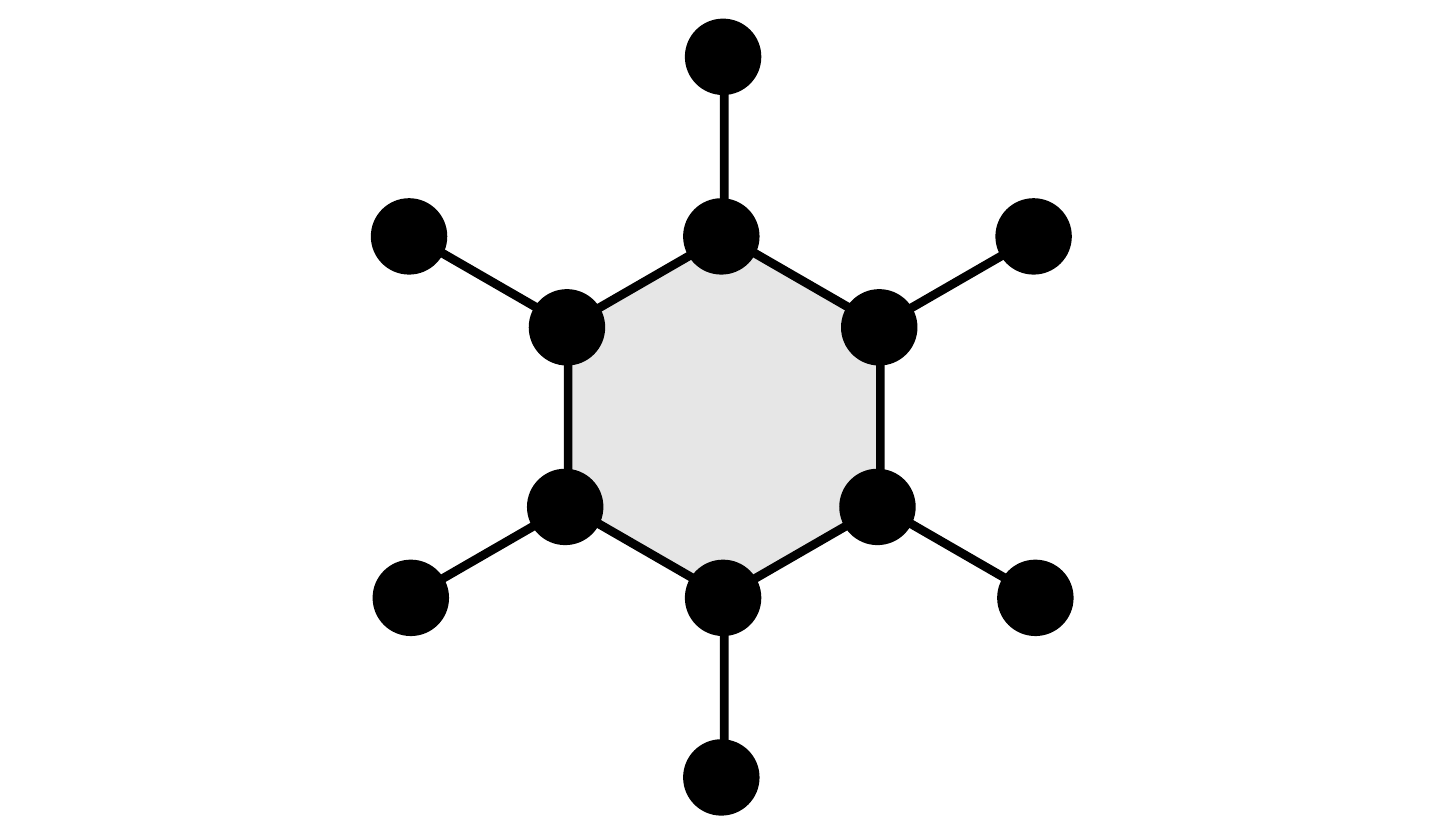} \\
\hline&&&\\[-2ex]
$V_2$ & \includegraphics[width=\rulefigwidth]{hexgrid-rule-v2.pdf} & 
\includegraphics[width=\constraintwidth]{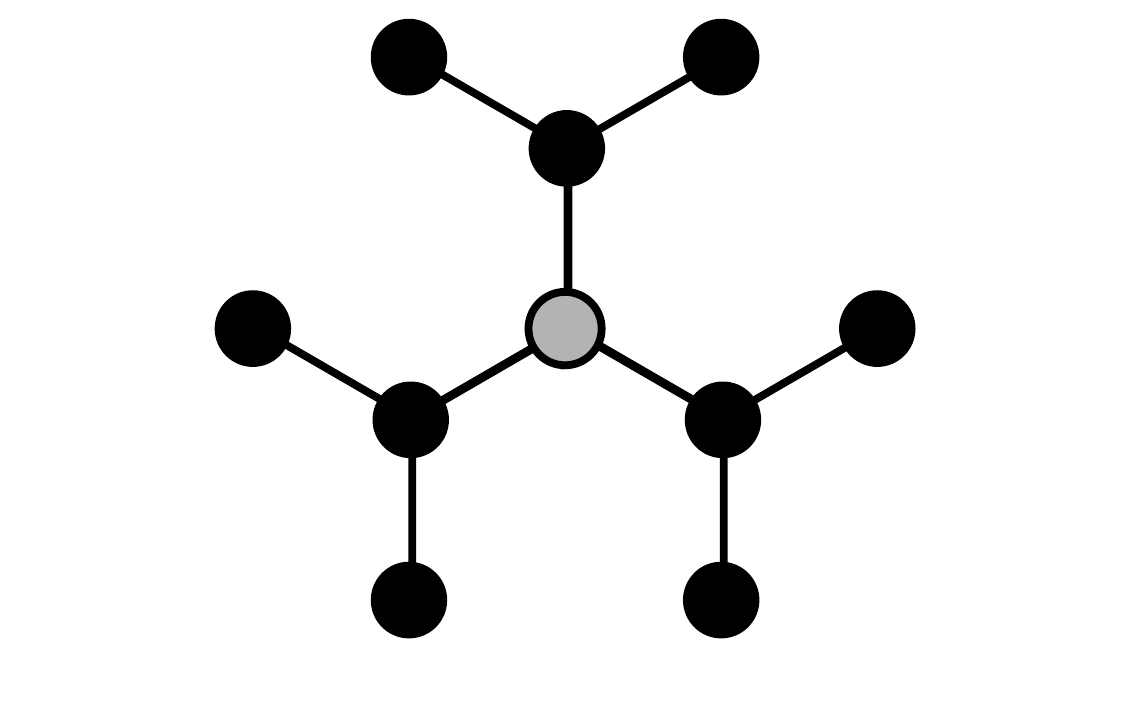} & 
\includegraphics[width=\constraintwidth]{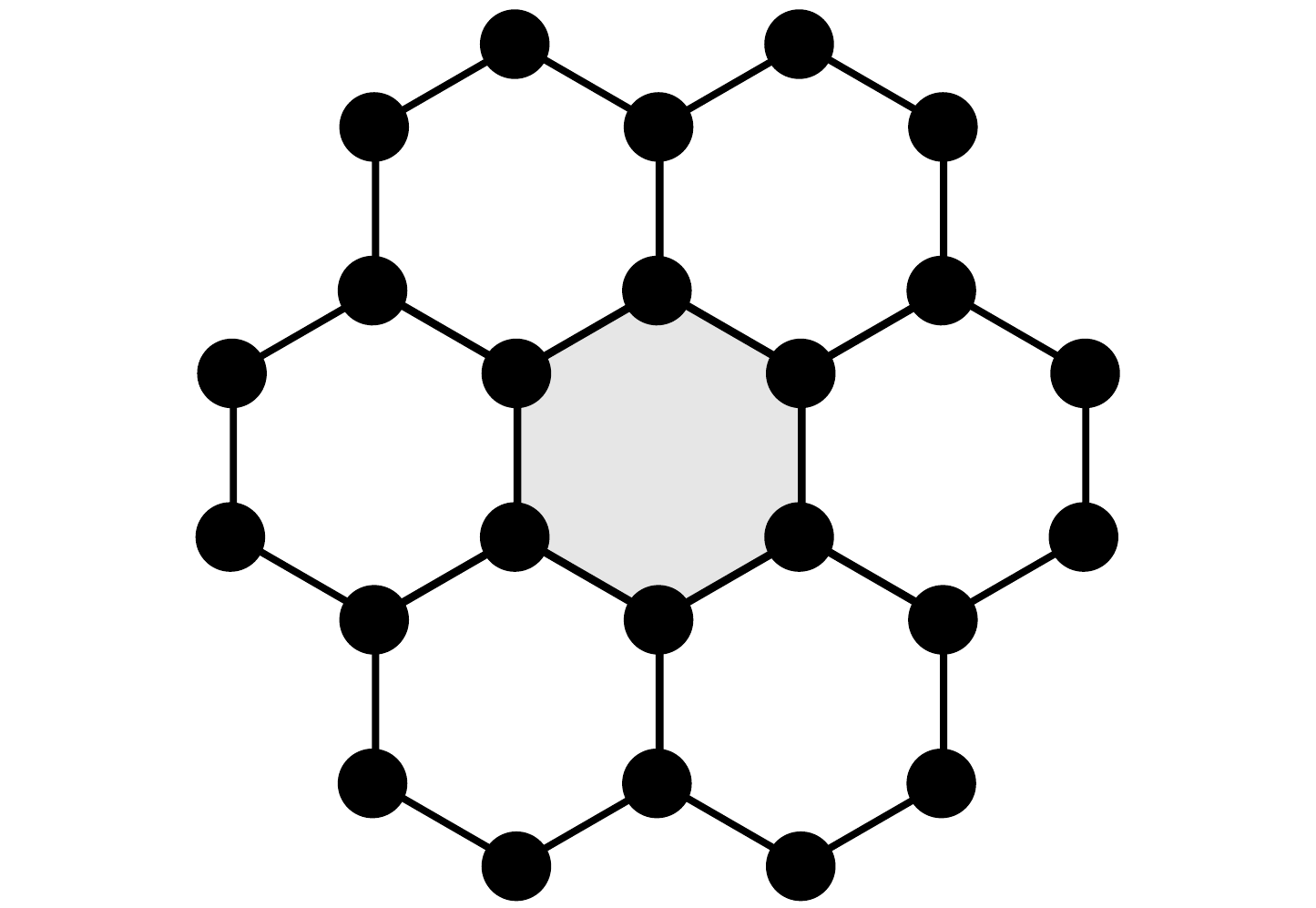} \\
\hline&&&\\[-2ex]
$N$ & \includegraphics[width=\rulefigwidth]{hexgrid-rule-n.pdf} & 
\includegraphics[width=\constraintwidth]{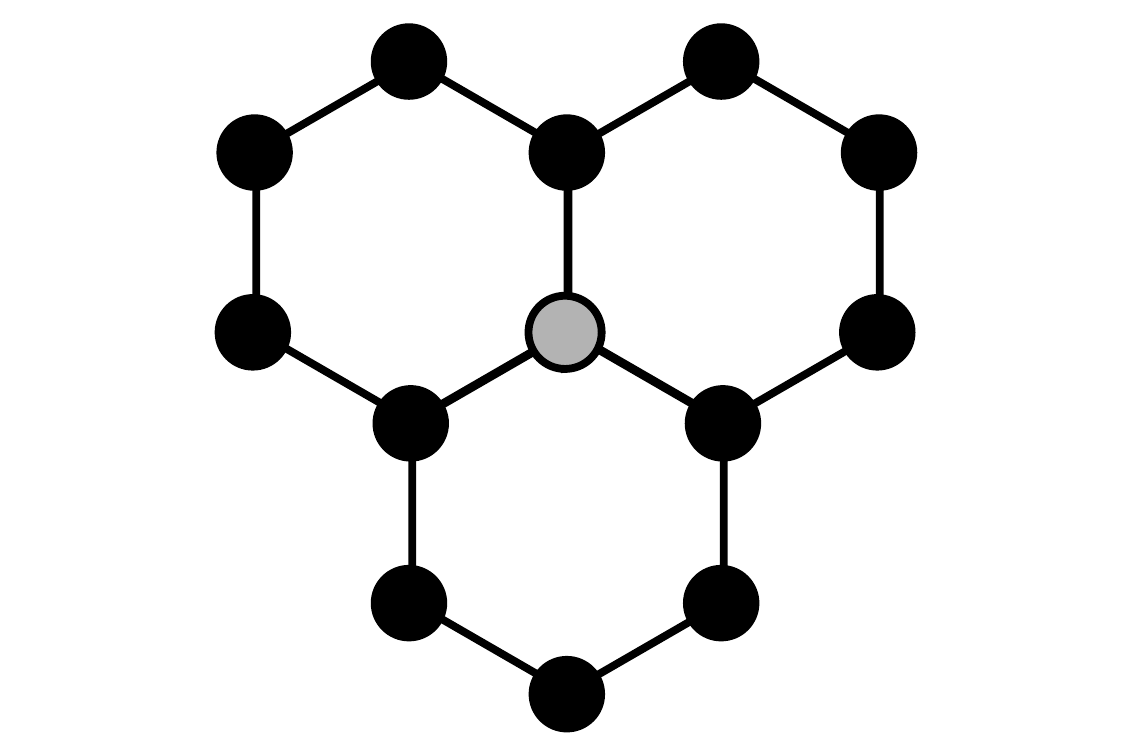} & 
\includegraphics[width=\constraintwidth]{hexgrid-constraint-face-n.pdf}  \\
\hline&&&\\[-2ex]
$N^+$ & \includegraphics[width=\rulefigwidth]{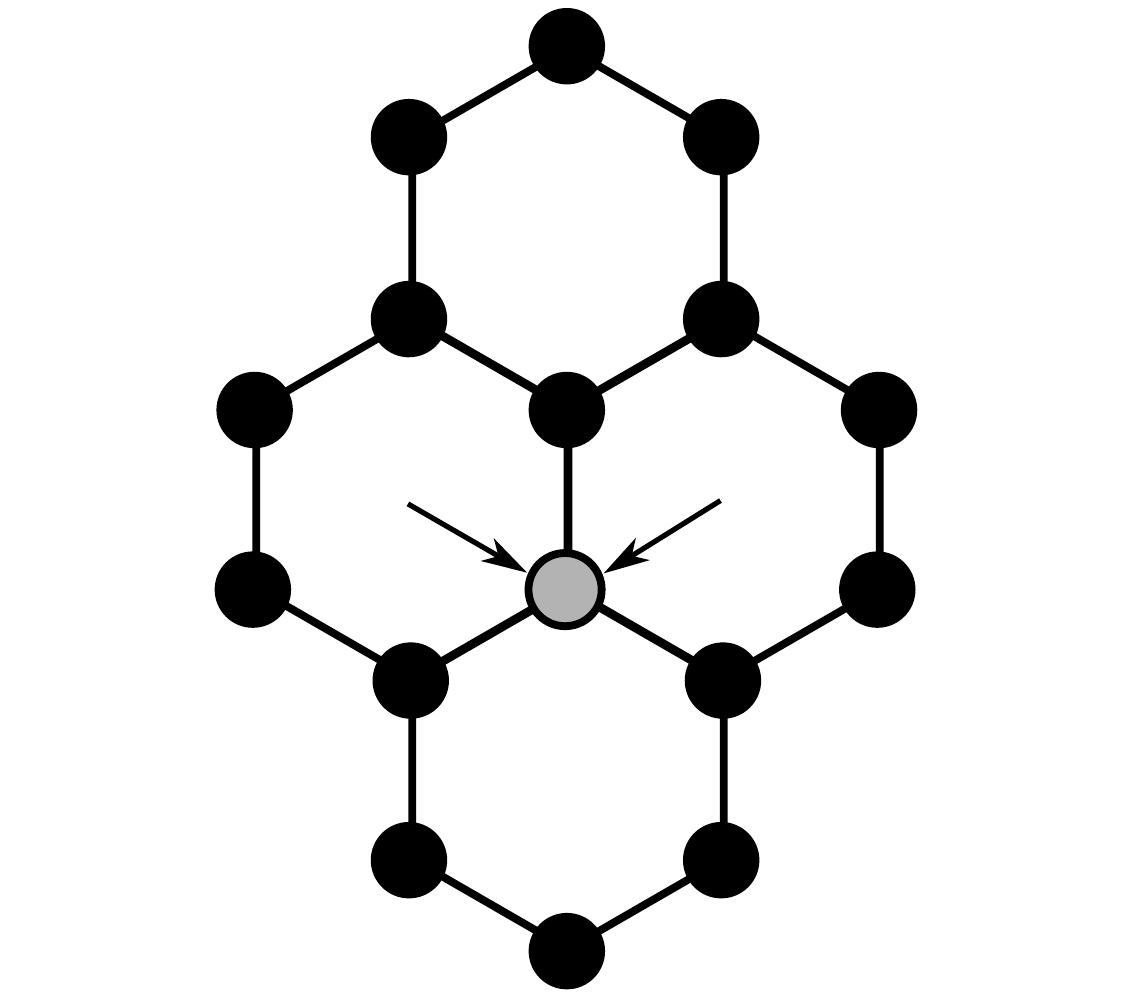} & 
\includegraphics[width=\constraintwidth]{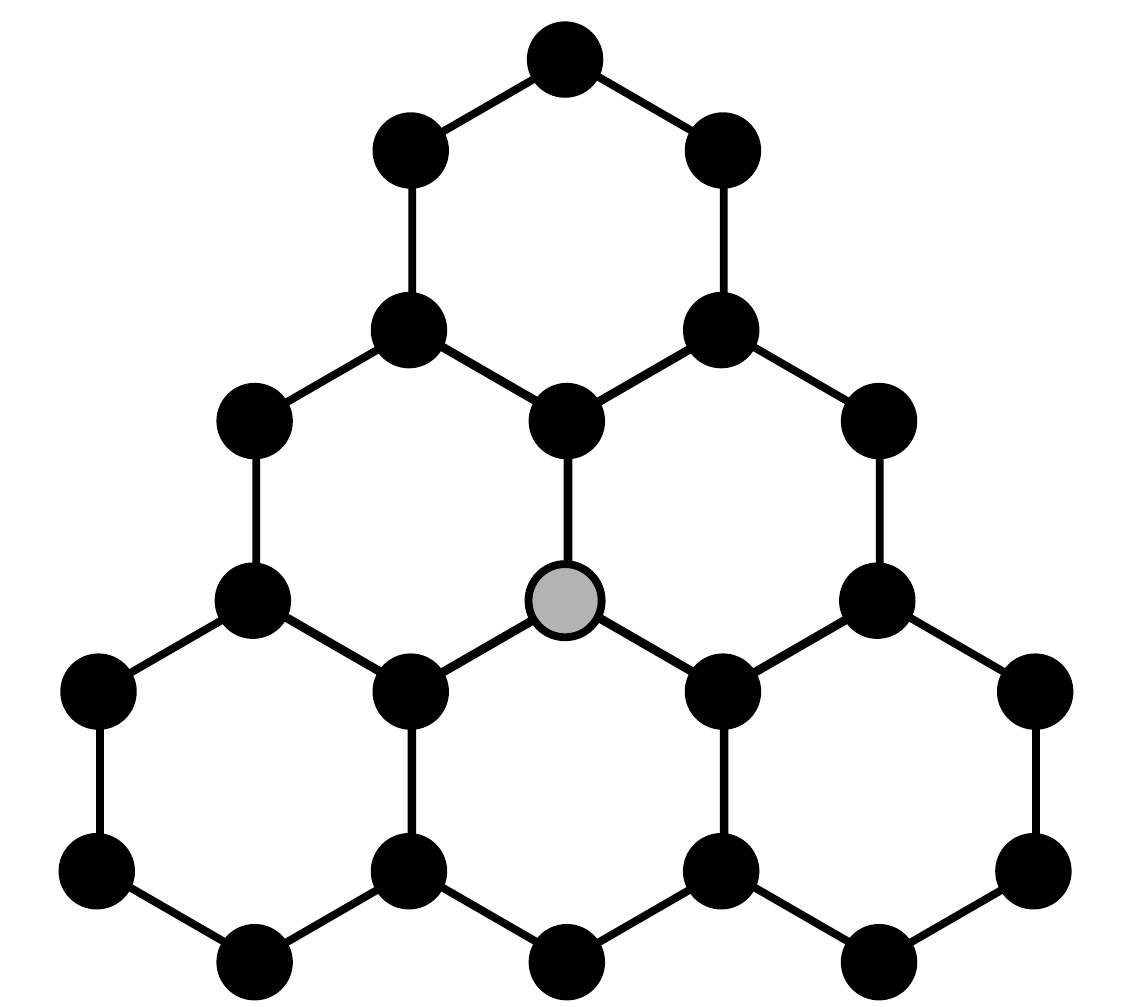} & 
\includegraphics[width=\constraintwidth]{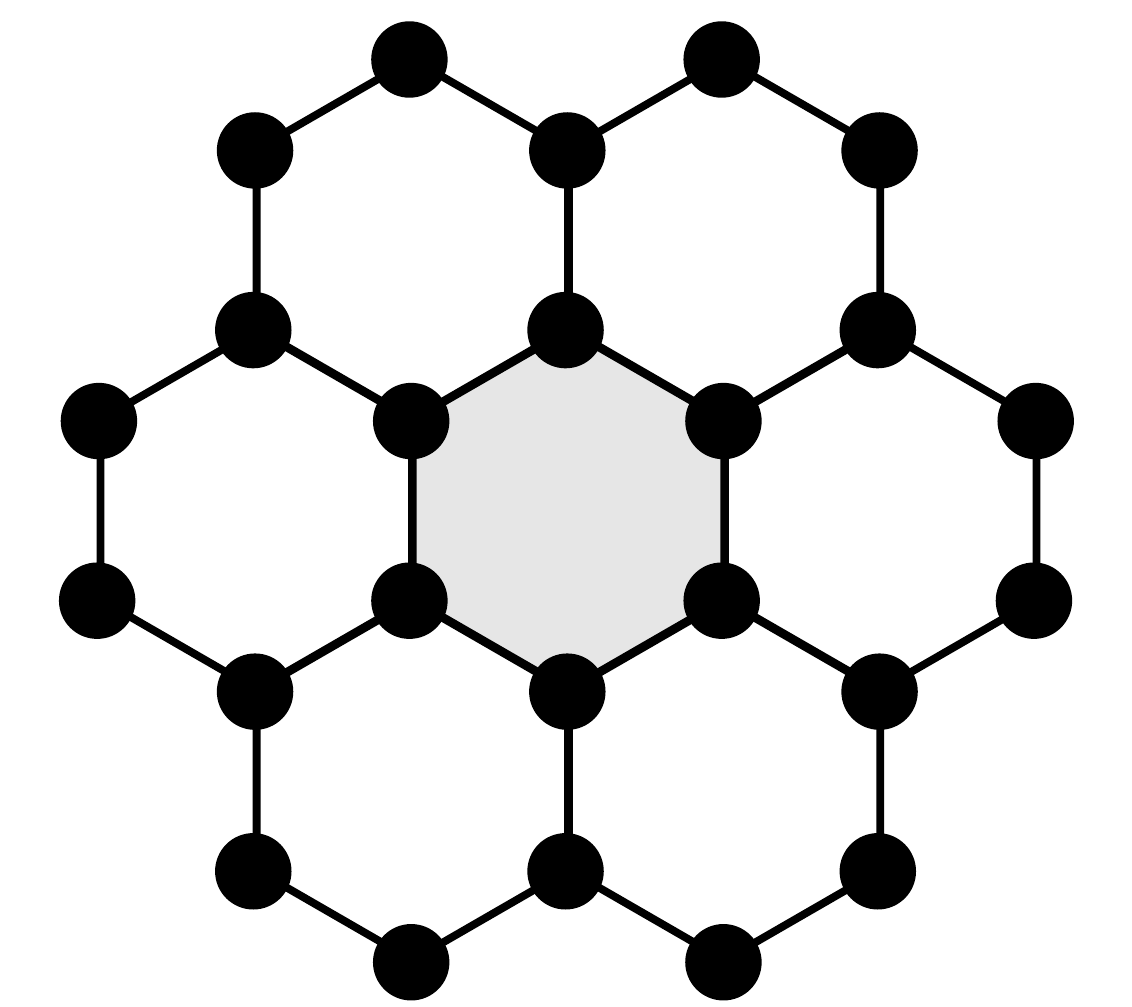}  \\
\hline&&&\\[-2ex]
$V_3$ & \includegraphics[width=\rulefigwidth]{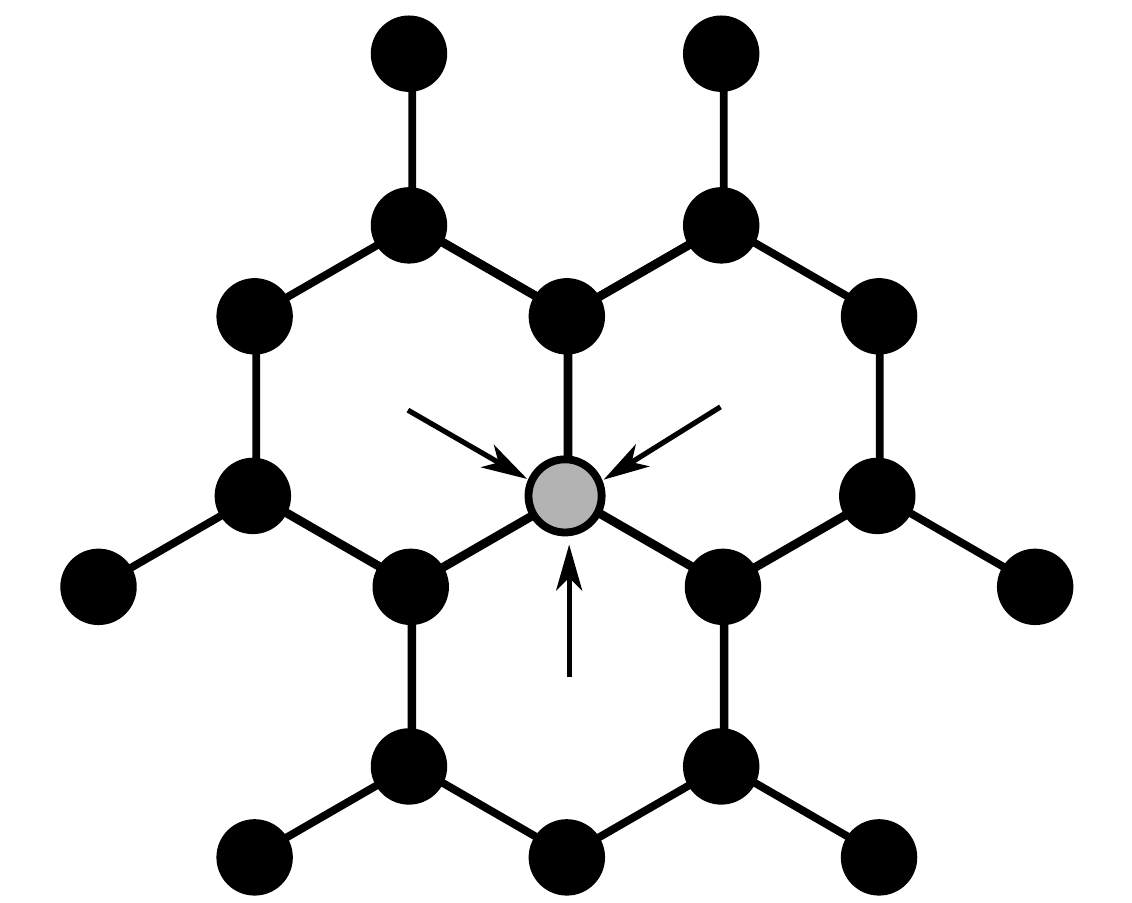} & 
\includegraphics[width=\constraintwidth]{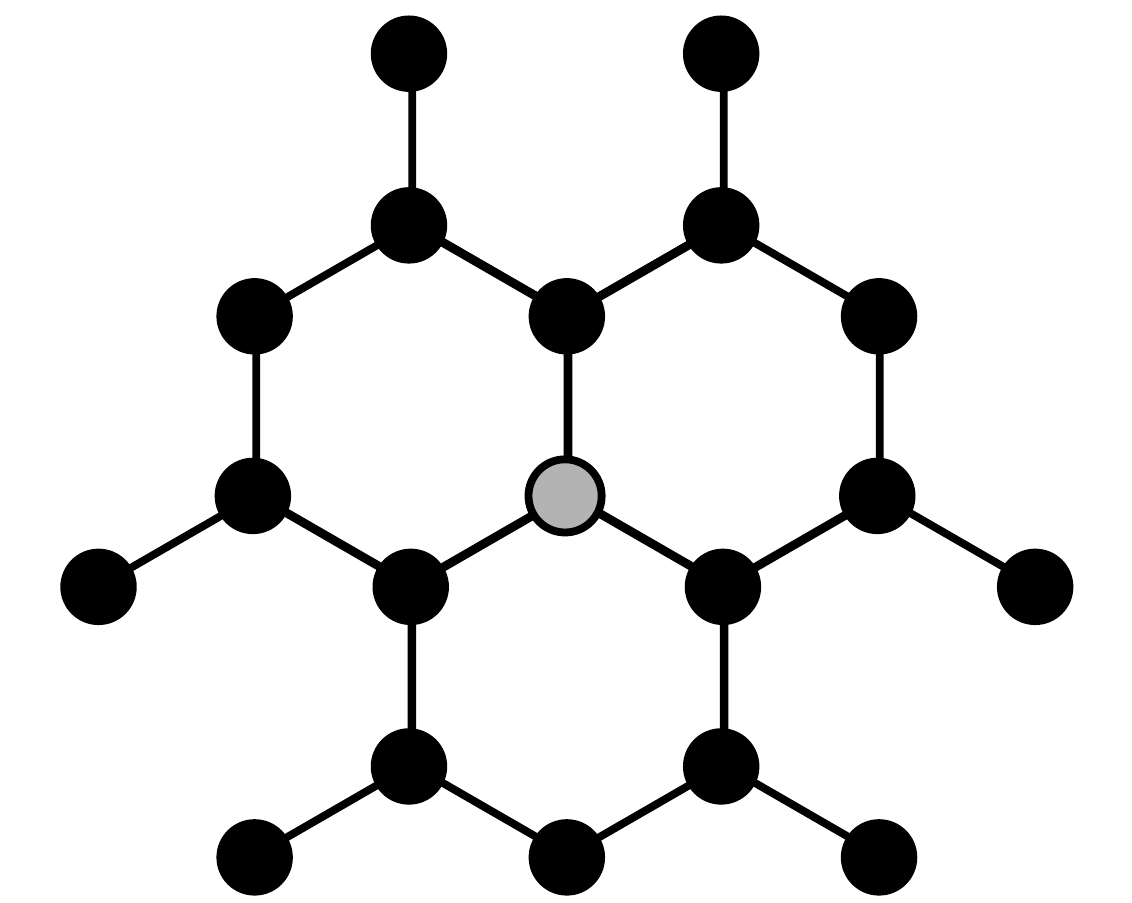} & 
\includegraphics[width=\constraintwidth]{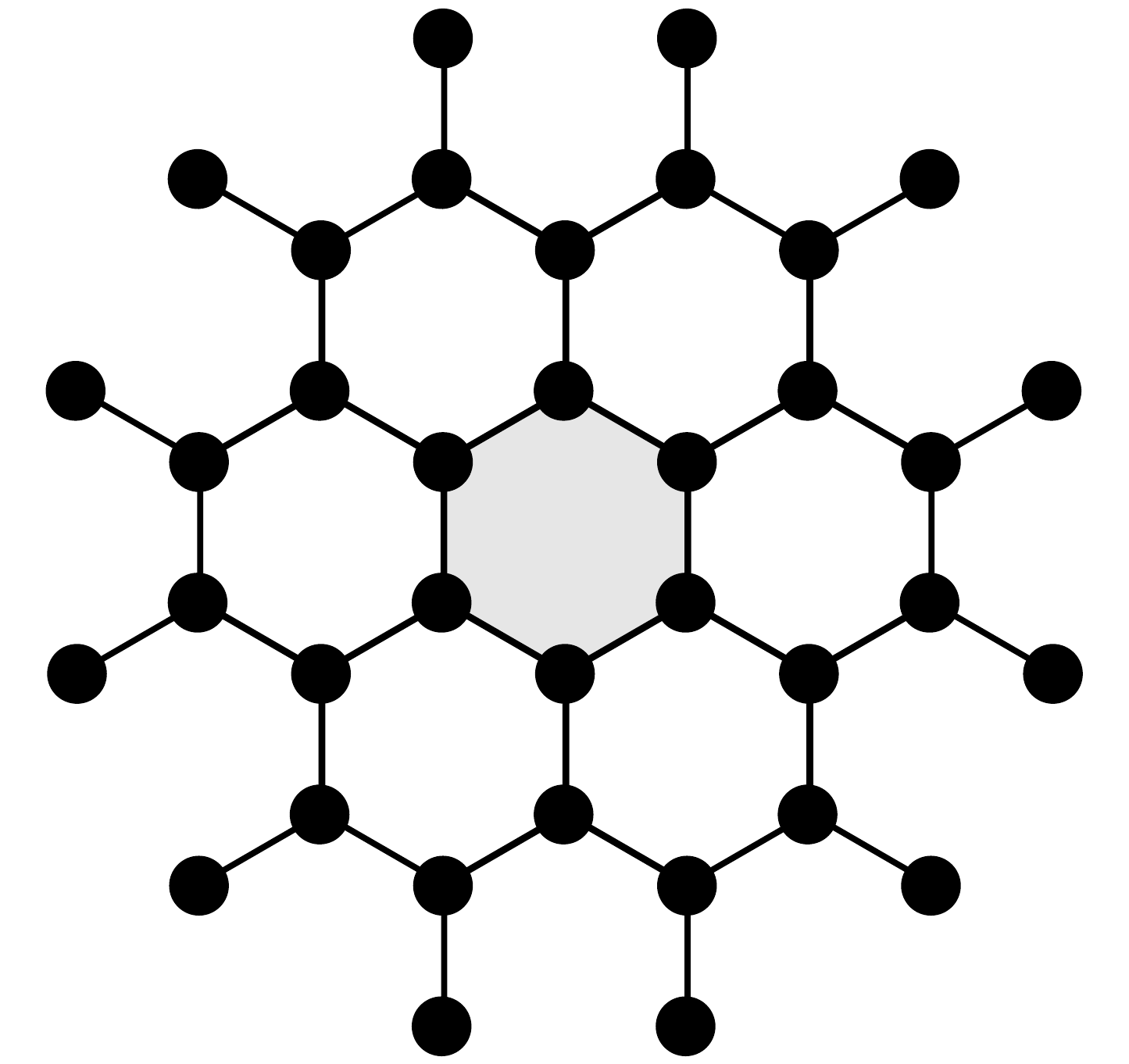} \\
\hline&&&\\[-2ex]
$C_1$ & \includegraphics[width=\rulefigwidth]{hexgrid-rule-c1.pdf} & 
\multicolumn{2}{c|}{\multirow{3}{\bigconstraintwidth}{\includegraphics[width=\bigconstraintwidth]{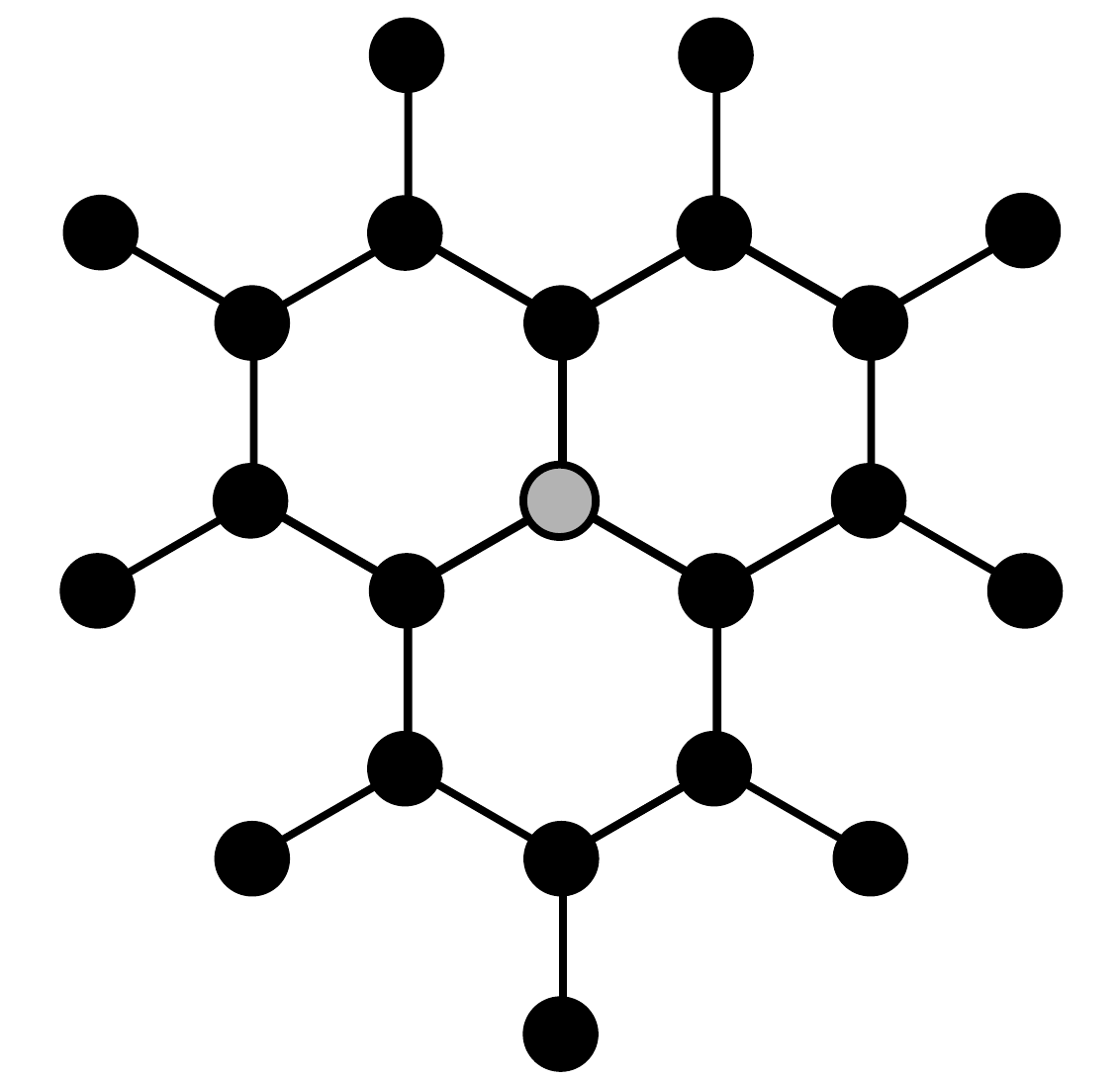}}} \\[4ex]
$C_2^+$ & \includegraphics[width=\rulefigwidth]{hexgrid-rule-c2.pdf} && \\[4ex]
$C_3$ & \includegraphics[width=\rulefigwidth]{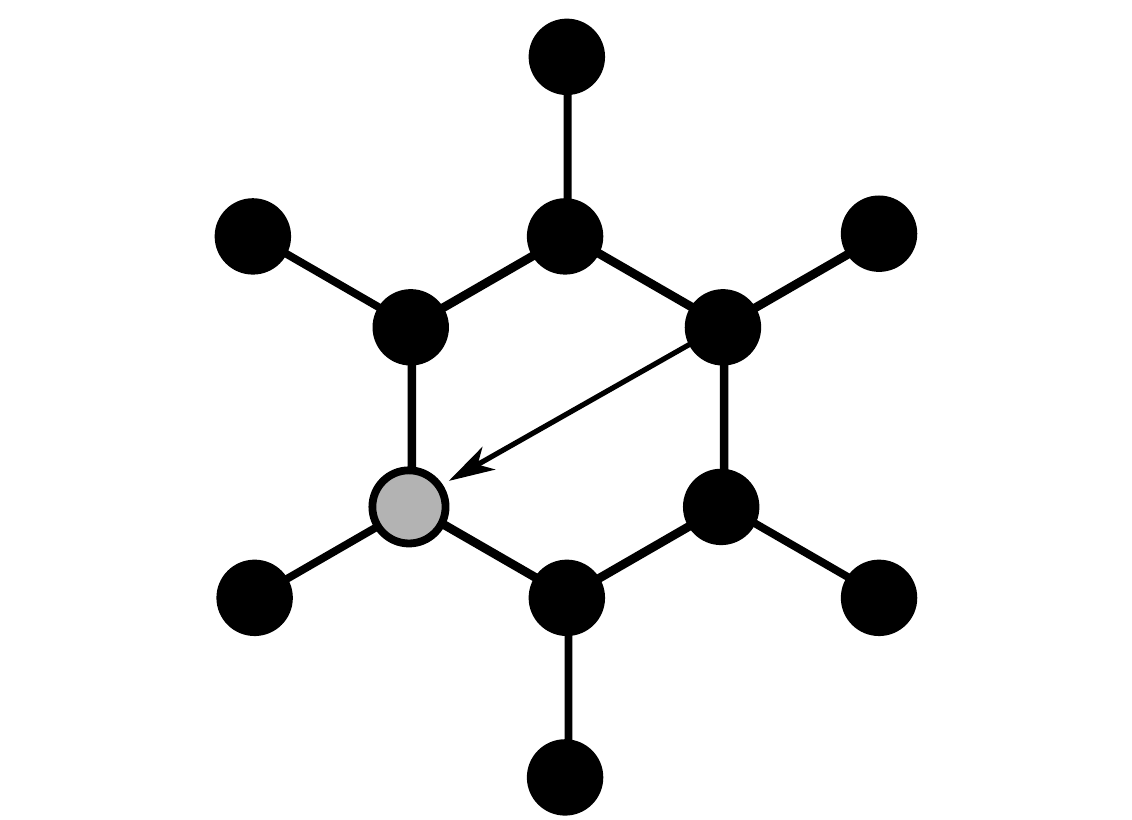} && \\[4ex]
\hline
\end{tabular}
\quad
\begin{tabular}[h]{|m{0.1in}m{\rulefigwidth}|m{\constraintwidth}m{\constraintwidth}|}
\hline
\multicolumn{2}{|c|}{\textbf{Rules} }
&
\multicolumn{2}{c|}{\textbf{Constraint Configurations}}\\
\hline&&&\\[-3ex]
\hline&&&\\[-2ex]
$N$ & \includegraphics[width=\rulefigwidth]{hexgrid-rule-n.pdf} & 
\multirow{2}{\constraintwidth}{\includegraphics[width=\constraintwidth]{hexgrid-constraint-vert-nj2.pdf}} &
\multirow{2}{\constraintwidth}{\includegraphics[width=\constraintwidth]{hexgrid-constraint-face-n.pdf}} \\
$J_2$ & \includegraphics[width=\rulefigwidth]{hexgrid-rule-j2.pdf} && \\
\hline&&&\\[-2ex]
$V_3$ & \includegraphics[width=\rulefigwidth]{hexgrid-rule-v3.pdf} & 
\multirow{2}{\constraintwidth}{\includegraphics[width=\constraintwidth]{hexgrid-constraint-vert-v3}} &
\multirow{2}{\constraintwidth}{\includegraphics[width=\constraintwidth]{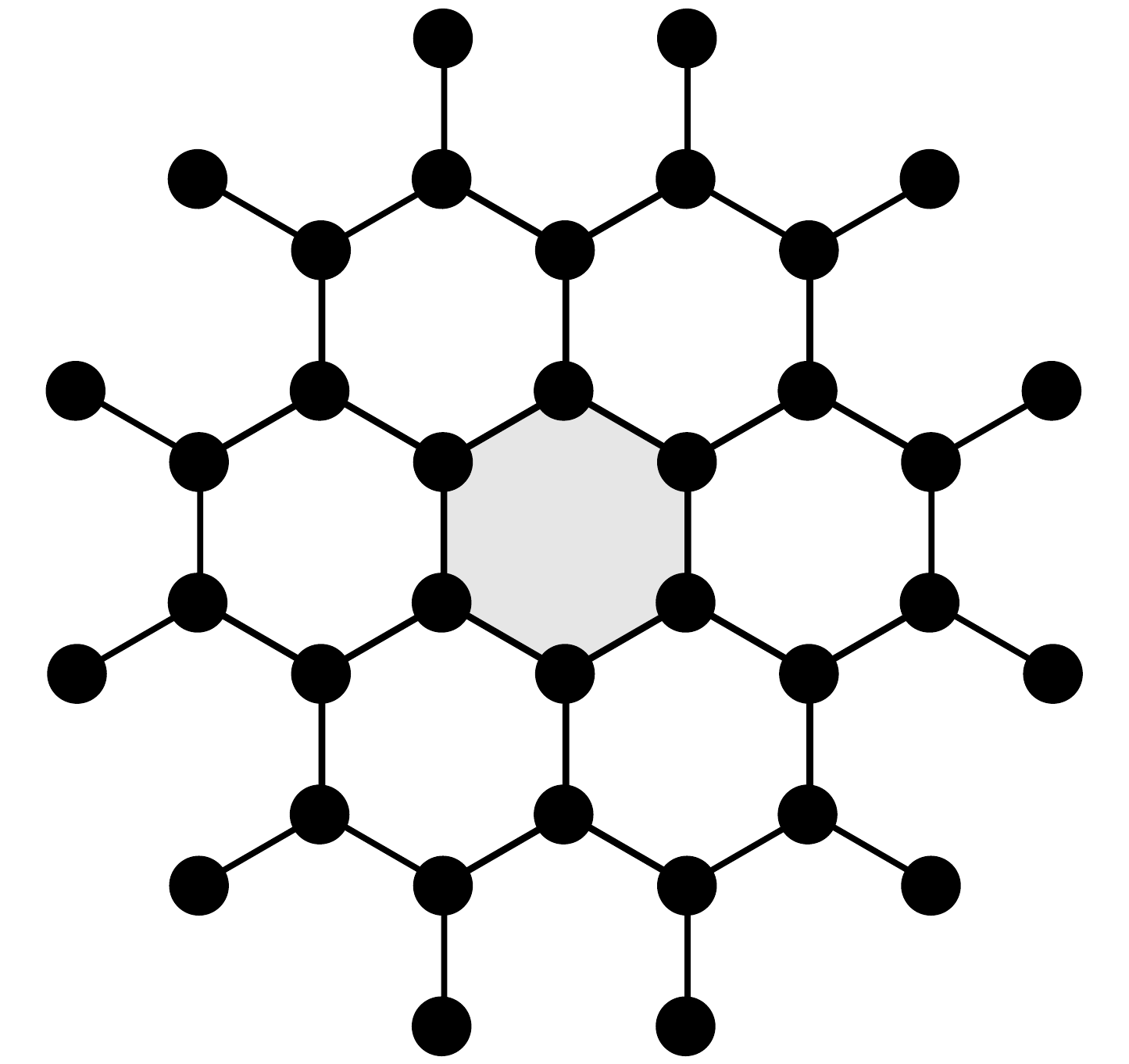}} \\
$E_6$ & \includegraphics[width=\rulefigwidth]{hexgrid-rule-6face.pdf} && \\
\hline&&&\\[-2ex]
&& \multicolumn{2}{c|}{\multirow{4}{\bigconstraintwidth}{\includegraphics[width=\bigconstraintwidth]{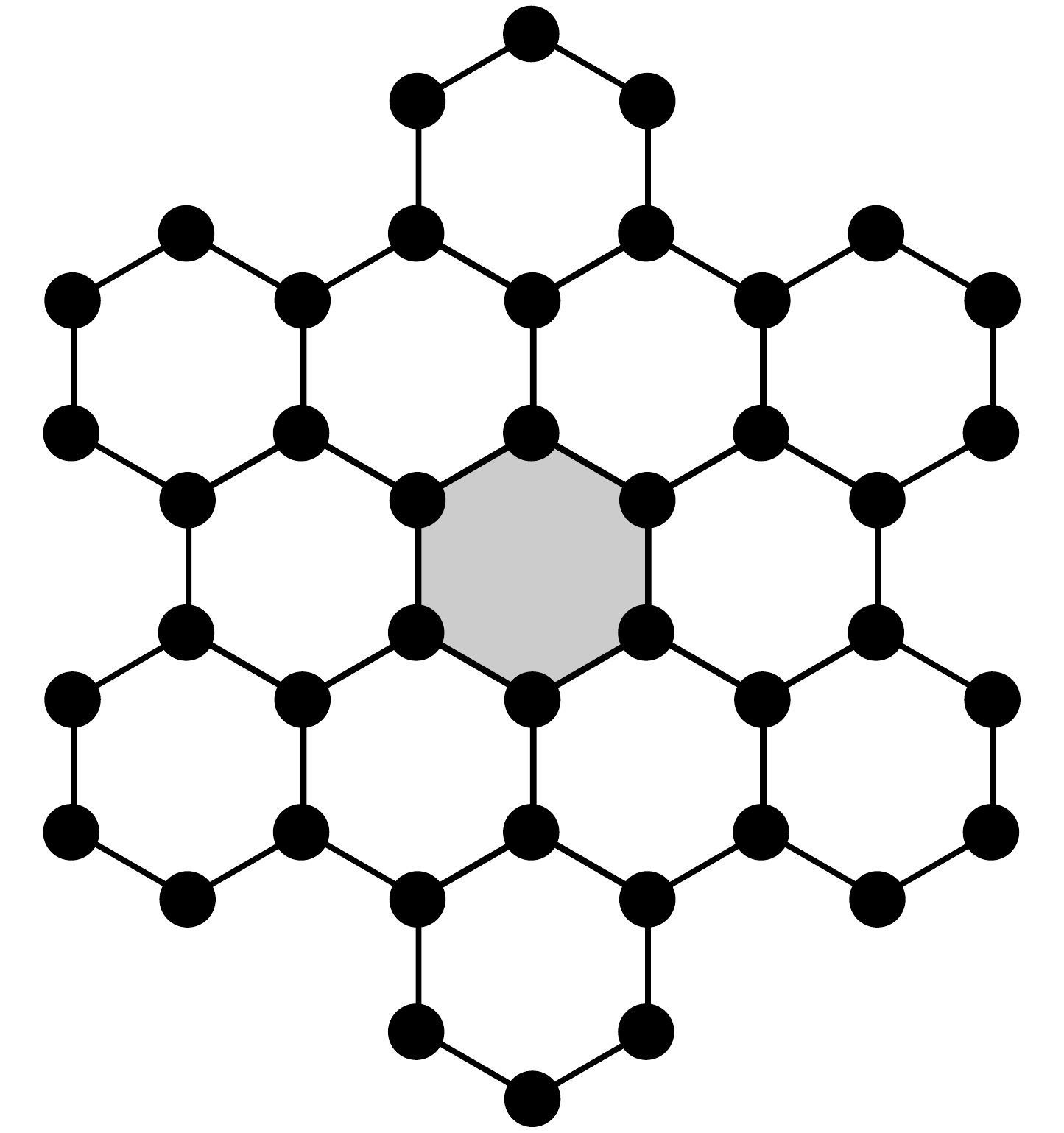}}}\\[-3ex]
$N$ & \includegraphics[width=\rulefigwidth]{hexgrid-rule-n.pdf} & &\\
$E_{1,3}$ & \includegraphics[width=\rulefigwidth]{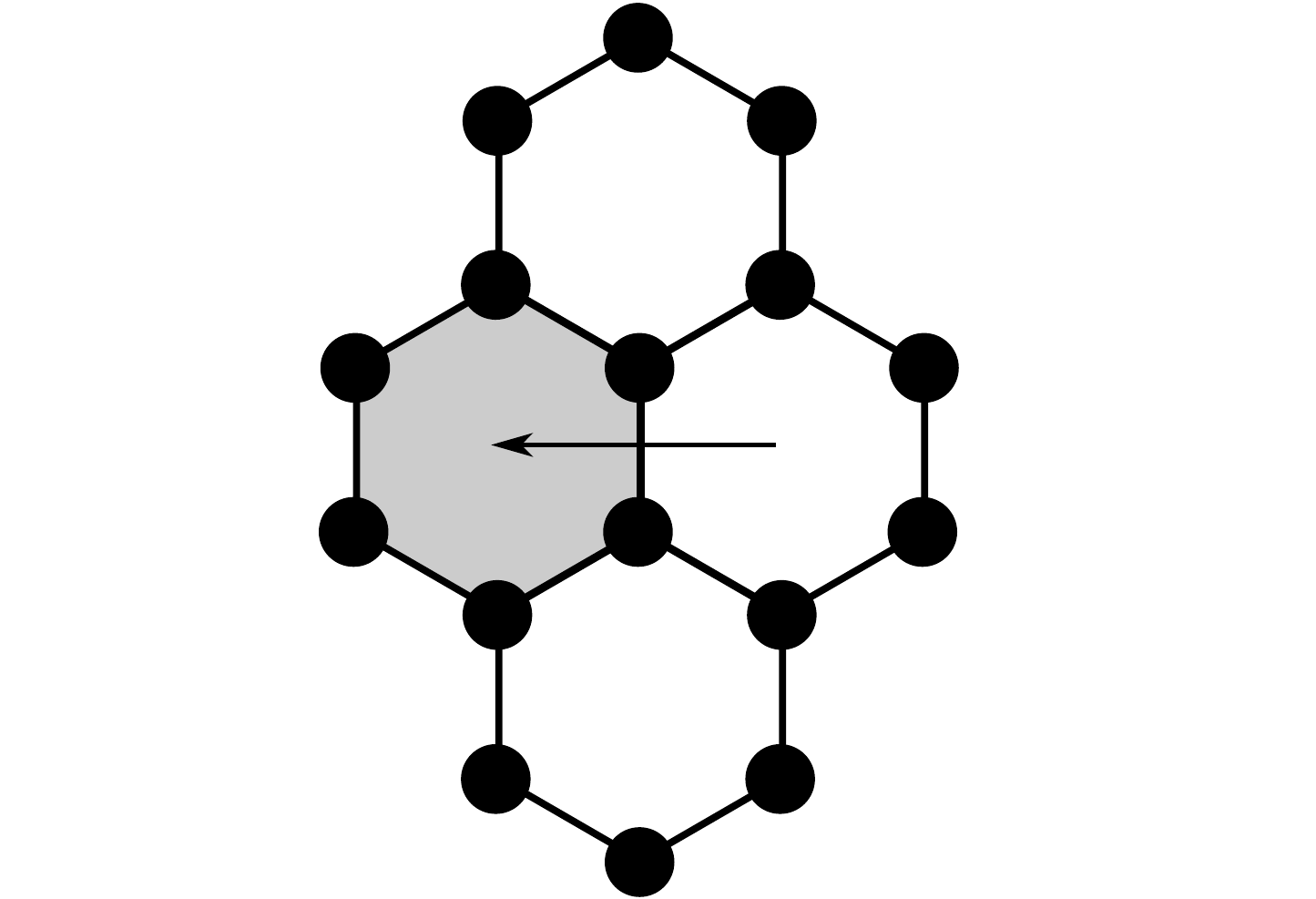} &&\\
$Y_{\operatorname{size}}$ & \includegraphics[width=\rulefigwidth]{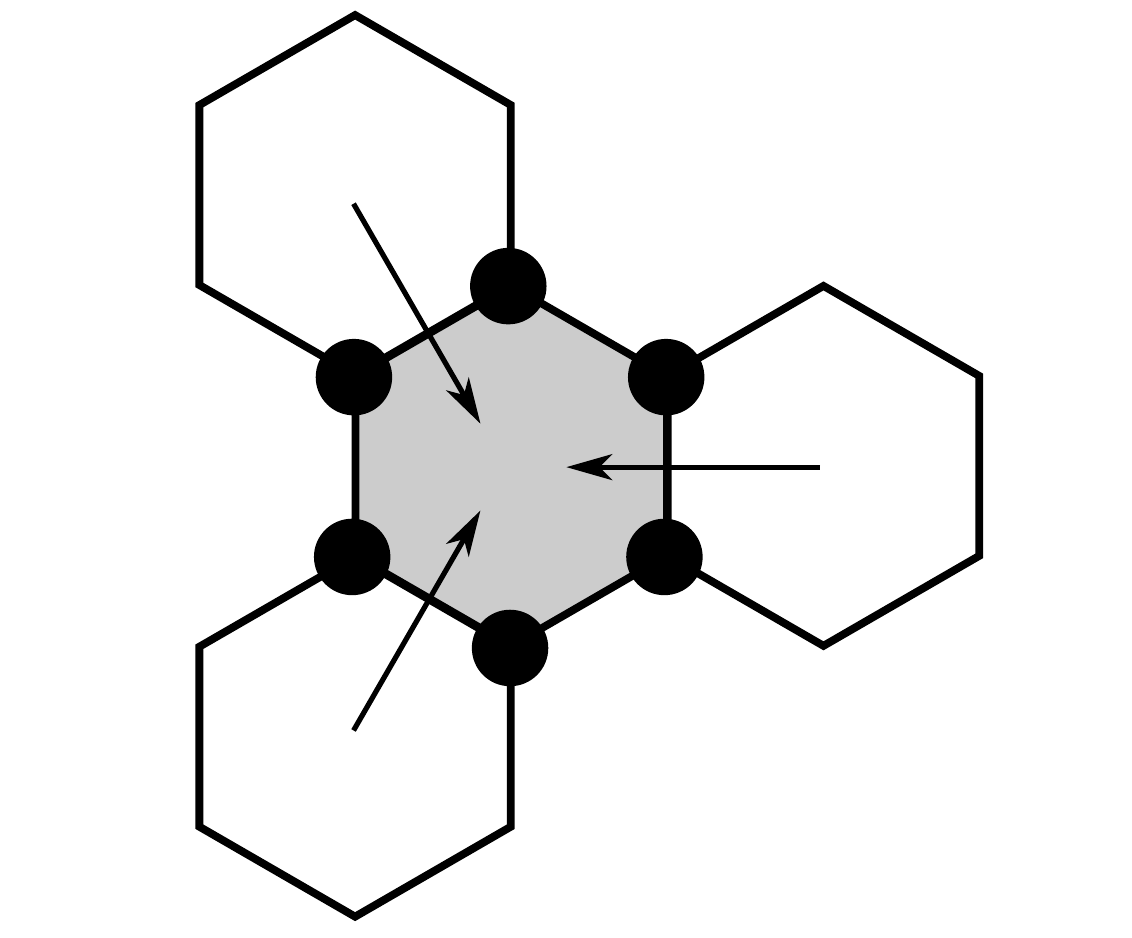} && \\
\hline
\end{tabular}
}

\caption{\label{tab:hexagonalrules}Various Rules and Constraint Configurations in the Hexagonal Grid.}
\end{table}

\begin{table}[p]\small

\centering
\def\rulefigwidth{0.85in}
\def\constraintwidth{0.85in}
\def\bigconstraintwidth{1.7in}
\def\medconstraintwidth{1.3in}
\mbox{
\begin{tabular}[h]{|m{0.1in}m{\rulefigwidth}|m{\constraintwidth}m{\constraintwidth}|}
\hline
\multicolumn{2}{|c|}{\textbf{Rules} }
&
\multicolumn{2}{c|}{\textbf{Constraint Configurations}}\\
\hline&&&\\[-3ex]
\hline&&&\\[-2ex]
$V_1$ &  \includegraphics[width=\rulefigwidth]{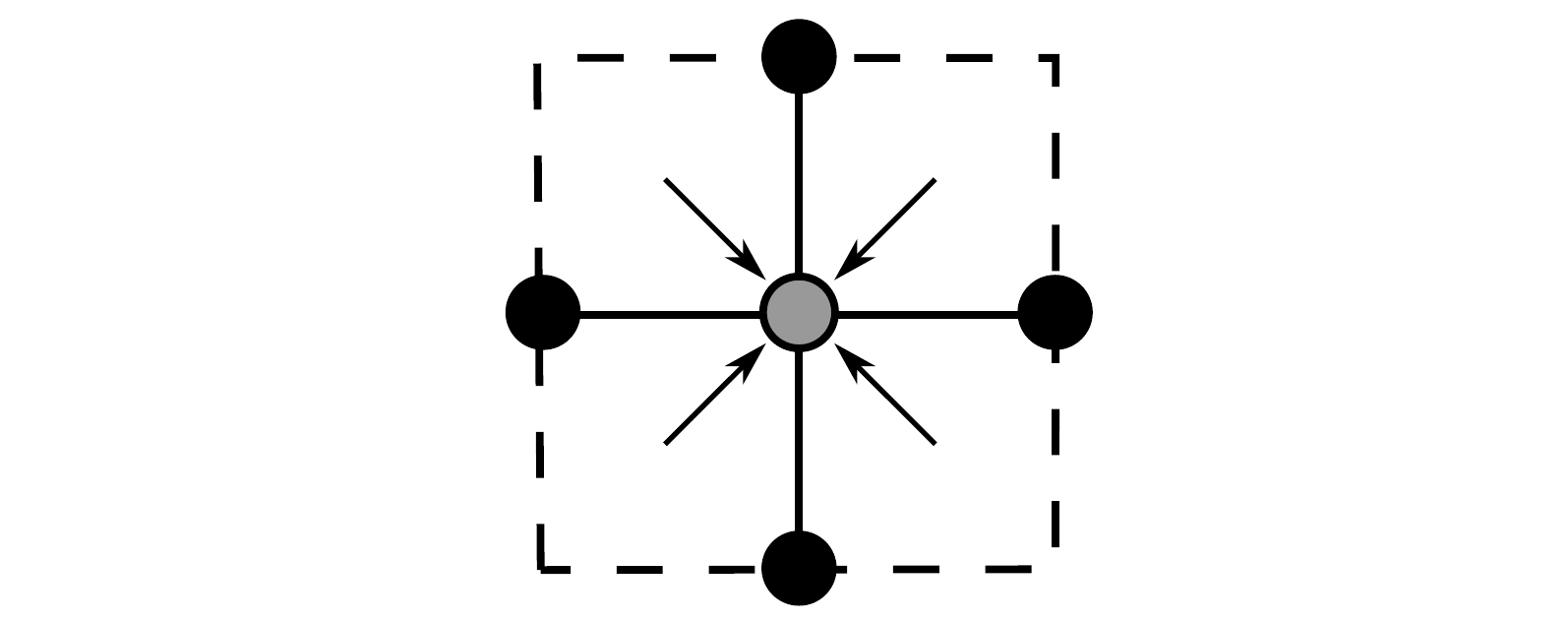} 
& 
\includegraphics[width=\constraintwidth]{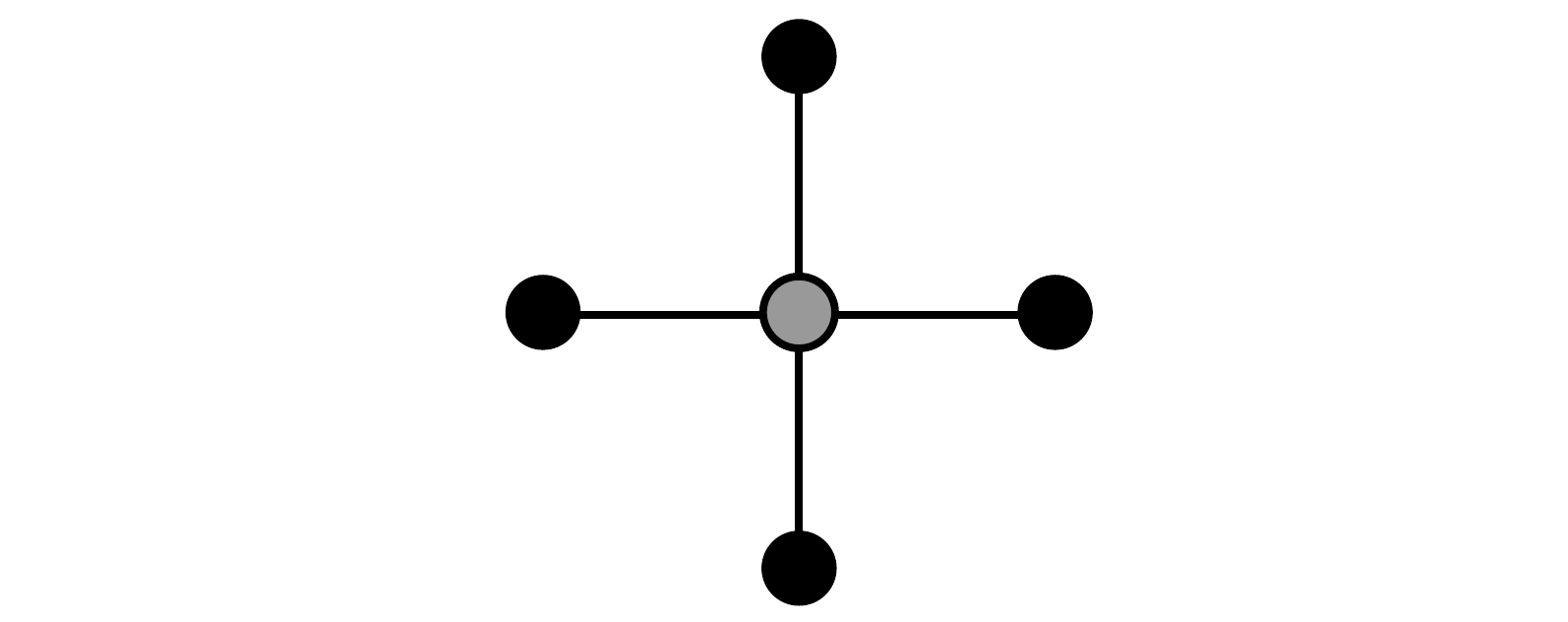} 
& 
\includegraphics[width=\constraintwidth]{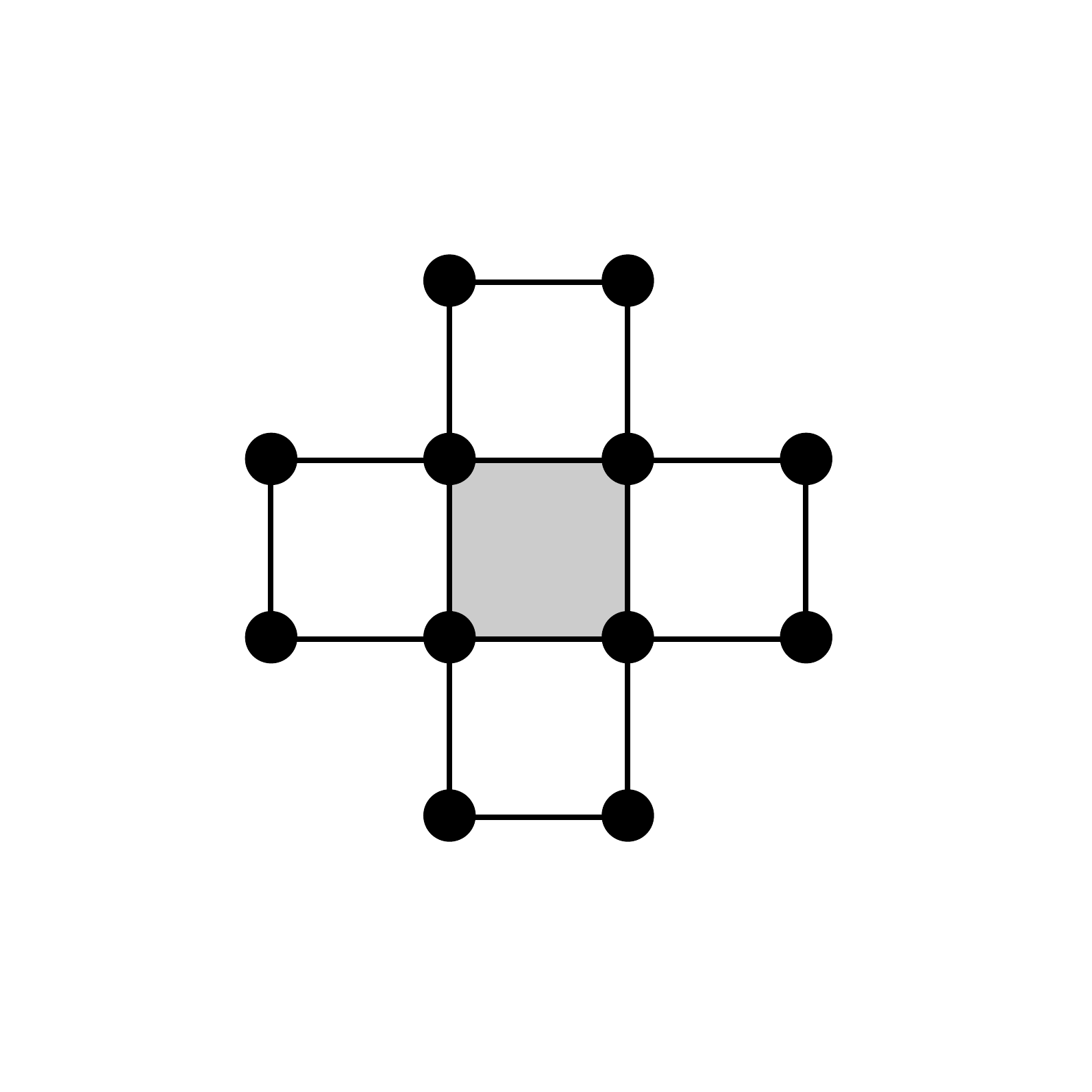} \\
\hline&&&\\[-2ex]
$N$ &  \includegraphics[width=\rulefigwidth]{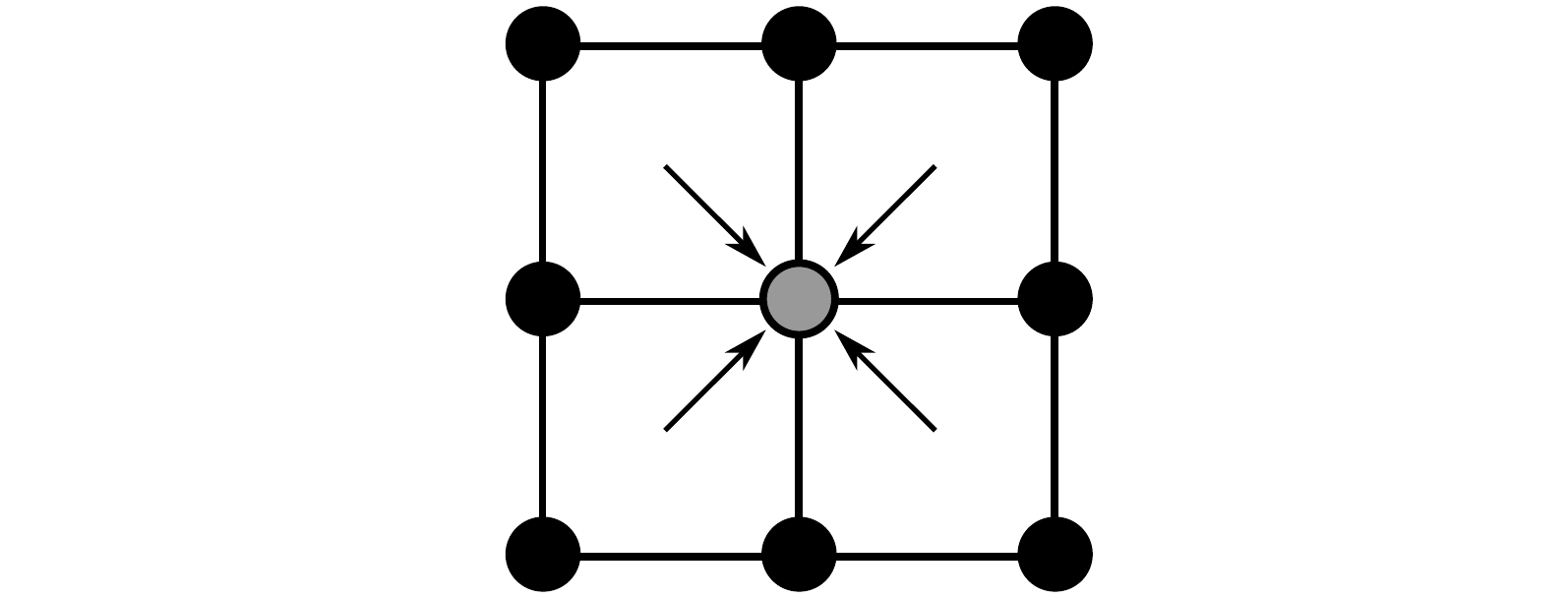} 
& 
\includegraphics[width=\constraintwidth]{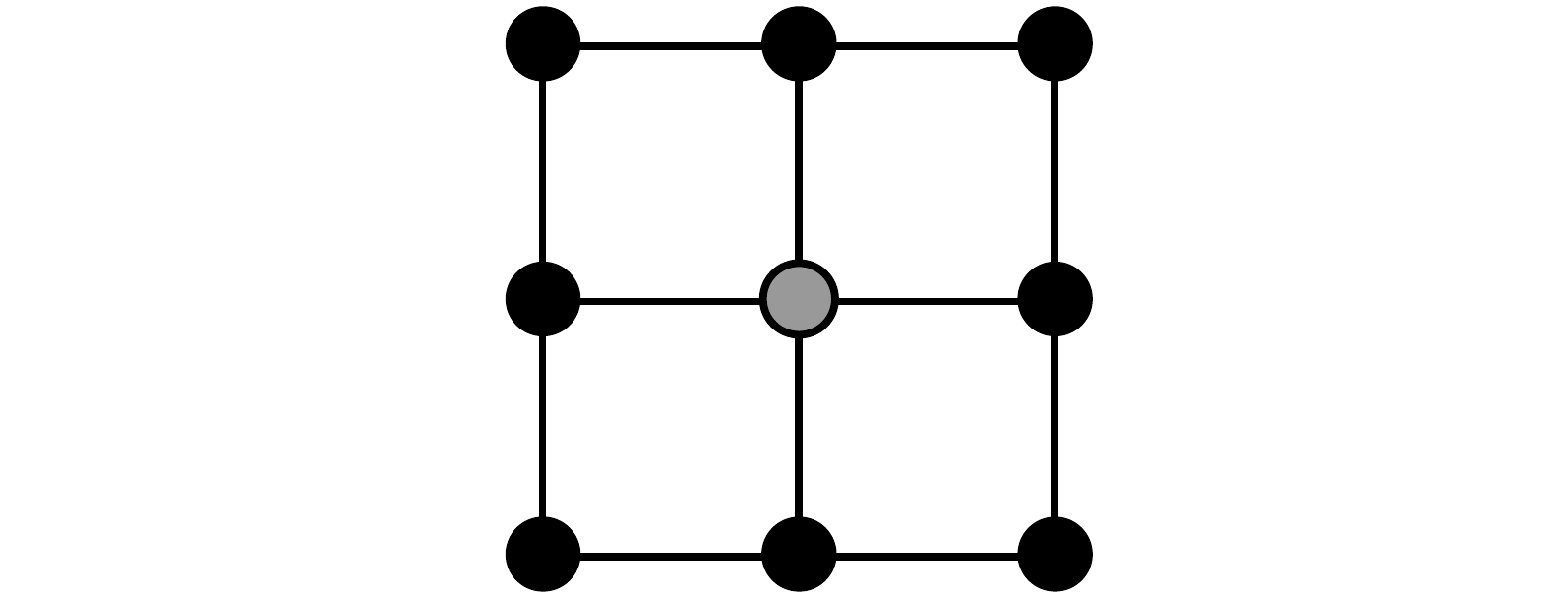} 
& 
\includegraphics[width=\constraintwidth]{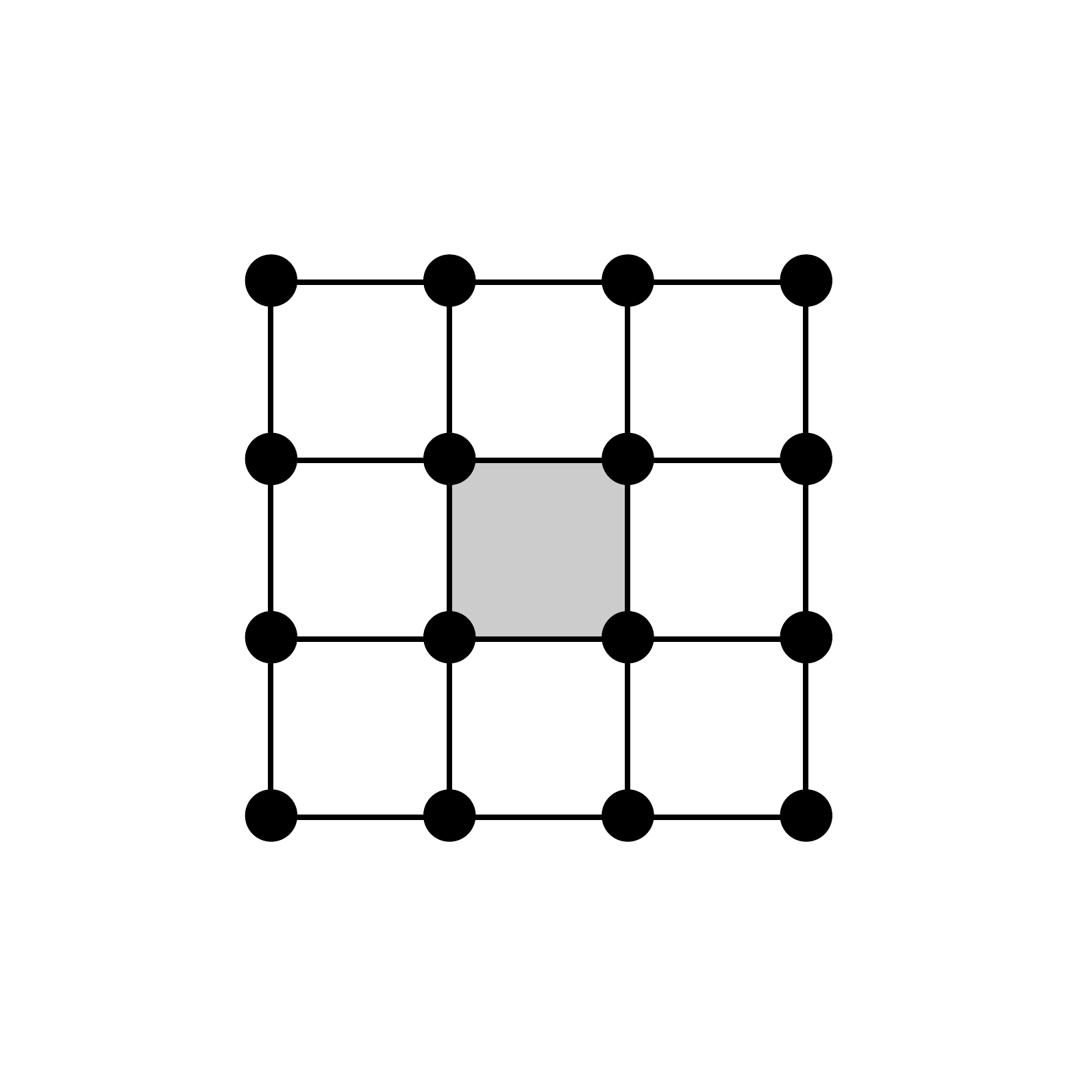} \\
\hline&&&\\[-2ex]
$V_2$ & 
\includegraphics[width=\rulefigwidth]{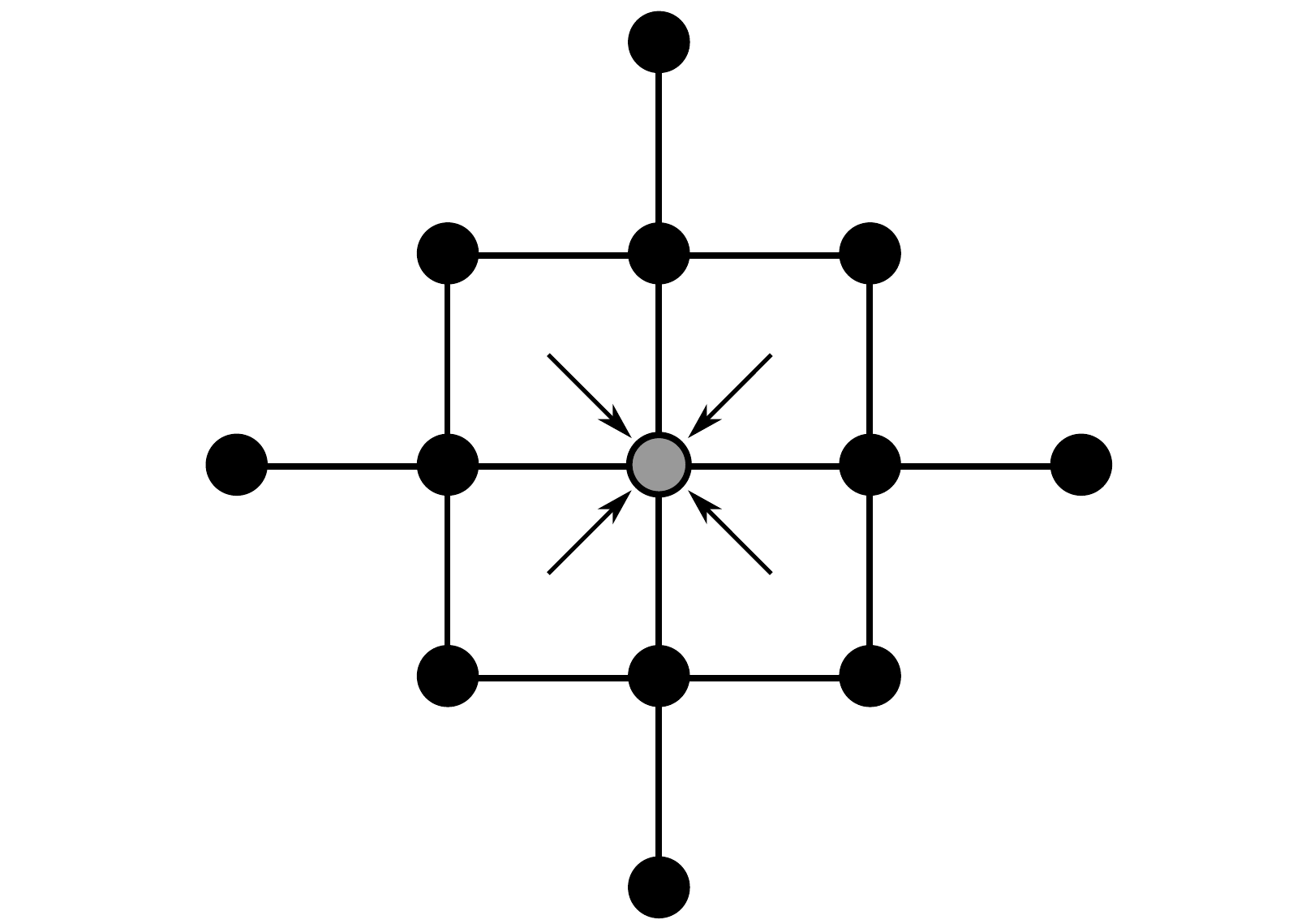} 
& 
\includegraphics[width=\constraintwidth]{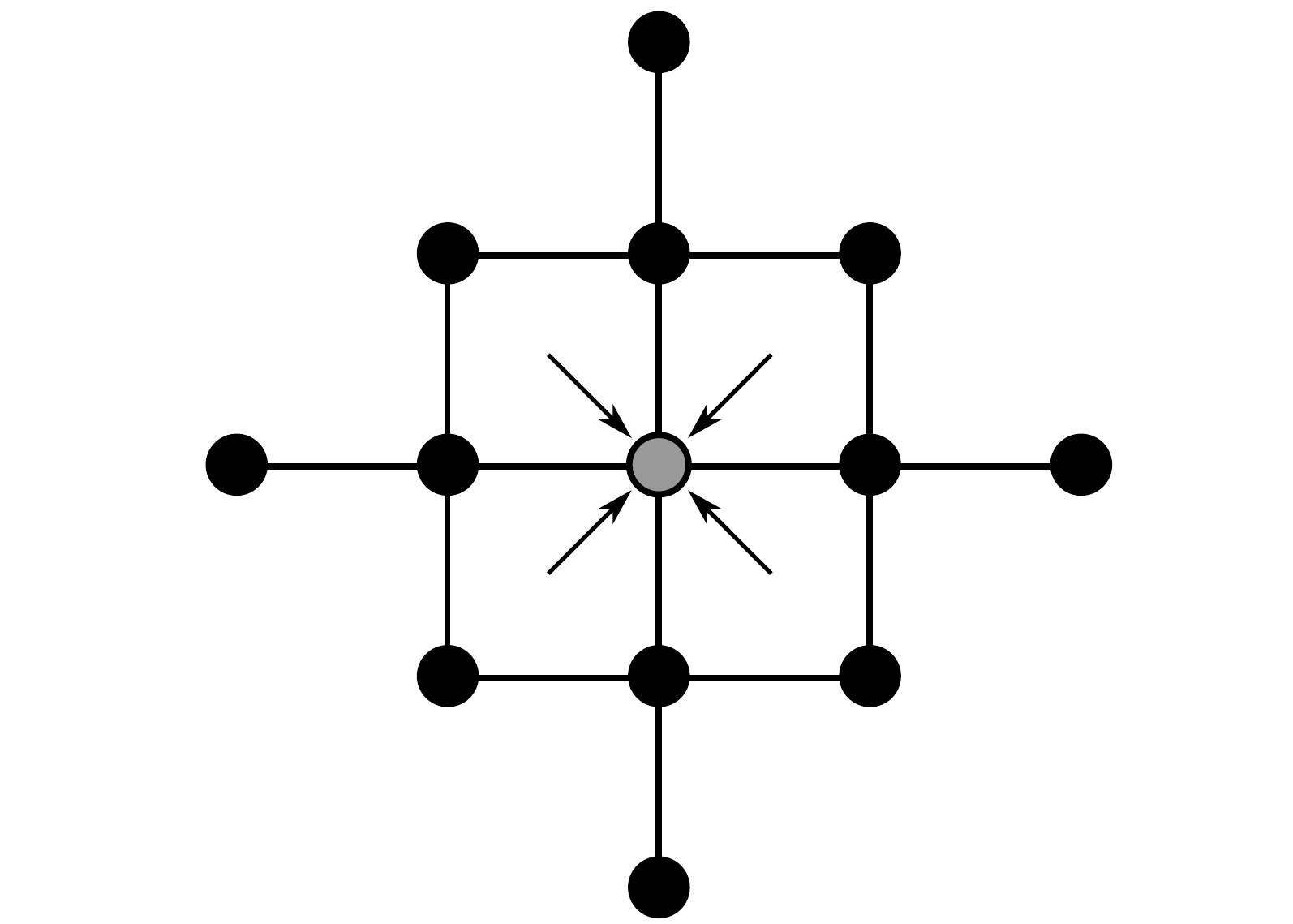} 
& 
\includegraphics[width=\constraintwidth]{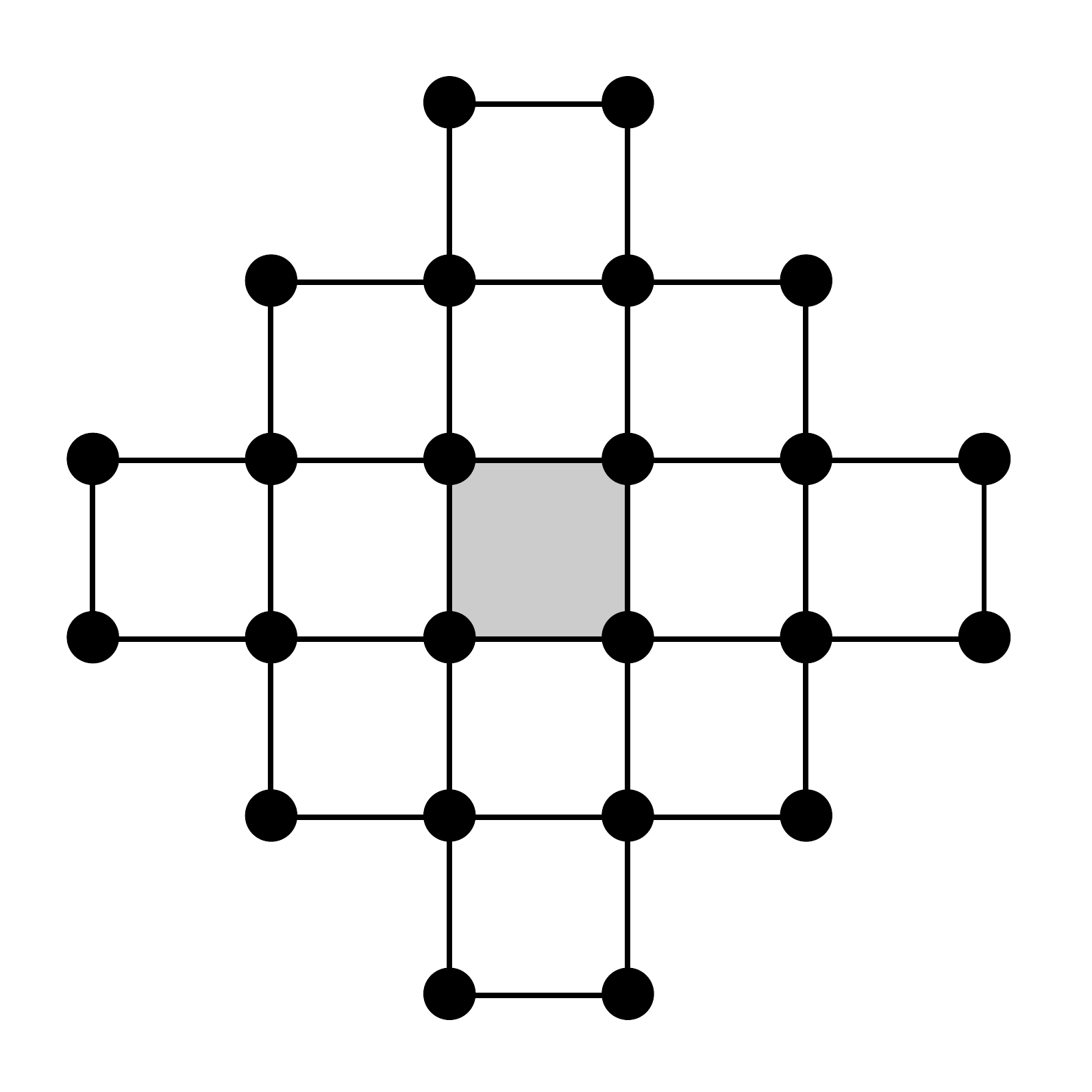} \\
\hline&&&\\[-2ex]
$C_1$ & 
\includegraphics[width=\rulefigwidth]{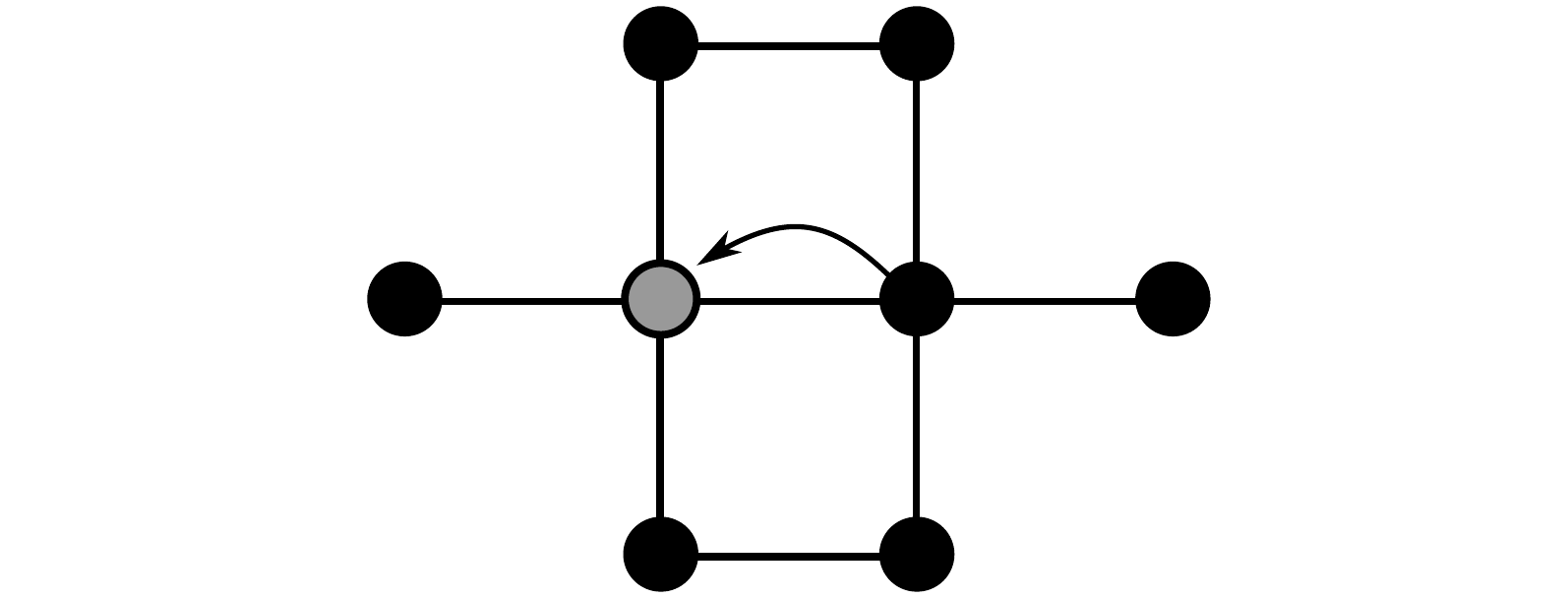} 
& \includegraphics[width=\constraintwidth]{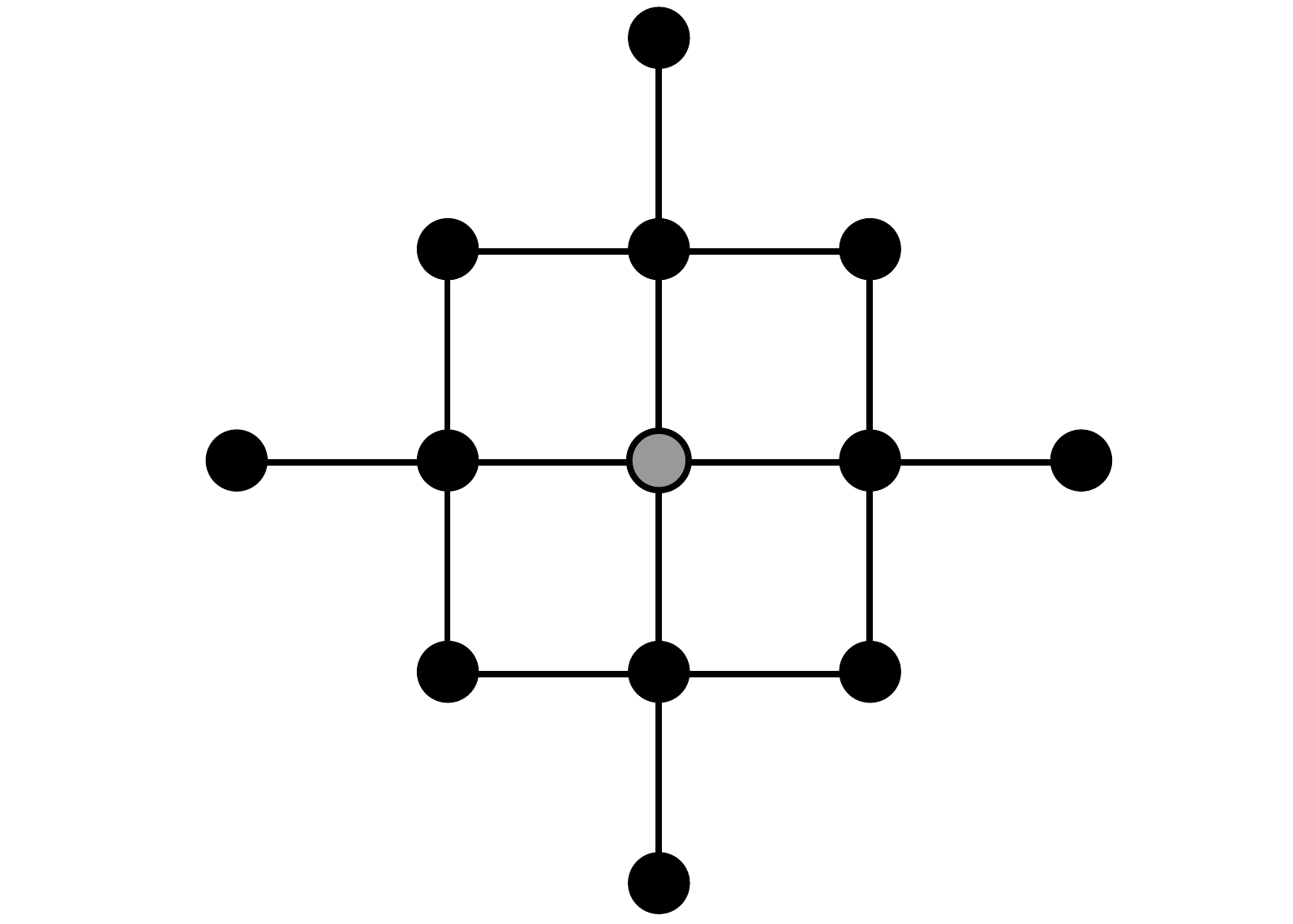} &
\\
\hline&&&\\[-2ex]
$C_{1}^+$ & 
\includegraphics[width=\rulefigwidth]{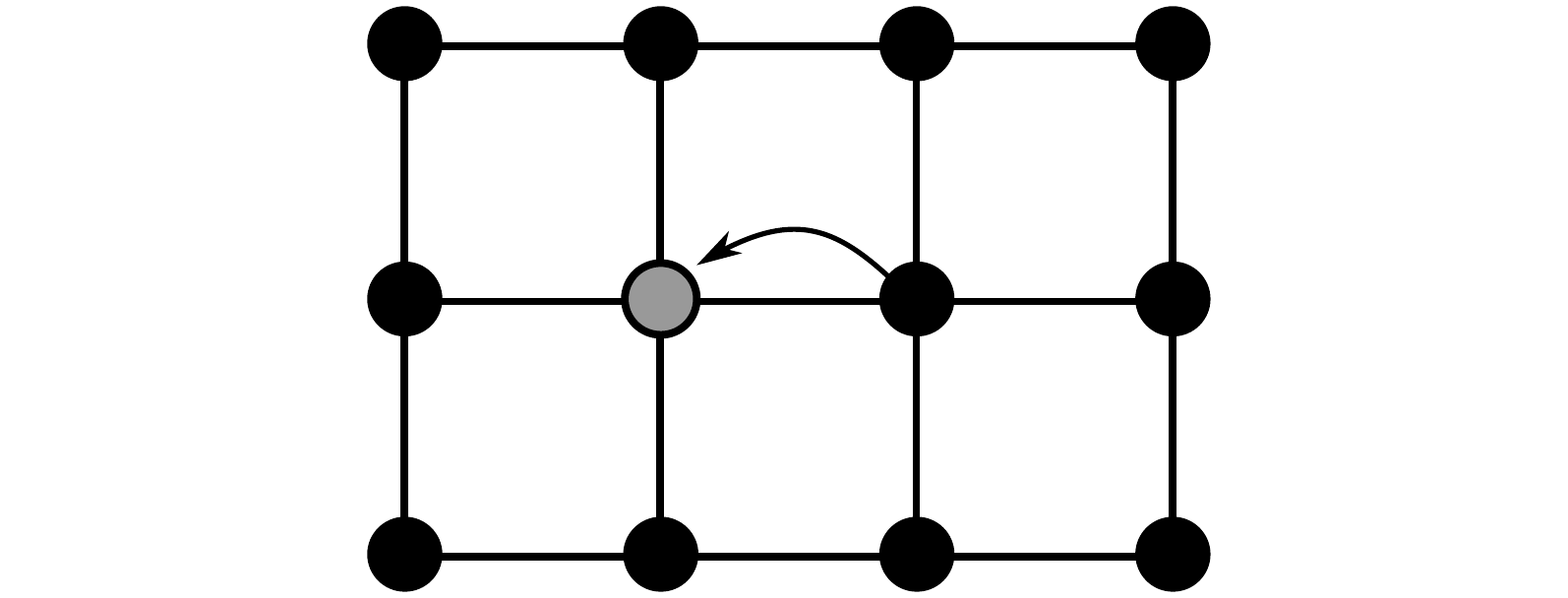} 
& \includegraphics[width=\constraintwidth]{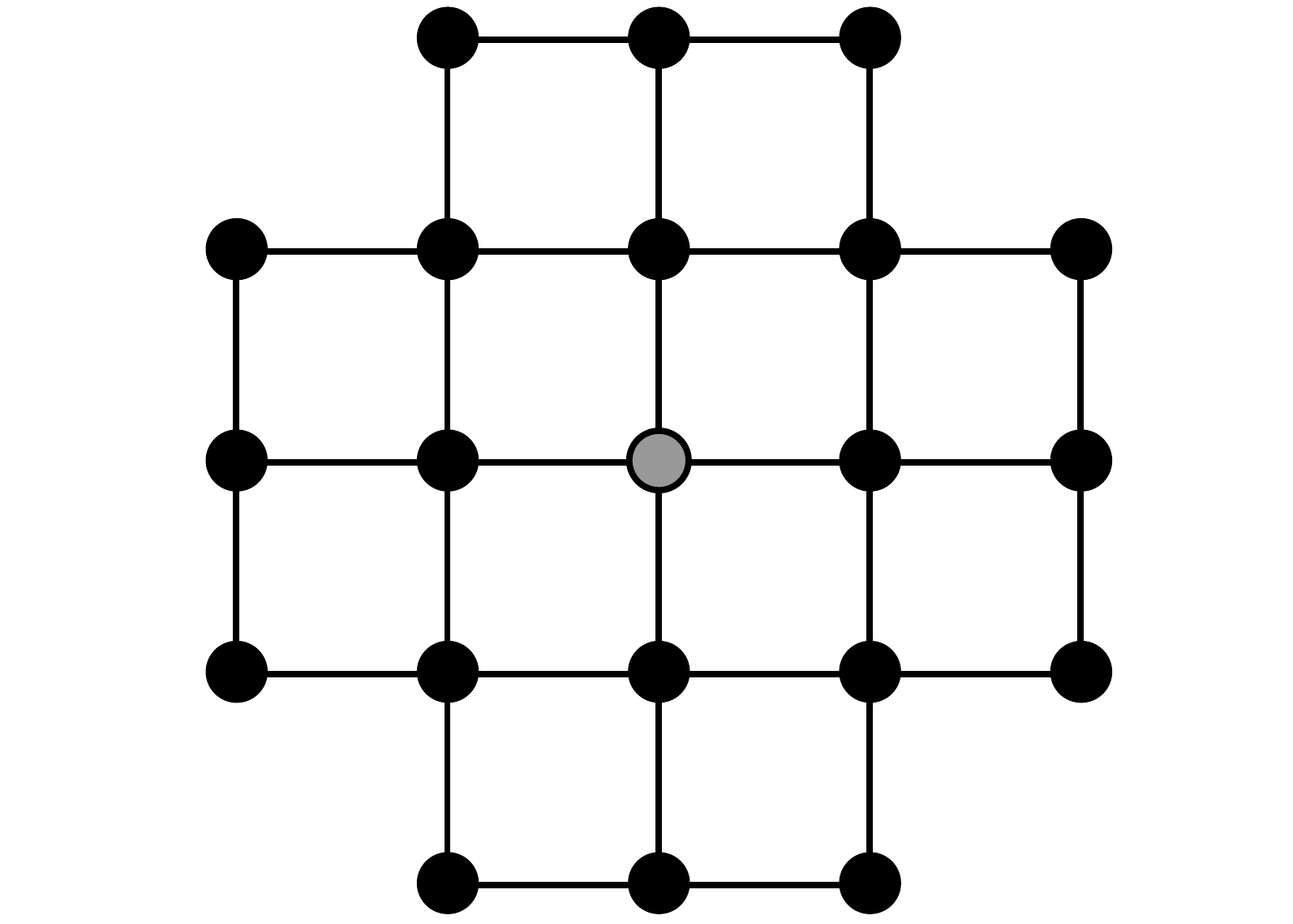} &
\\
\hline&&&\\[-2ex]
\multirow{2}{0.1in}{$C_{2}$} & 
\includegraphics[width=\rulefigwidth]{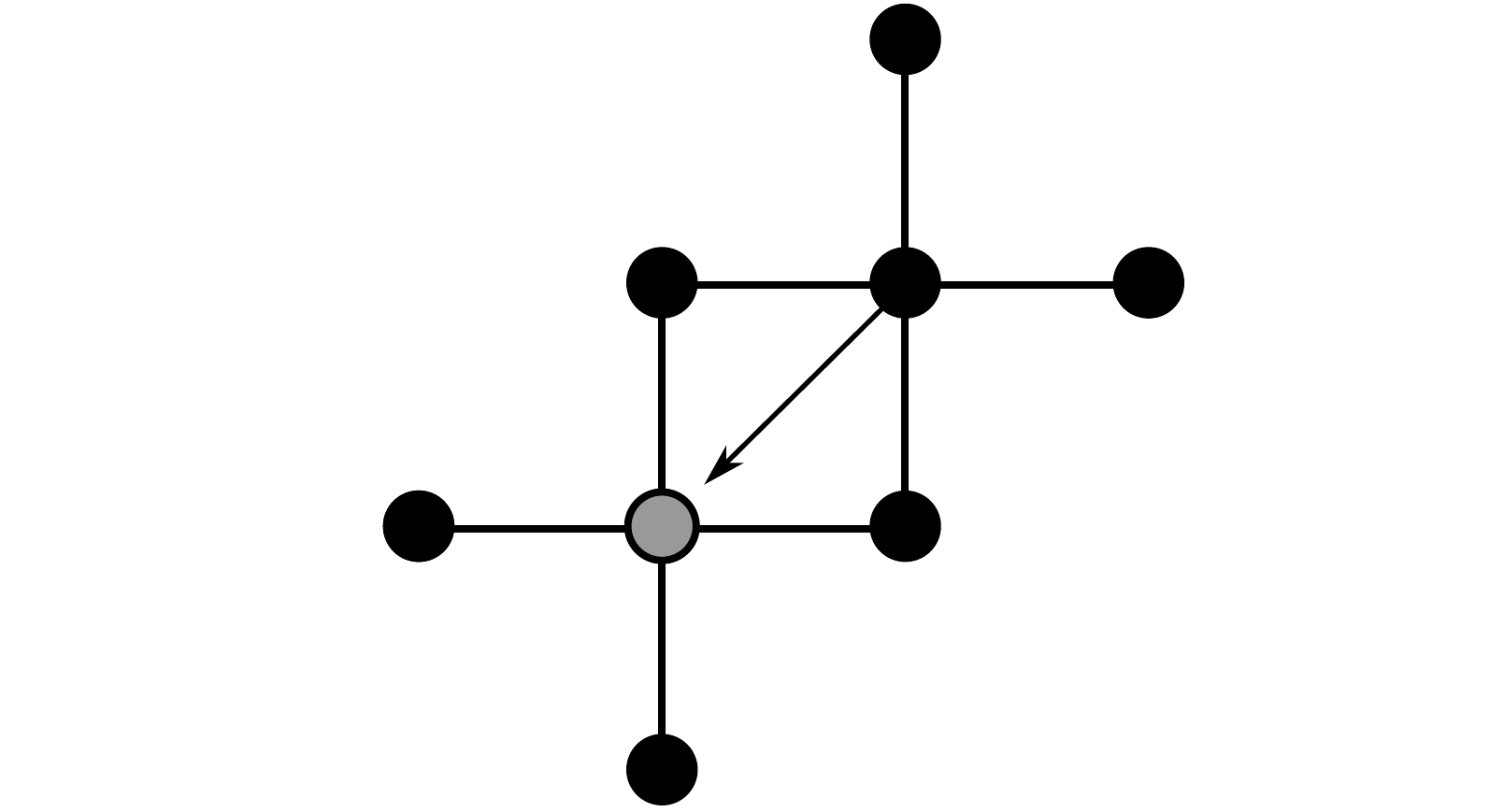} 
& 
\multirow{2}{\constraintwidth}{\includegraphics[width=\constraintwidth]{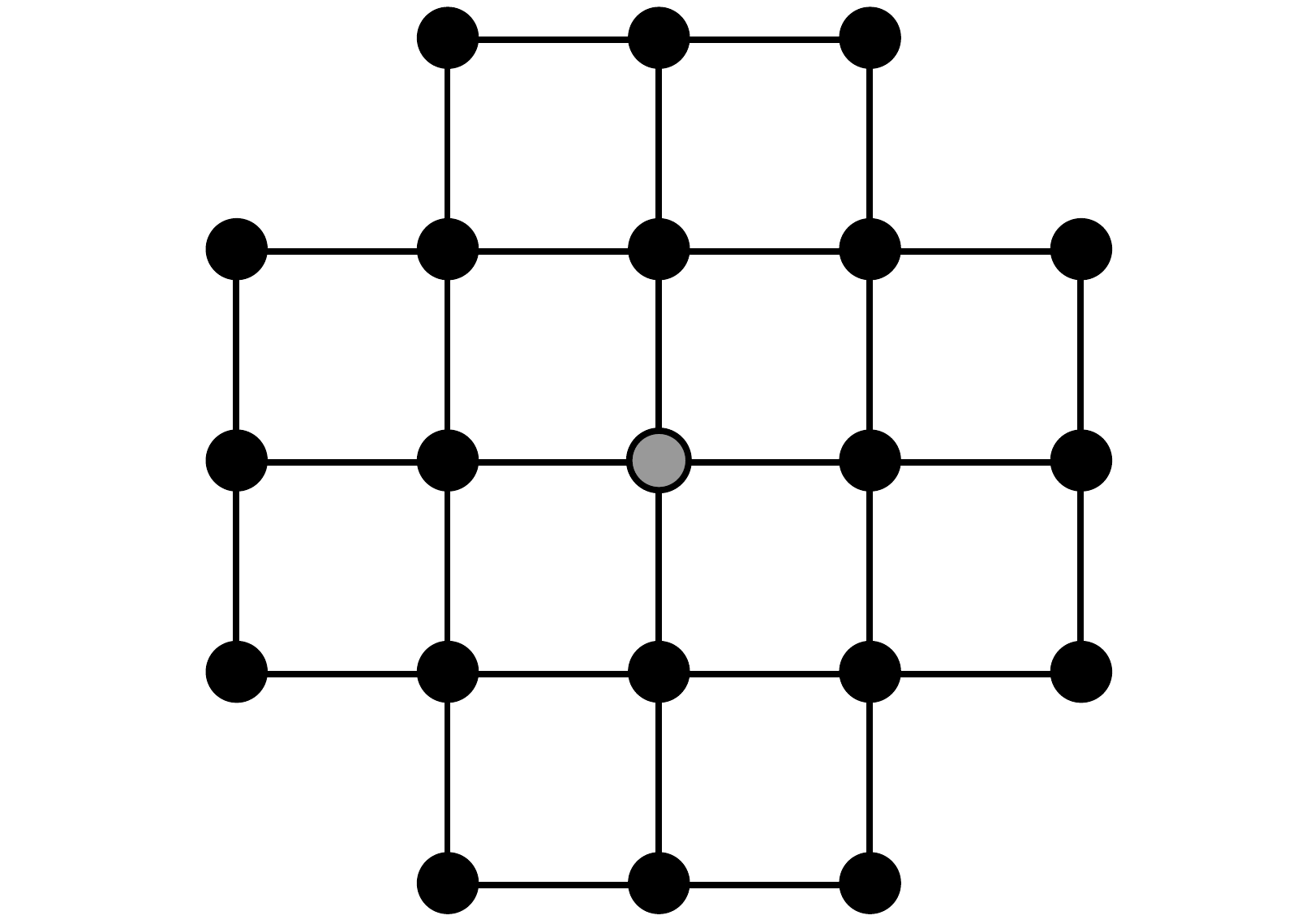}} &
\\
& 
\includegraphics[width=\rulefigwidth]{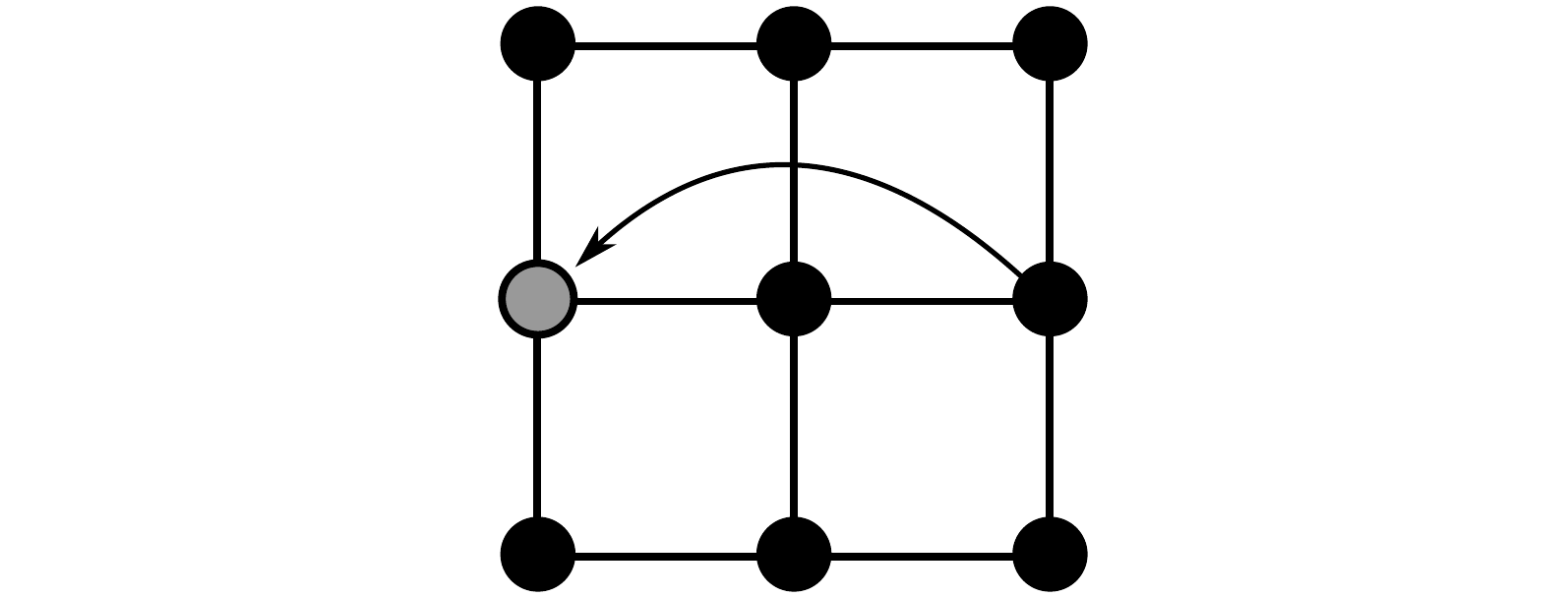} &&
\\
\hline&&&\\[-2ex]
& &
\multirow{3}{\constraintwidth}{\includegraphics[width=\constraintwidth]{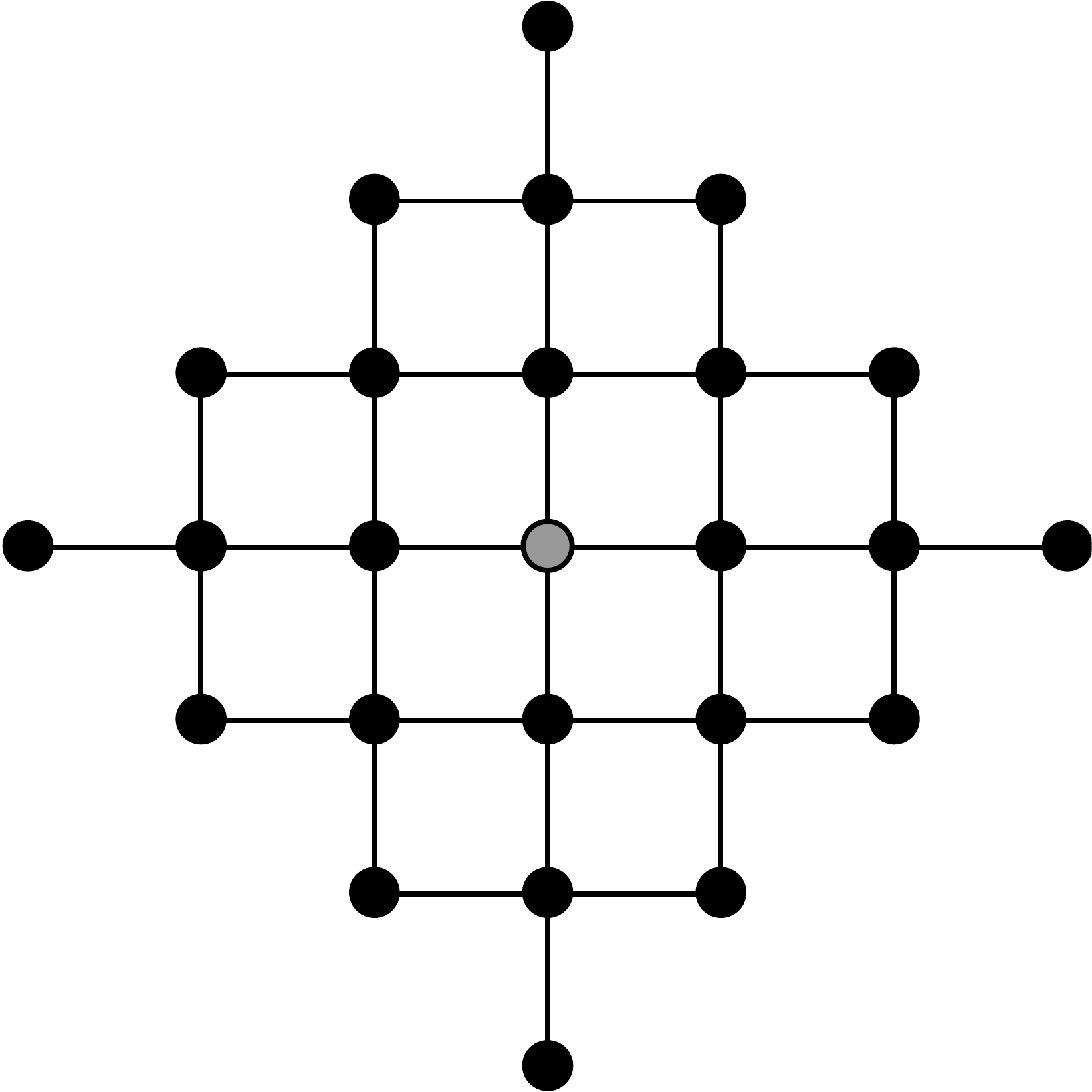}} &
\\[-2ex]
\multirow{2}{0.1in}{$C_{2}^+$} & 
\includegraphics[width=\rulefigwidth]{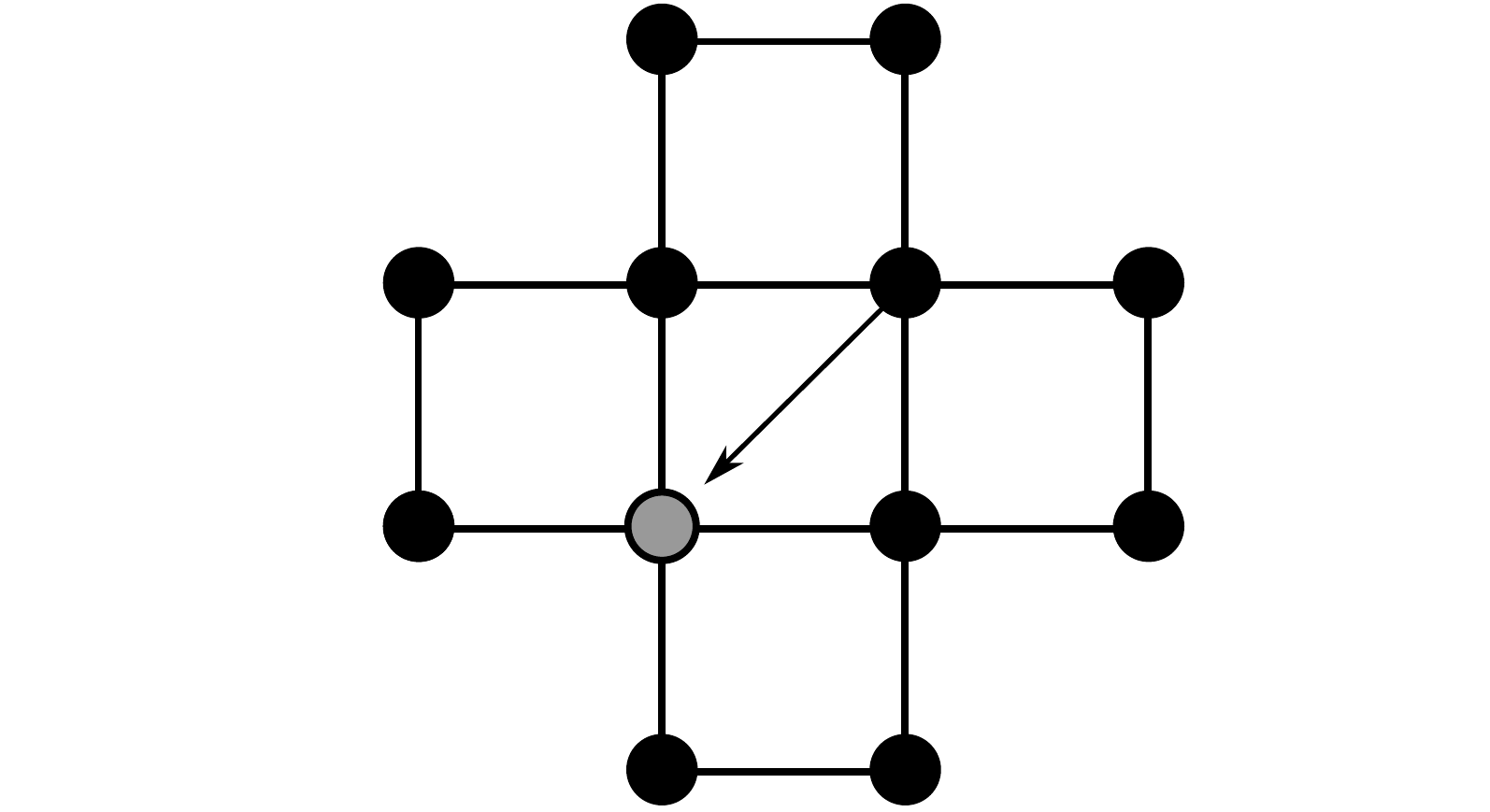} &&
\\
& 
\includegraphics[width=\rulefigwidth]{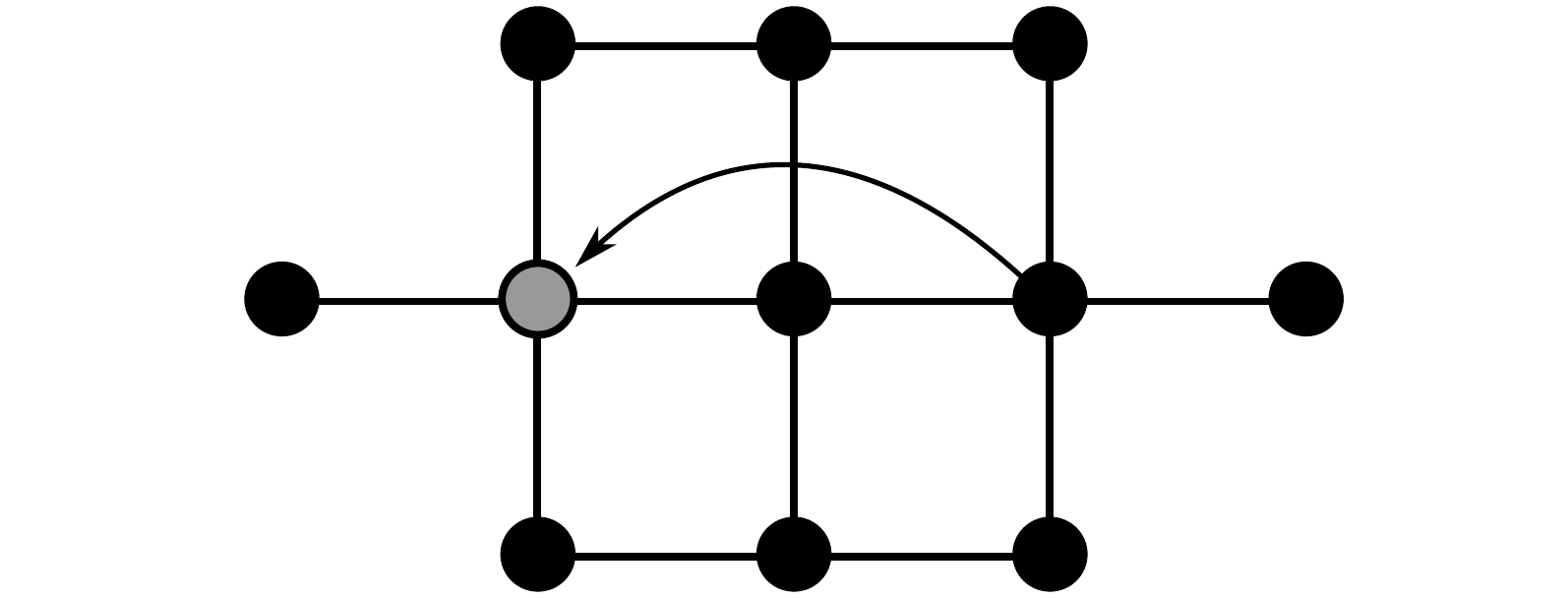} &&
\\[3ex]
\hline
\end{tabular}
\quad
\begin{tabular}[h]{|m{0.1in}m{\rulefigwidth}|m{\constraintwidth}m{\constraintwidth}|}
\hline
\multicolumn{2}{|c|}{\textbf{Rules} }
&
\multicolumn{2}{c|}{\textbf{Constraint Configurations}}\\
\hline&&&\\[-3ex]
\hline&&&\\[-2ex]
$C_{1}$ & 
\includegraphics[width=\rulefigwidth]{squaregrid-rule-c1.pdf} 
& 
\multicolumn{2}{c|}{\multirow{5}{\bigconstraintwidth}{\includegraphics[width=\bigconstraintwidth]{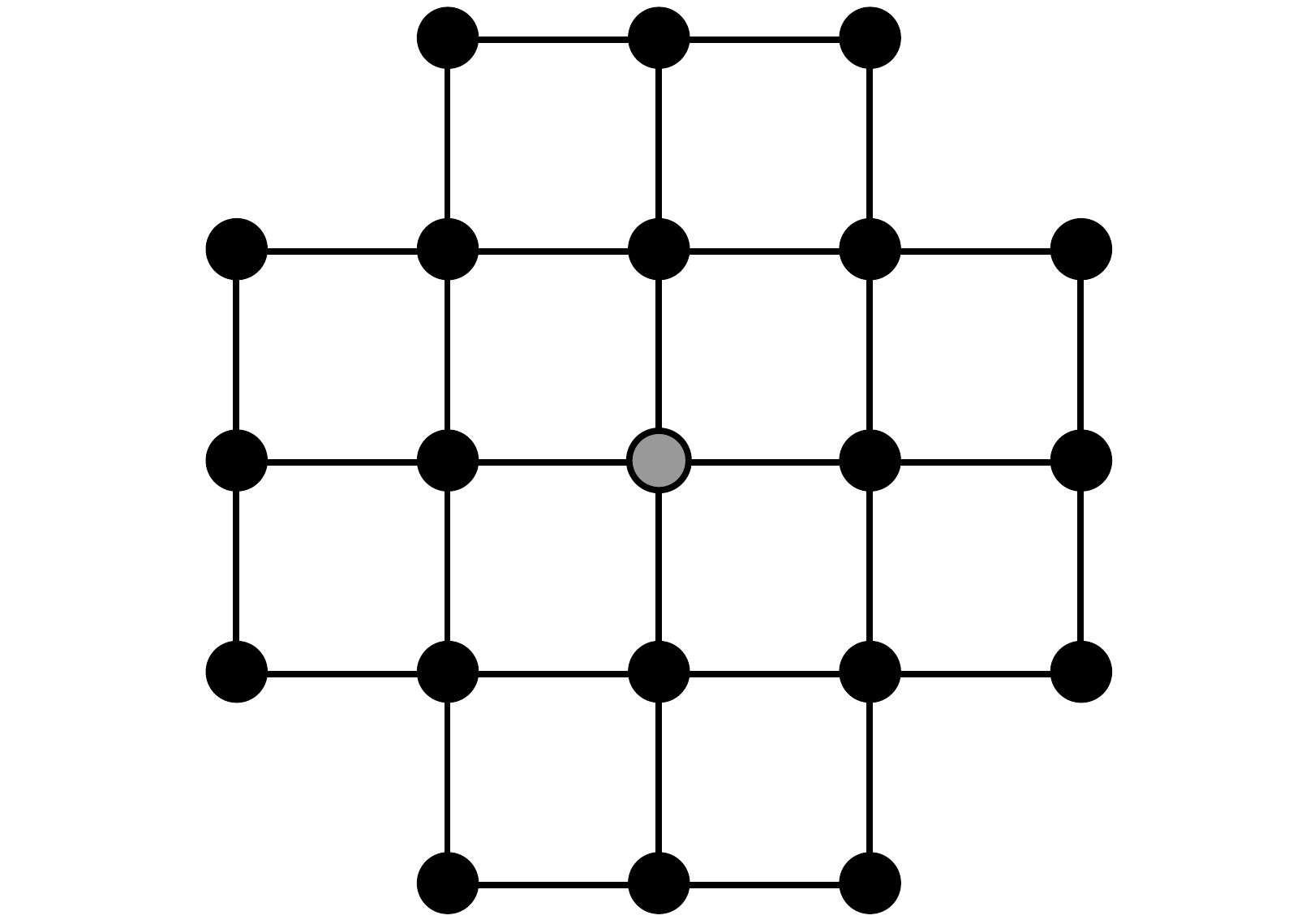}}}
\\
\multirow{2}{0.1in}{$C_{2}$} & 
\includegraphics[width=\rulefigwidth]{squaregrid-rule-c2a.pdf} 
& 
&
\\
& 
\includegraphics[width=\rulefigwidth]{squaregrid-rule-c2b.pdf} &&
\\
\multirow{2}{0.1in}{$C_{3}$} & 
\includegraphics[width=\rulefigwidth]{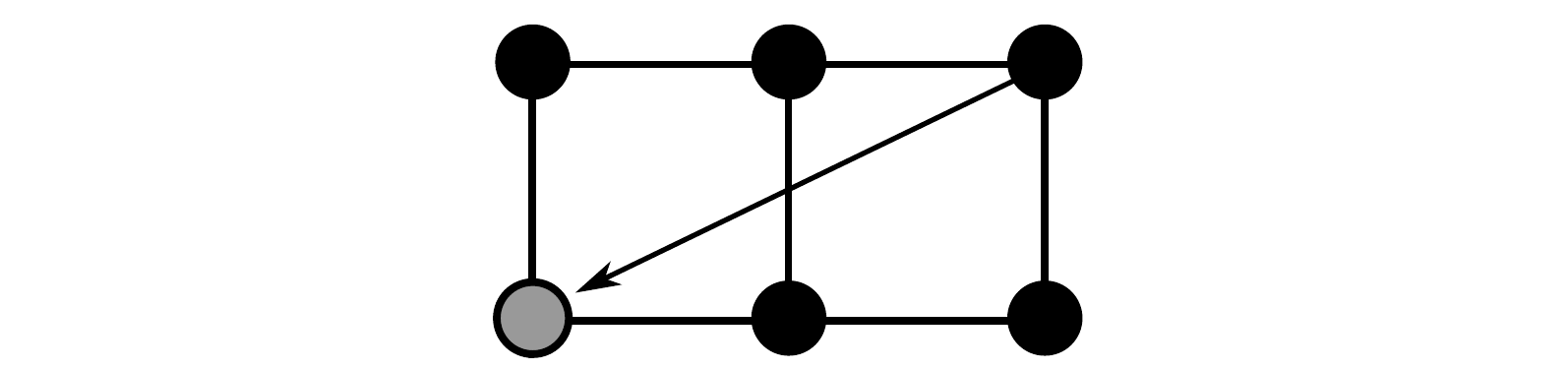} 
& 
&
\\
 & 
\includegraphics[width=\rulefigwidth]{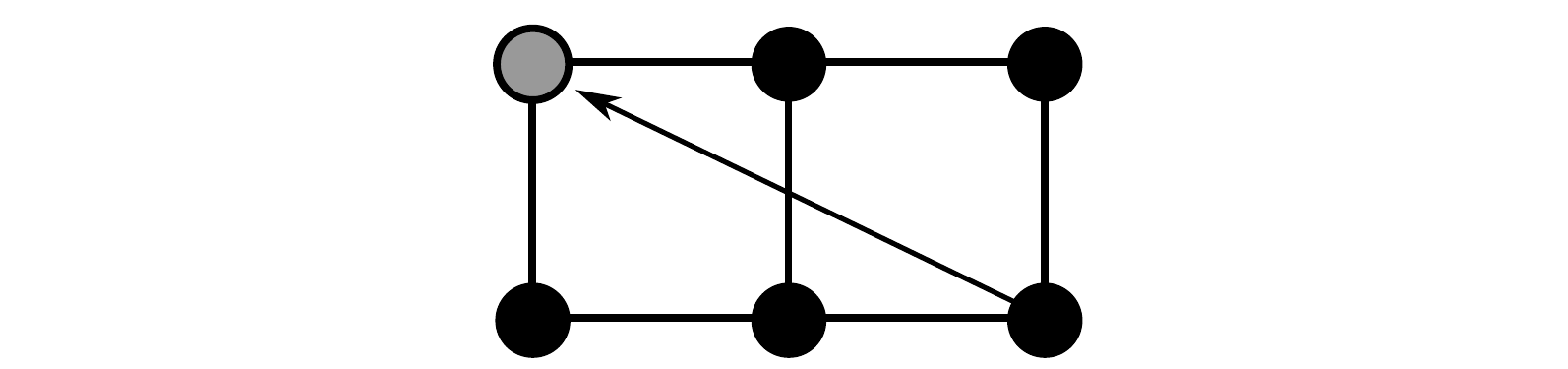} 
& 
&
\\
\hline&&&\\[-2ex]
$C_{1}^+$ & 
\includegraphics[width=\rulefigwidth]{squaregrid-rule-c1p.pdf} 
& 
\multicolumn{2}{c|}{\multirow{5}{\bigconstraintwidth}{\includegraphics[width=\bigconstraintwidth]{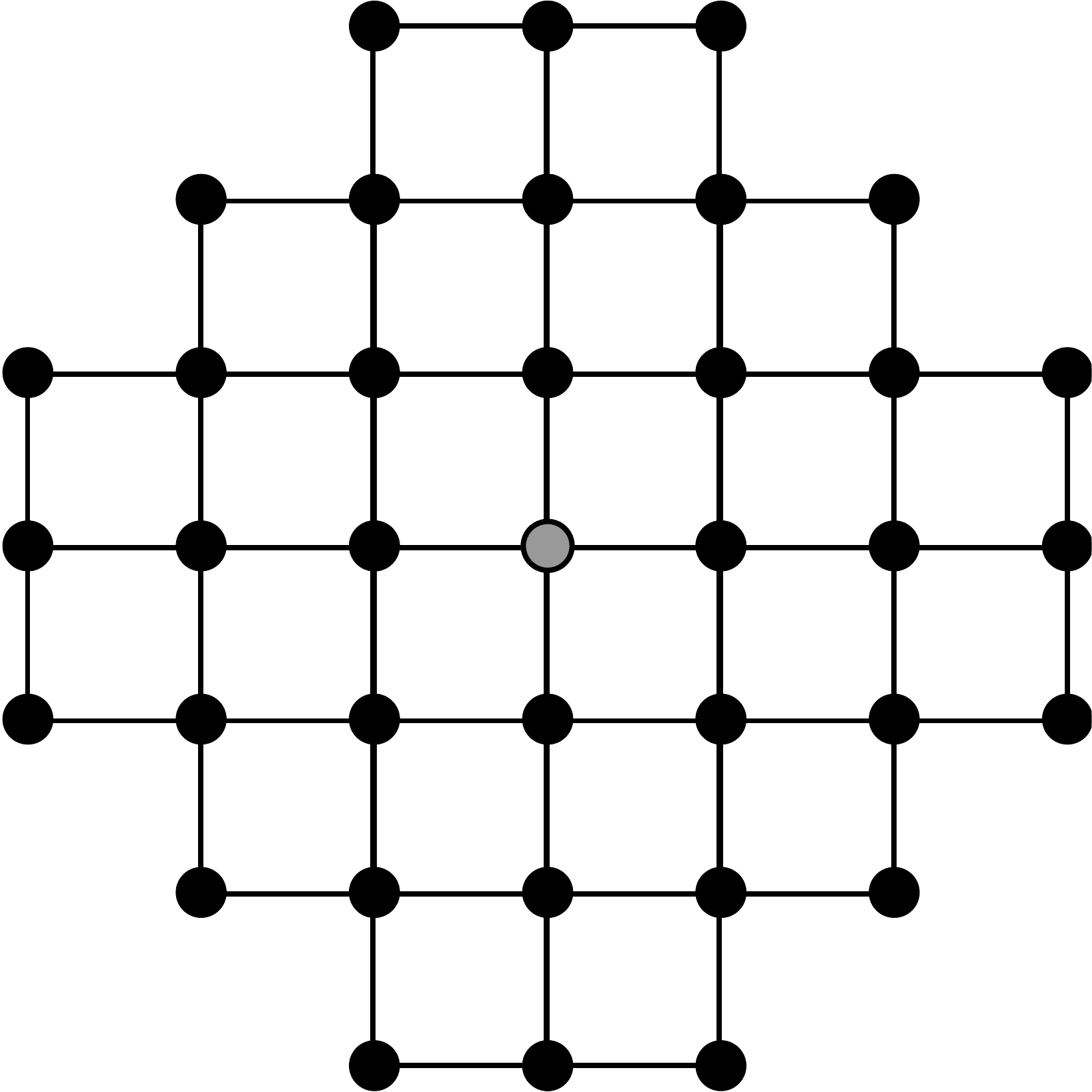}}}
\\
\multirow{2}{0.1in}{$C_{2}^+$} & 
\includegraphics[width=\rulefigwidth]{squaregrid-rule-c2ap.pdf} 
& 
&
\\
& 
\includegraphics[width=\rulefigwidth]{squaregrid-rule-c2bp.pdf} &&
\\
\multirow{2}{0.1in}{$C_{3}^+$} & 
\includegraphics[width=\rulefigwidth]{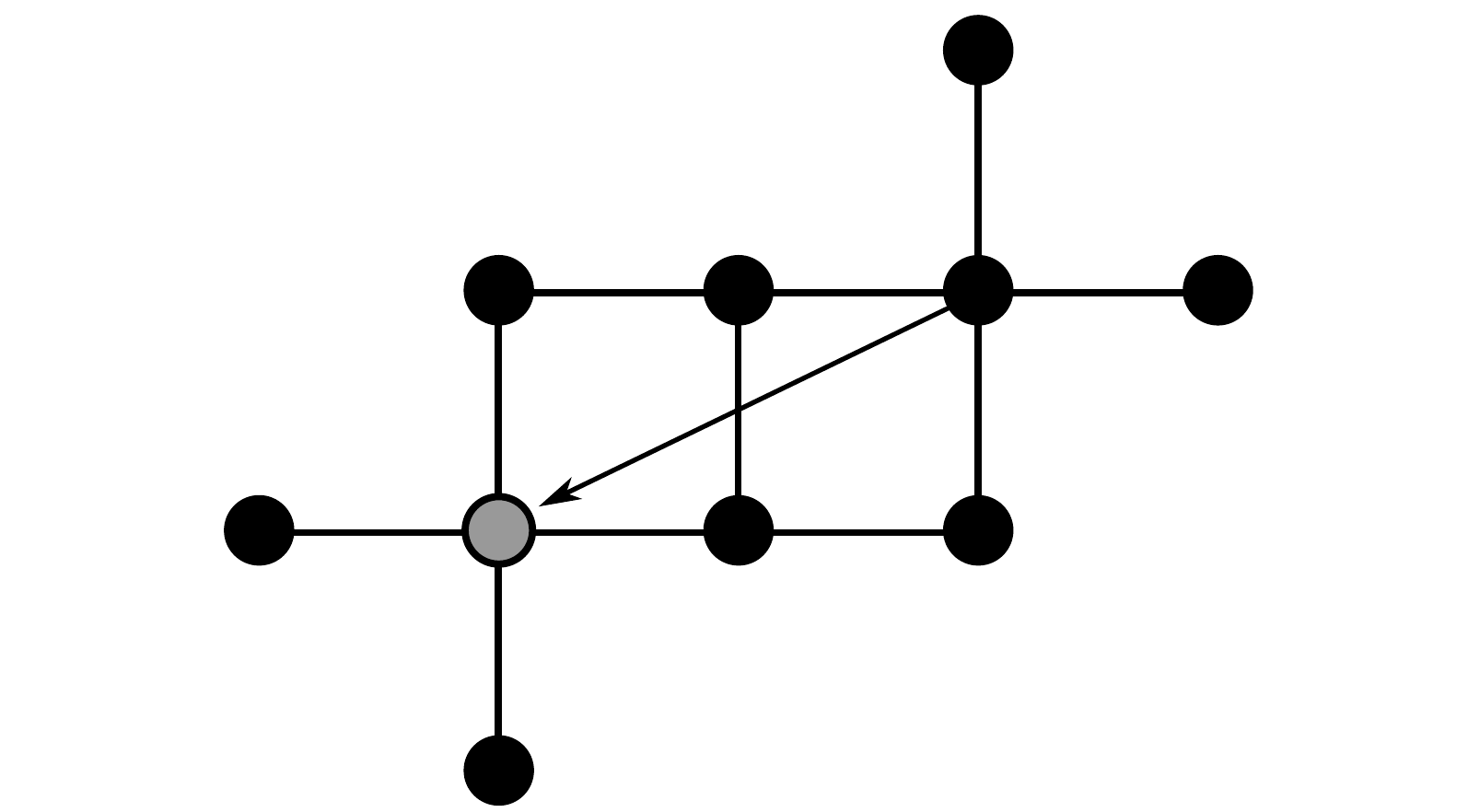} 
& 
&
\\
 & 
\includegraphics[width=\rulefigwidth]{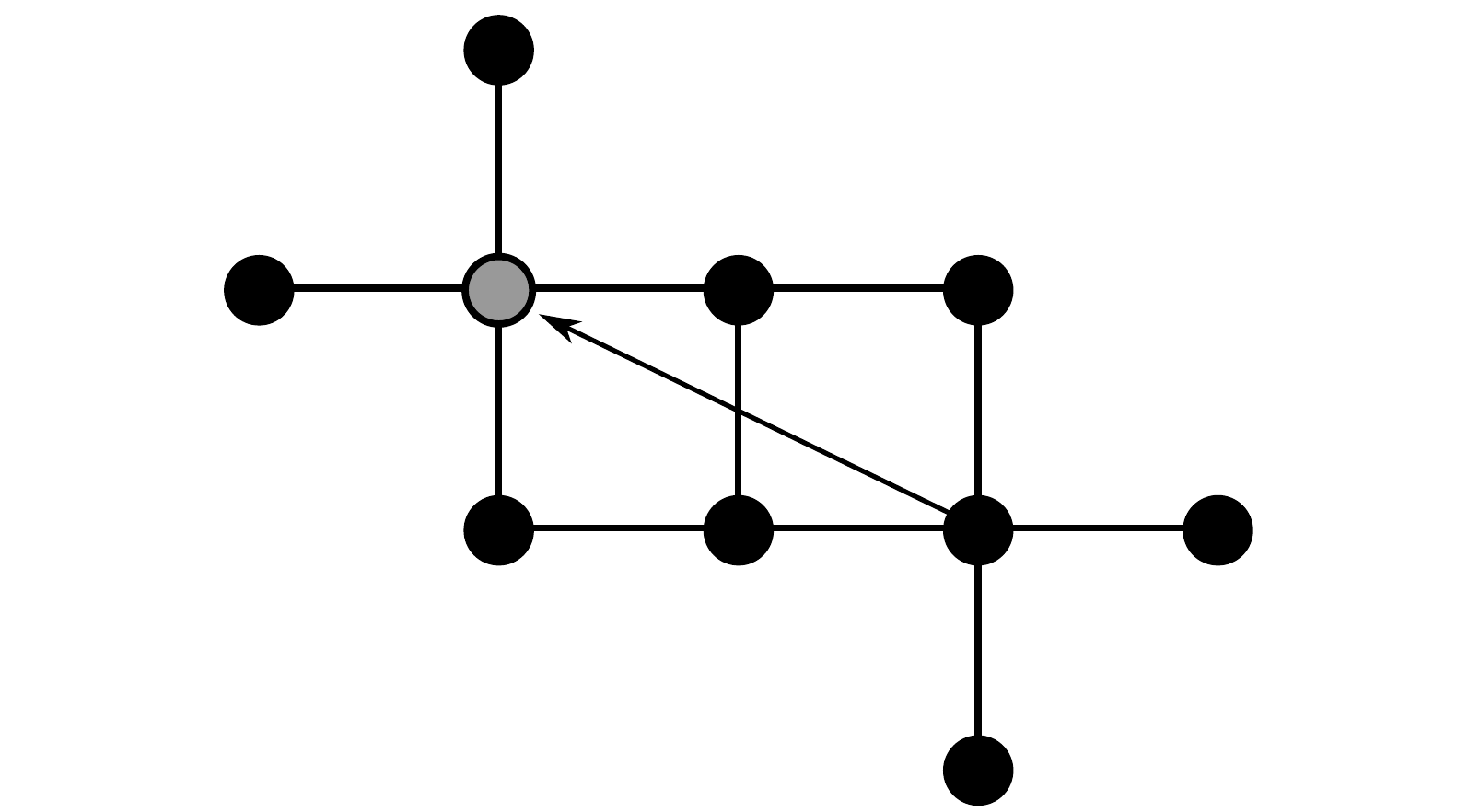} 
& 
&
\\
\hline&&&\\[-2ex]
&&\multirow{3}{\constraintwidth}{\includegraphics[width=\constraintwidth]{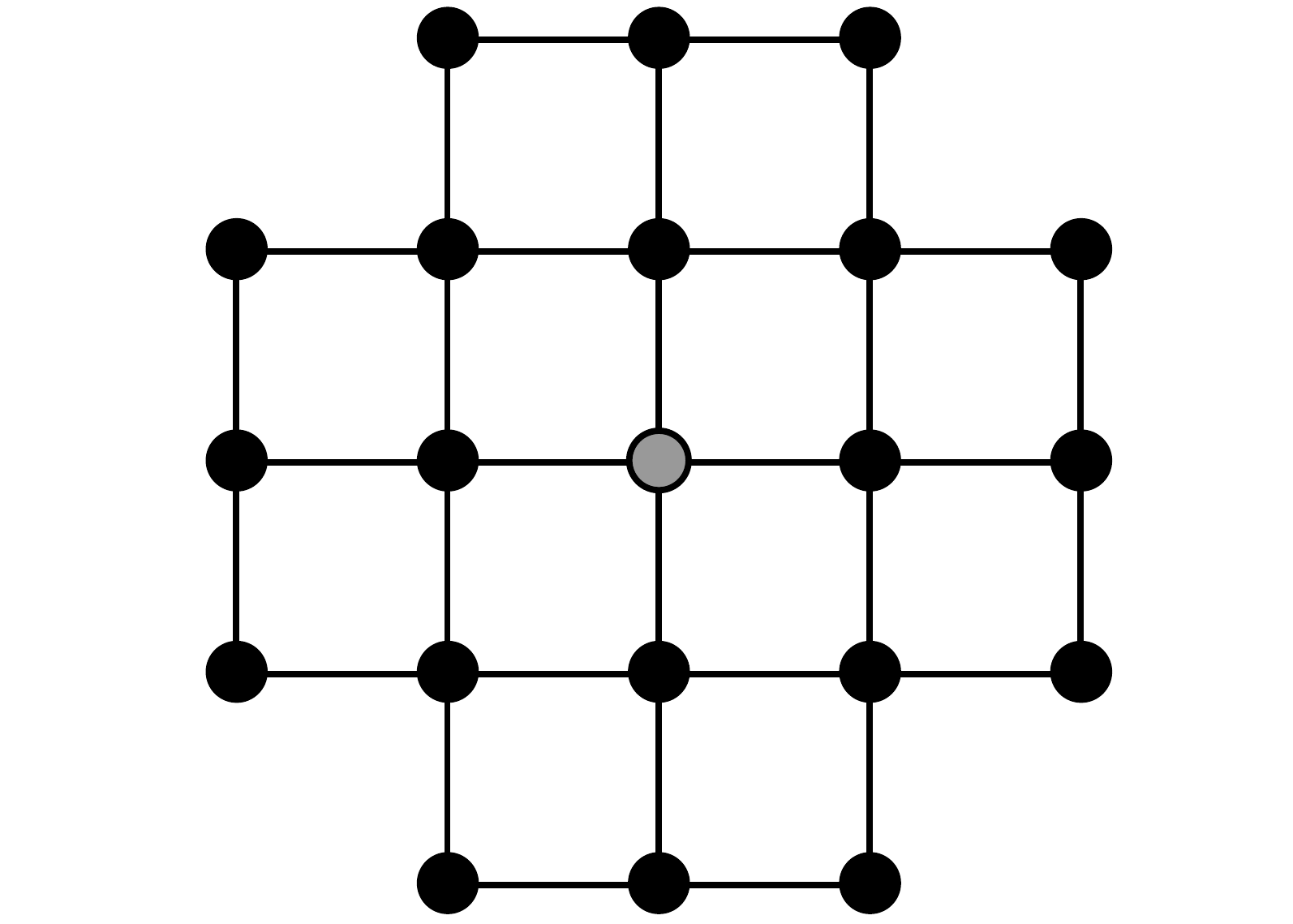}} &
\multirow{3}{\constraintwidth}{\includegraphics[width=\constraintwidth]{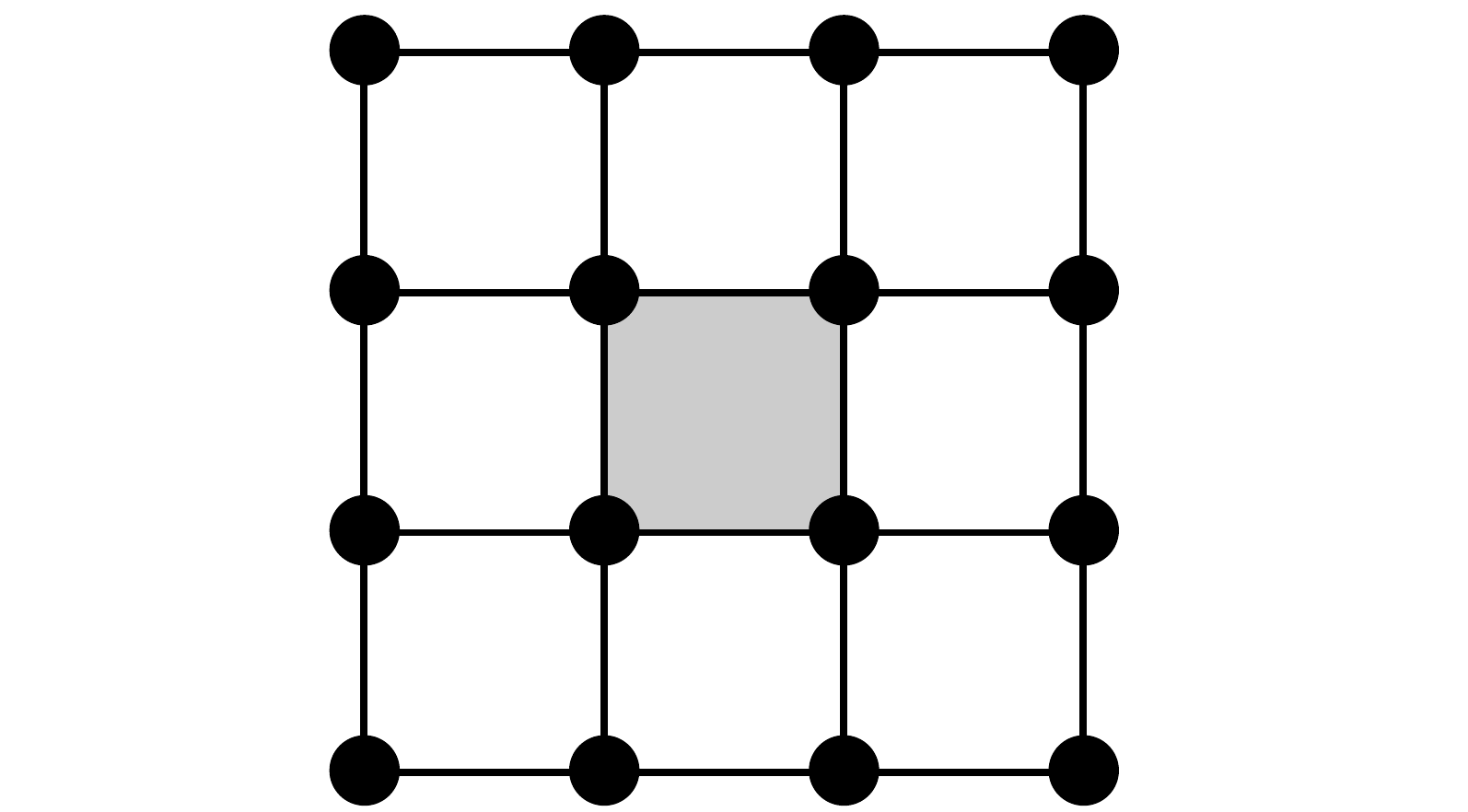}} \\[-3ex]
$N$ & \includegraphics[width=\rulefigwidth]{squaregrid-rule-n.pdf} & & \\
$J_2$ & \includegraphics[width=\rulefigwidth]{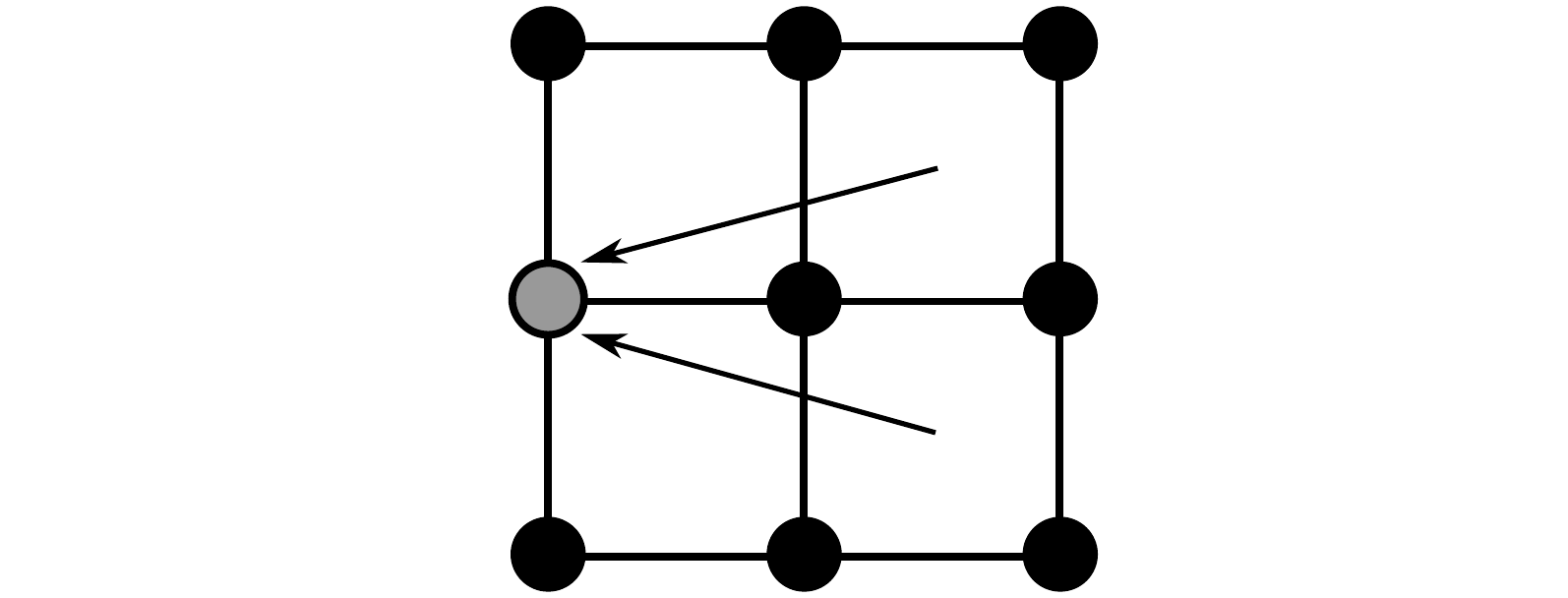} & & \\[5ex]
\hline&&&\\[-2ex]
&&\multirow{3}{\constraintwidth}{\includegraphics[width=\constraintwidth]{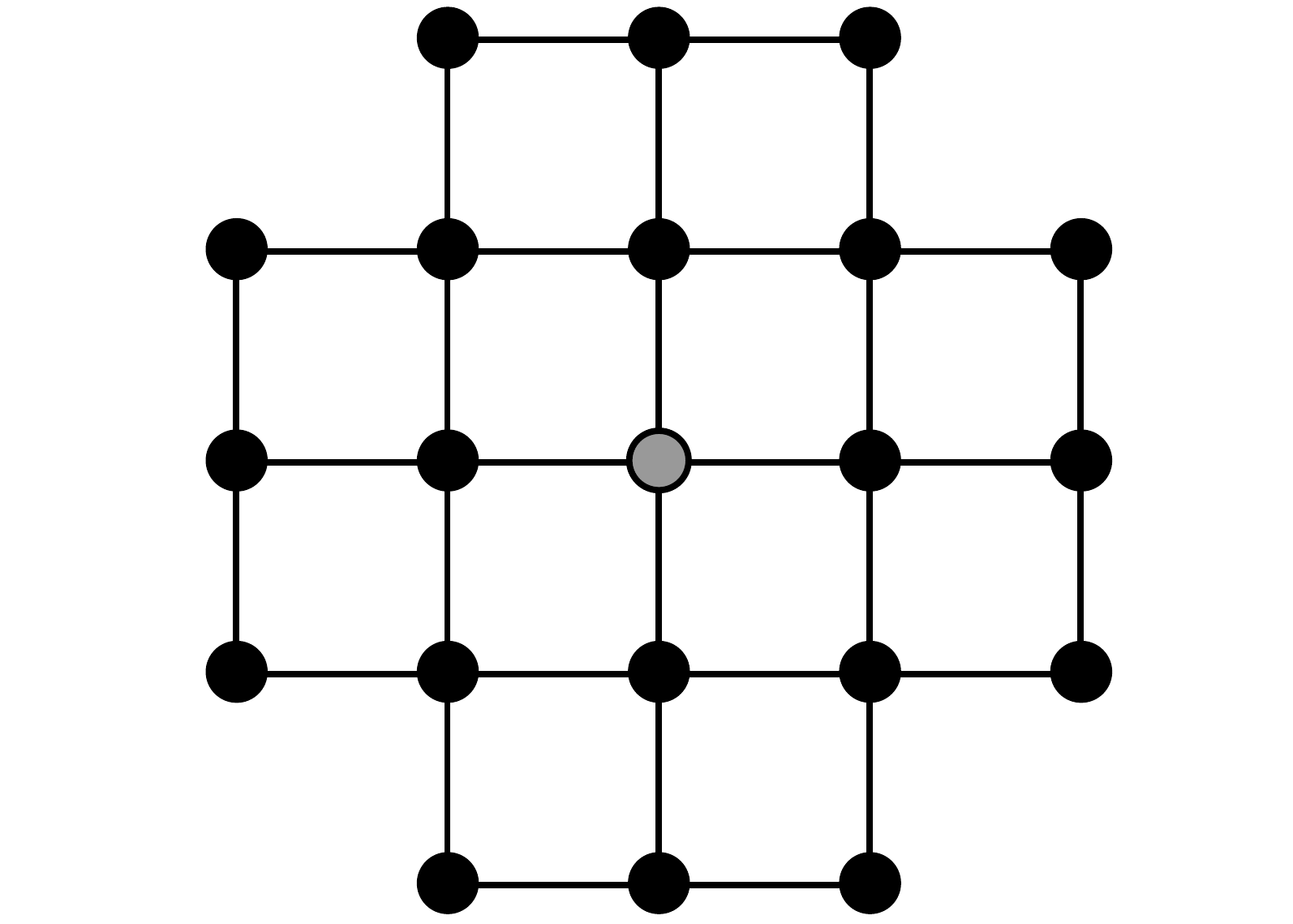}} &
\multirow{3}{\constraintwidth}{\includegraphics[width=\constraintwidth]{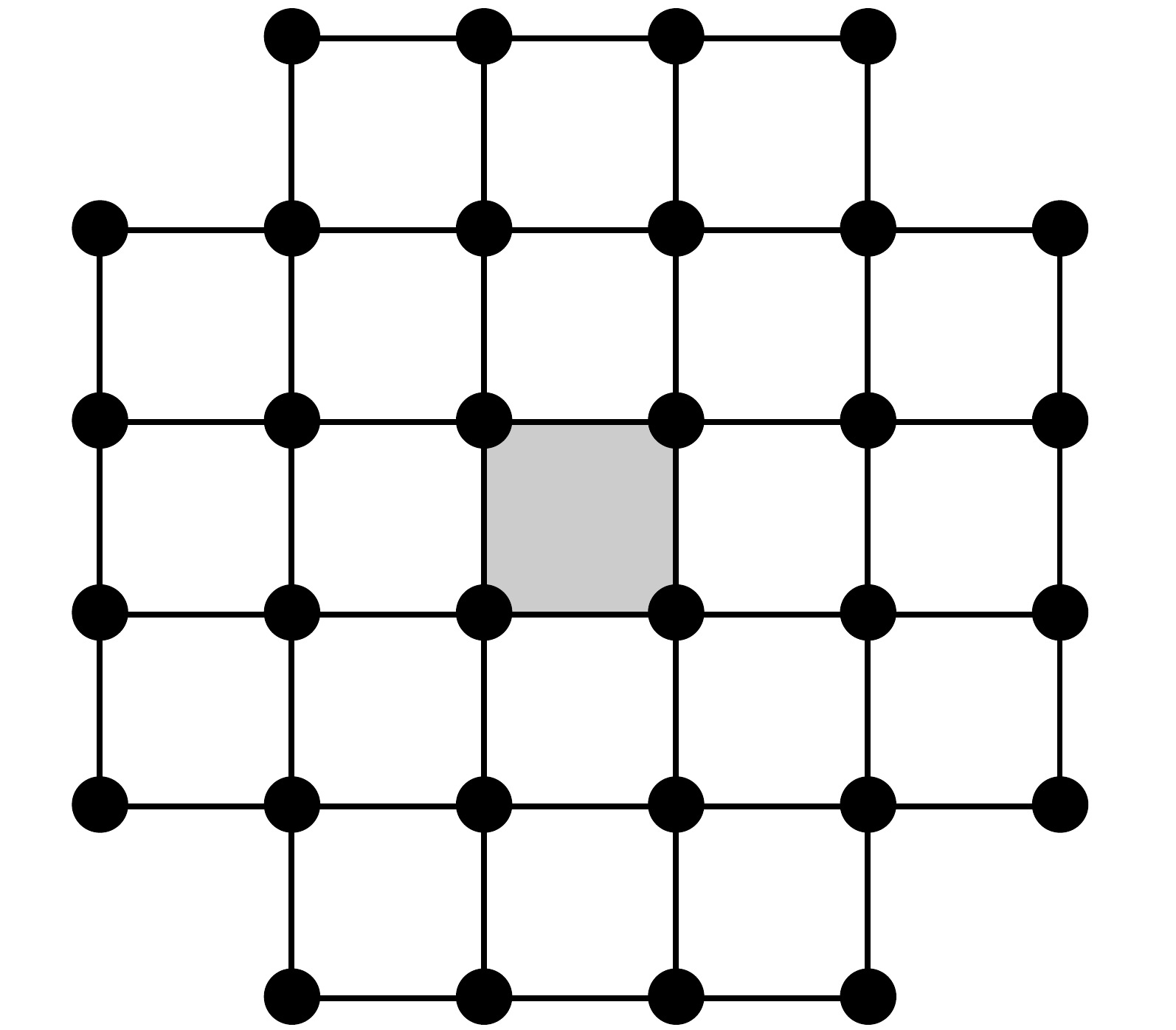}} \\[-3ex]
$N^+$ & \includegraphics[width=\rulefigwidth]{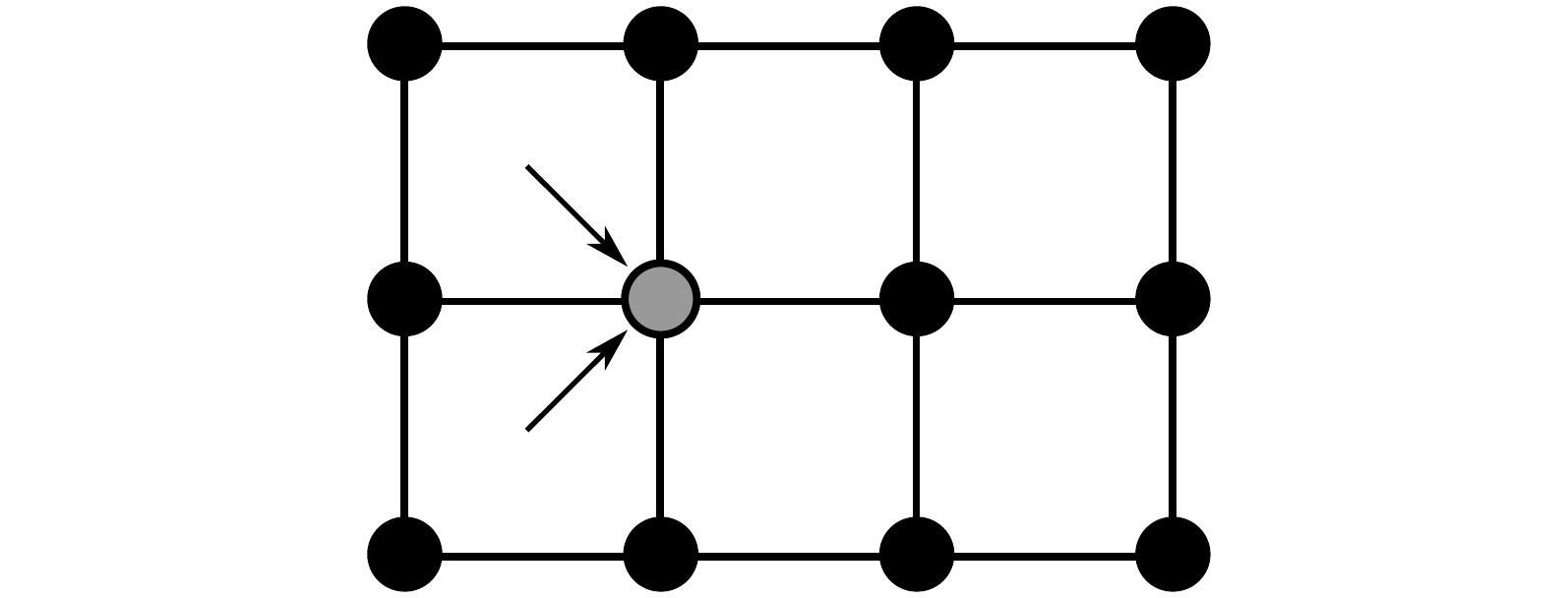} & & \\
$J_{2}^+$ & \includegraphics[width=\rulefigwidth]{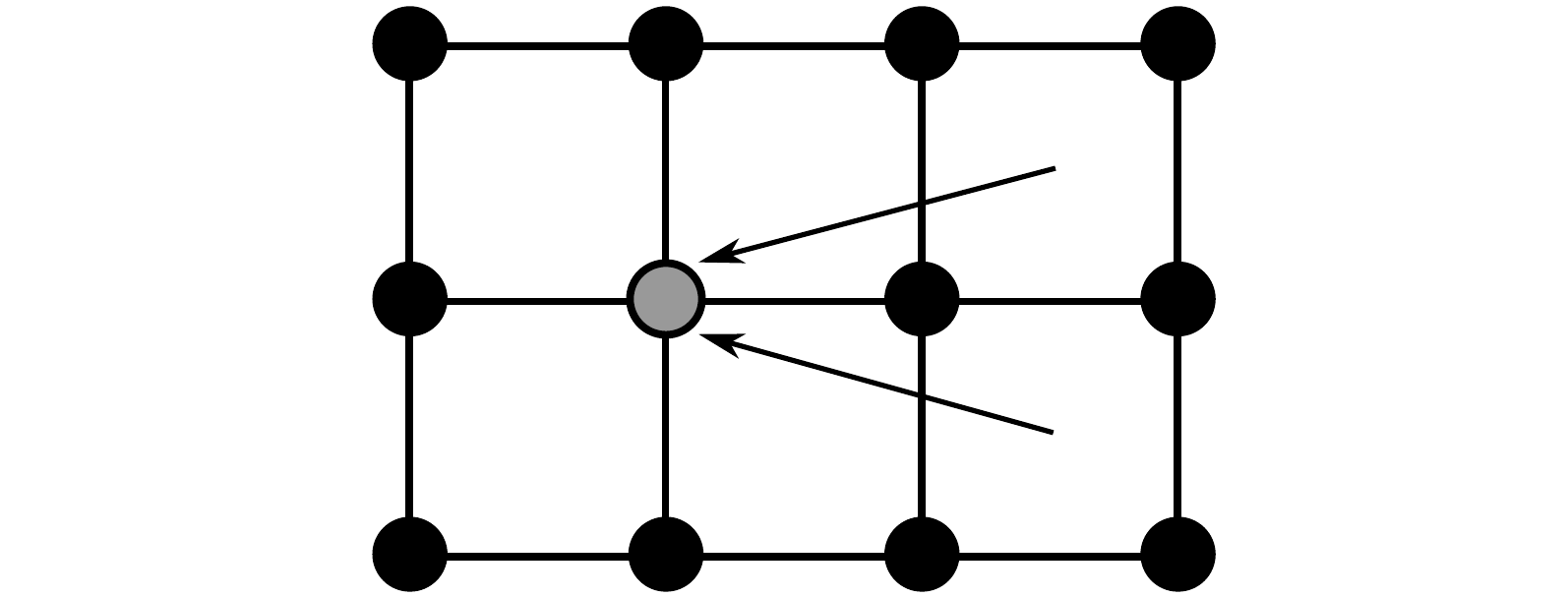} & & \\[5ex]
\hline
\end{tabular}
}

\caption{\label{tab:squarerules}Various Rules and Constraint Configurations in the Square Grid.}
\end{table}

\begin{table}[p]\small

\centering
\def\rulefigwidth{0.85in}
\def\constraintwidth{0.85in}
\def\bigconstraintwidth{1.7in}
\def\medconstraintwidth{1.3in}
\mbox{
\begin{tabular}[h]{|m{0.1in}m{\rulefigwidth}|m{\constraintwidth}m{\constraintwidth}|}
\hline
\multicolumn{2}{|c|}{\textbf{Rules} }
&
\multicolumn{2}{c|}{\textbf{Constraint Configurations}}\\
\hline&&&\\[-3ex]
\hline&&&\\[-2ex]
$V_1$ & 
\includegraphics[width=\rulefigwidth]{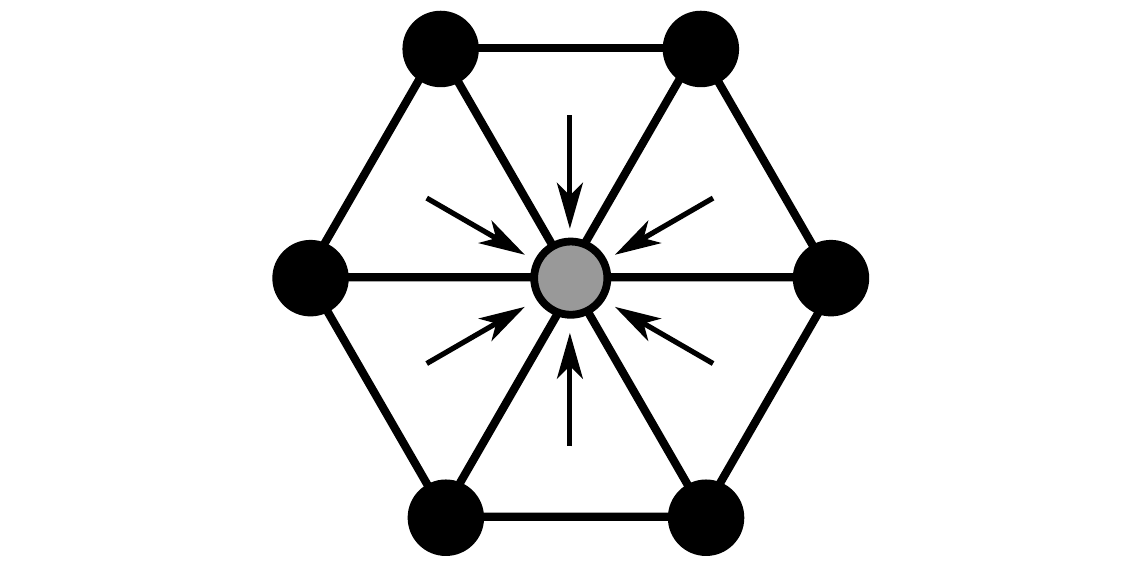} 
& 
\includegraphics[width=\constraintwidth]{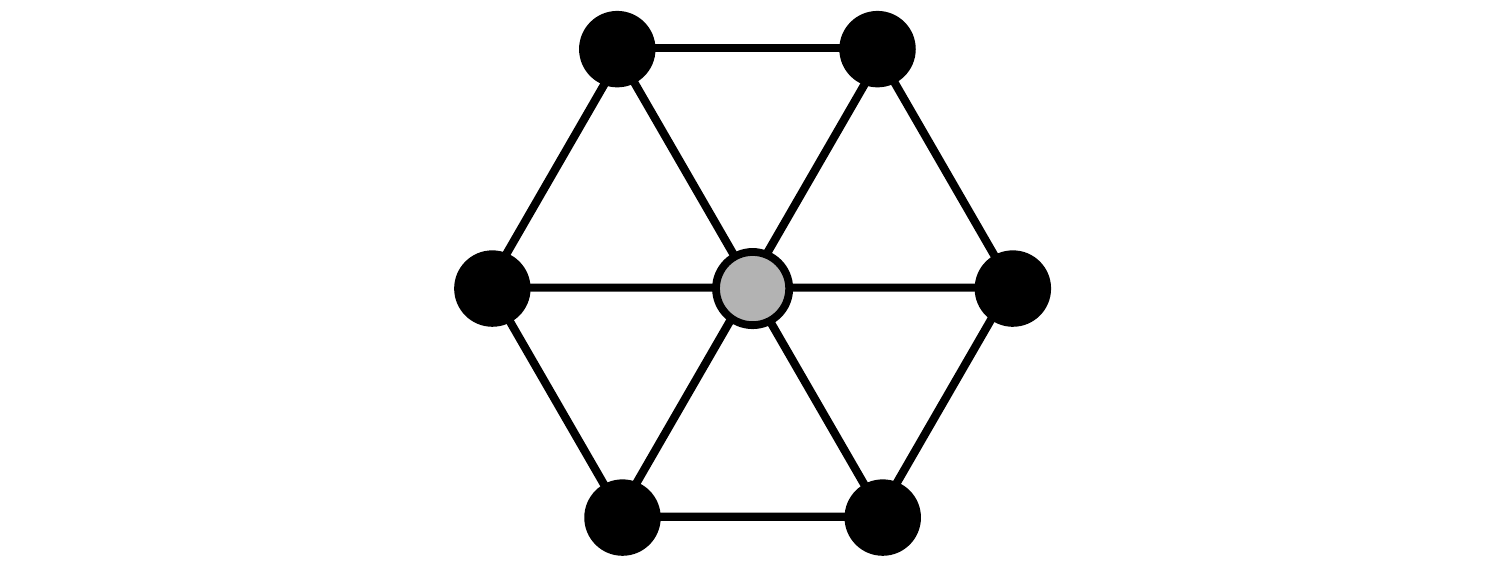} & 
\includegraphics[width=\constraintwidth]{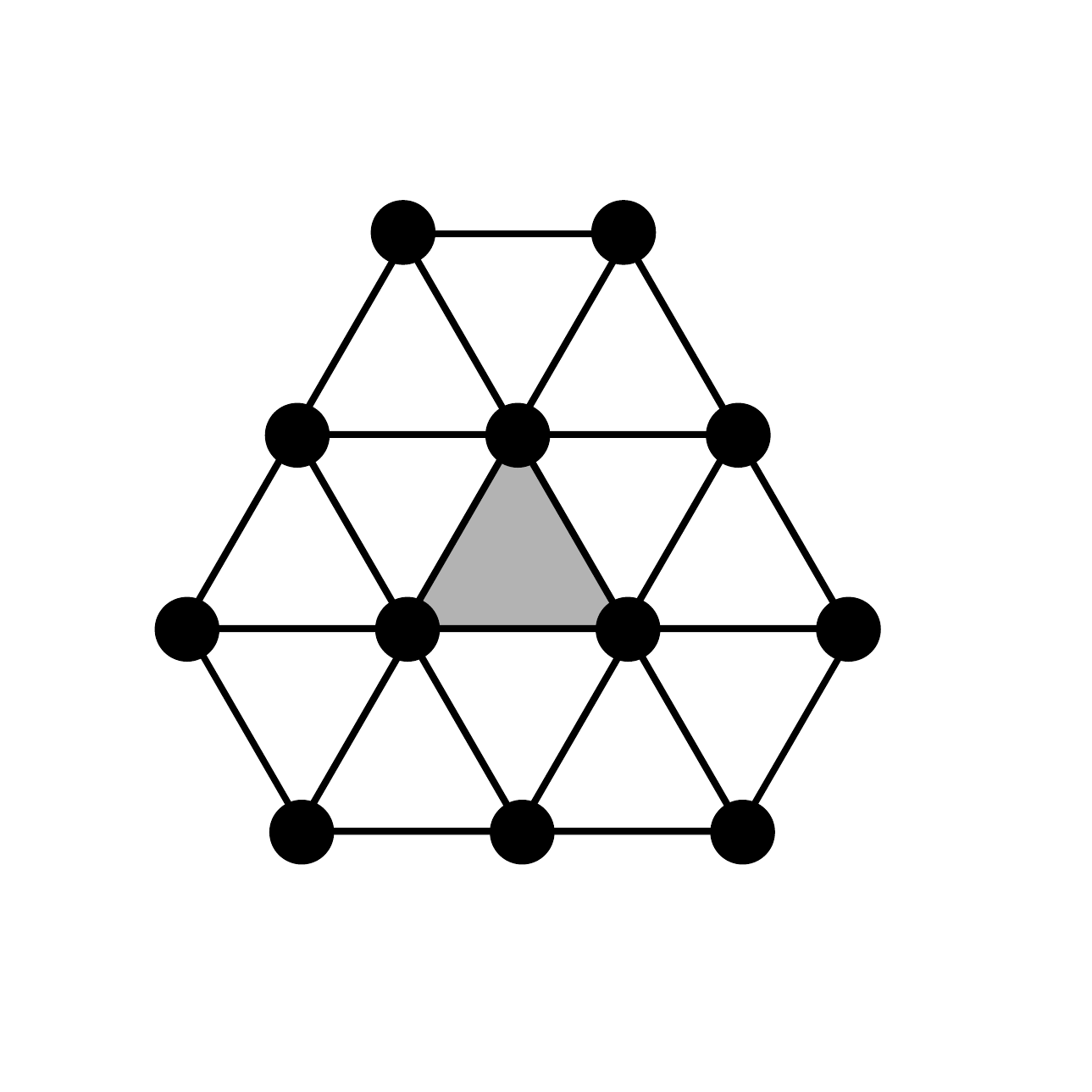} \\
\hline&&&\\[-2ex]
$S$ & 
\includegraphics[width=\rulefigwidth]{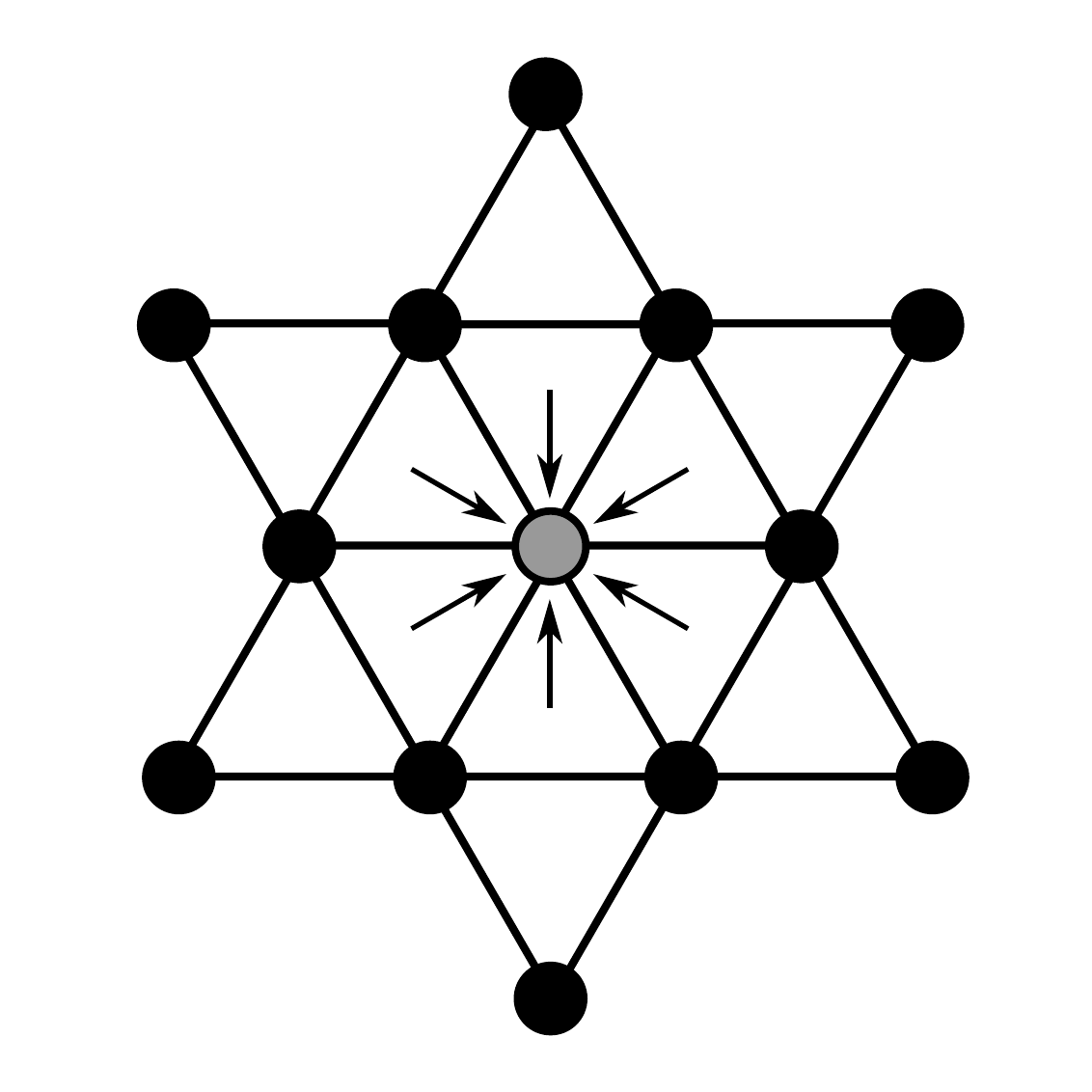} 
& 
\includegraphics[width=\constraintwidth]{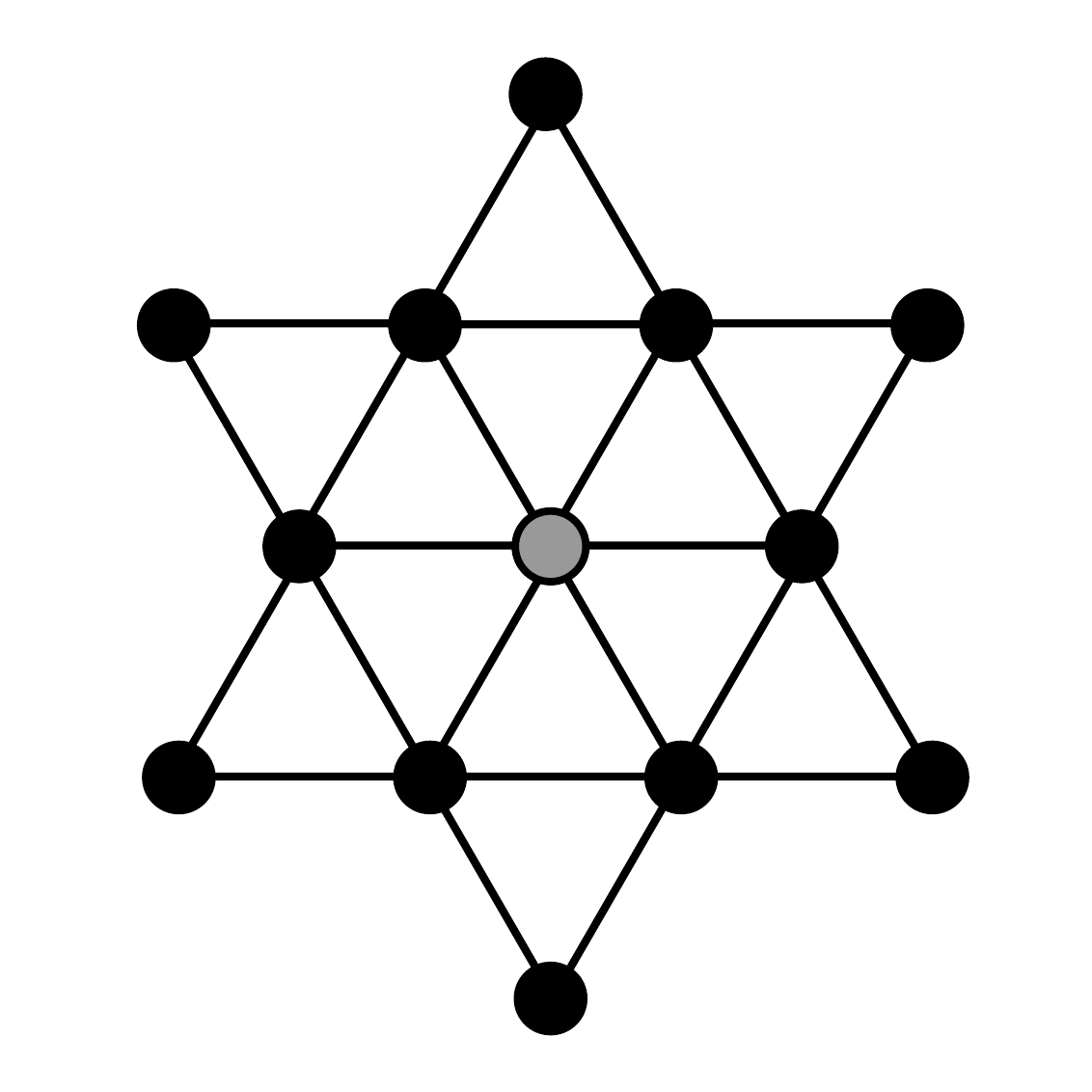} 
& 
\includegraphics[width=\constraintwidth]{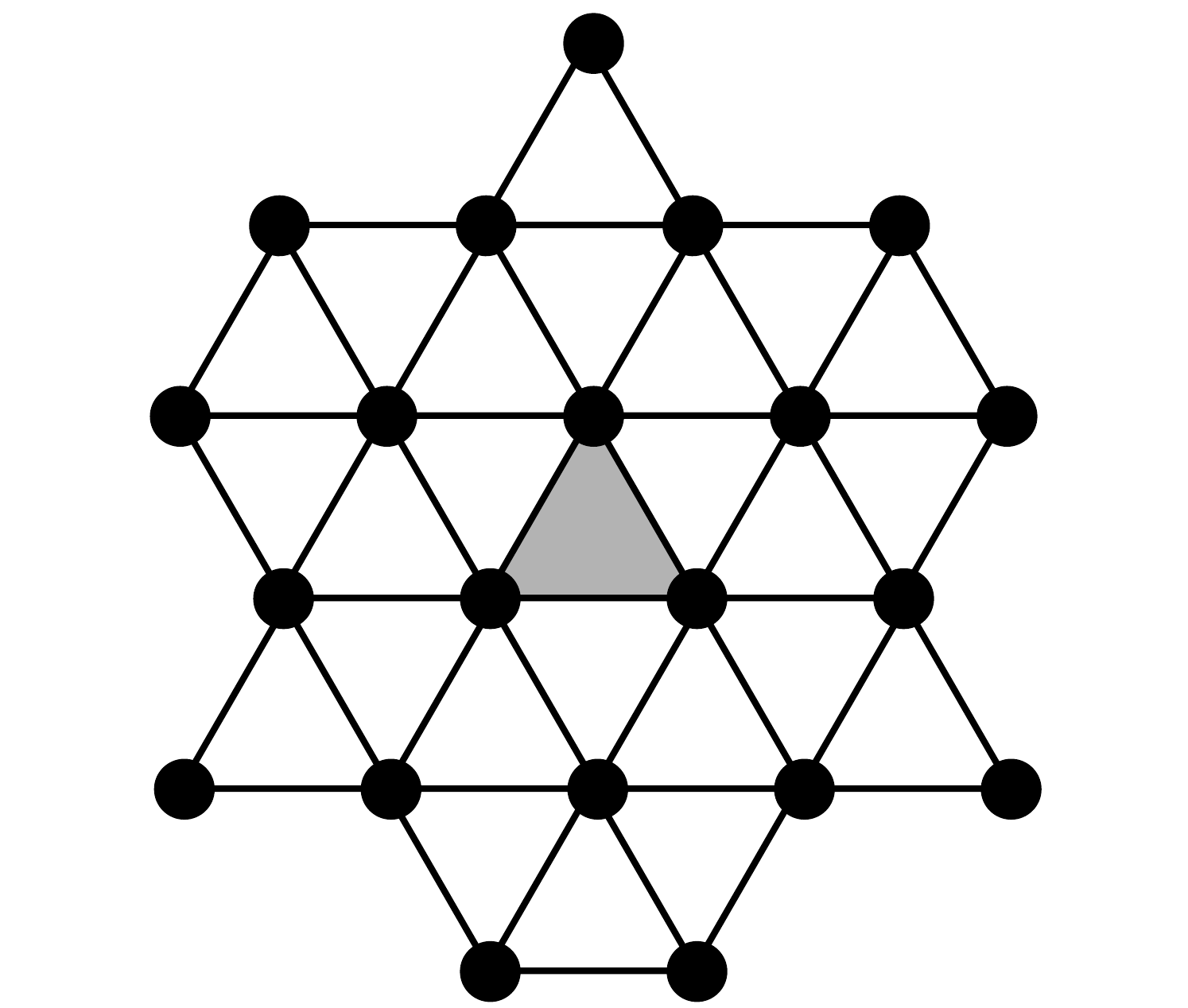} 
\\
\hline&&&\\[-2ex]
$N^+$ & 
\includegraphics[width=\rulefigwidth]{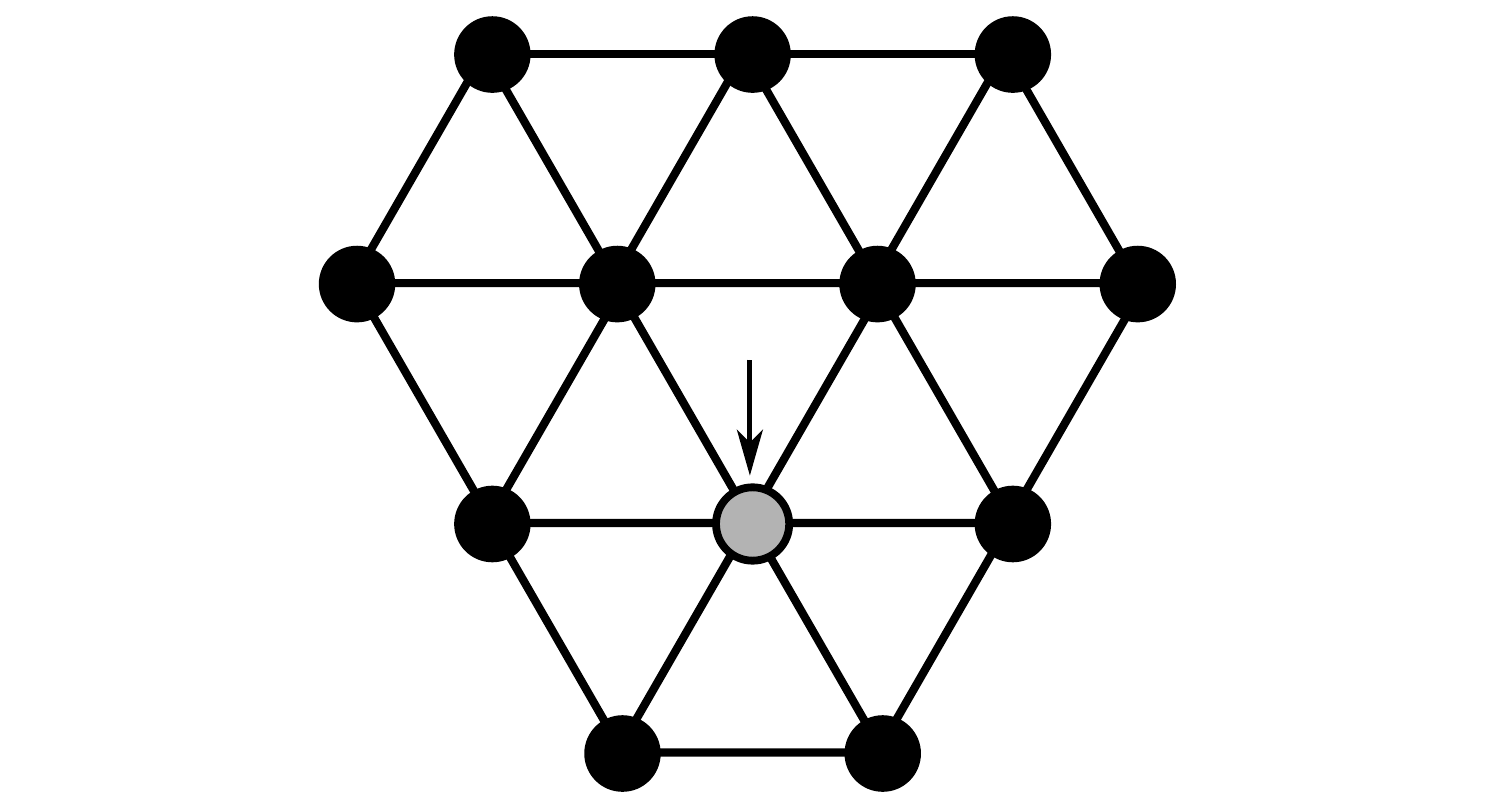} 
& 
\includegraphics[width=\constraintwidth]{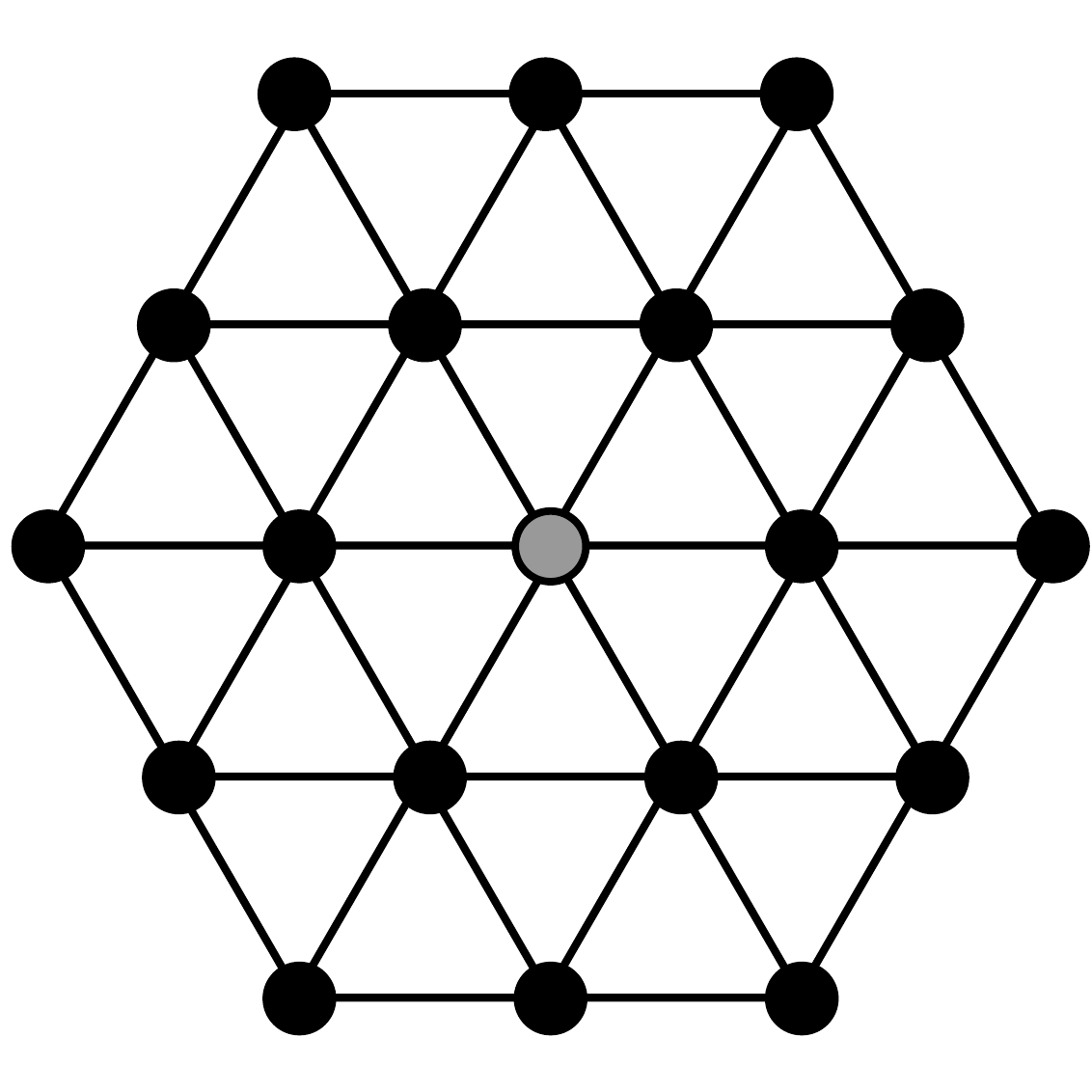} 
& 
\includegraphics[width=\constraintwidth]{trigrid-constraint-face-v1.pdf} 
\\
\hline&&&\\[-2ex]
$V_2$ & 
\includegraphics[width=\rulefigwidth]{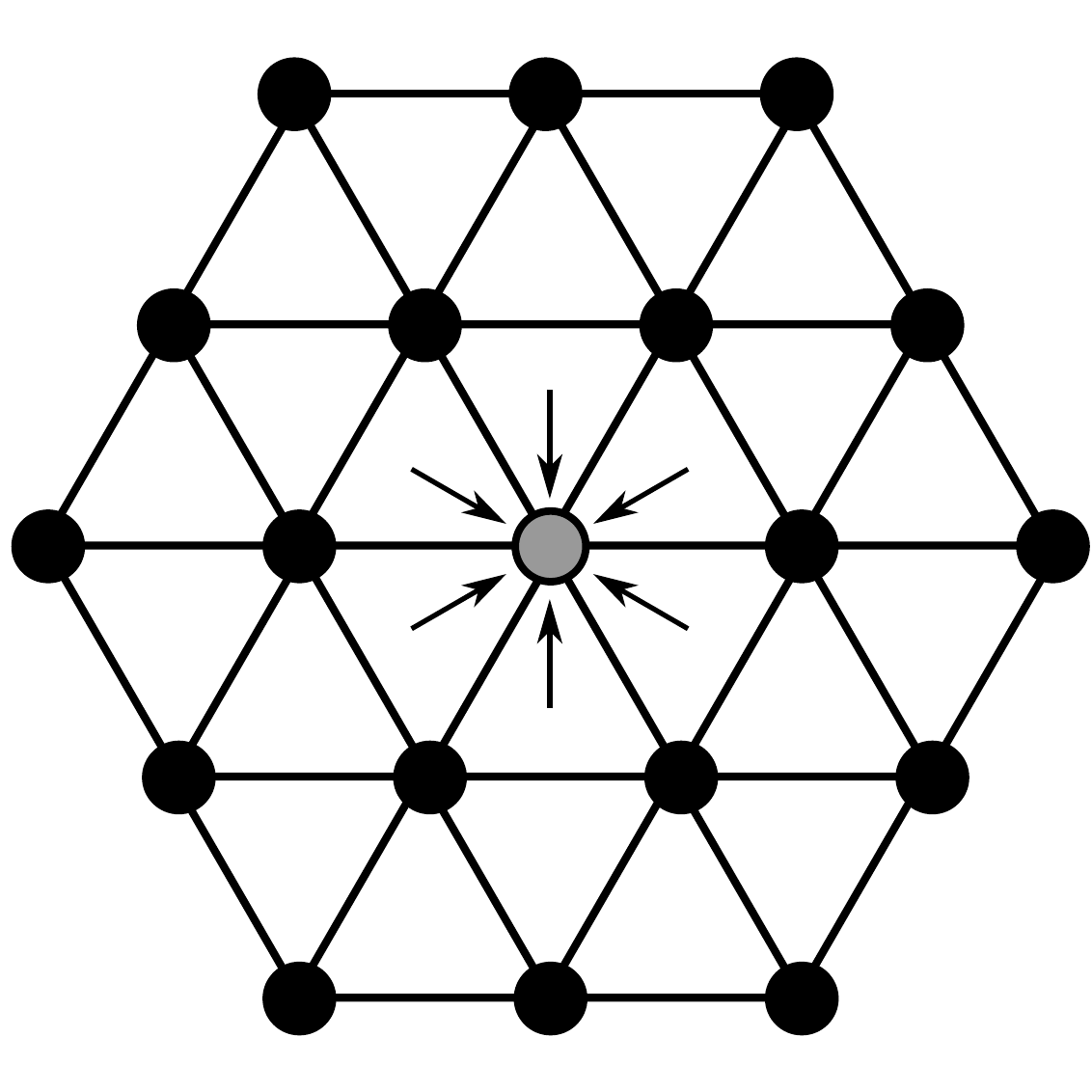} 
& 
\includegraphics[width=\constraintwidth]{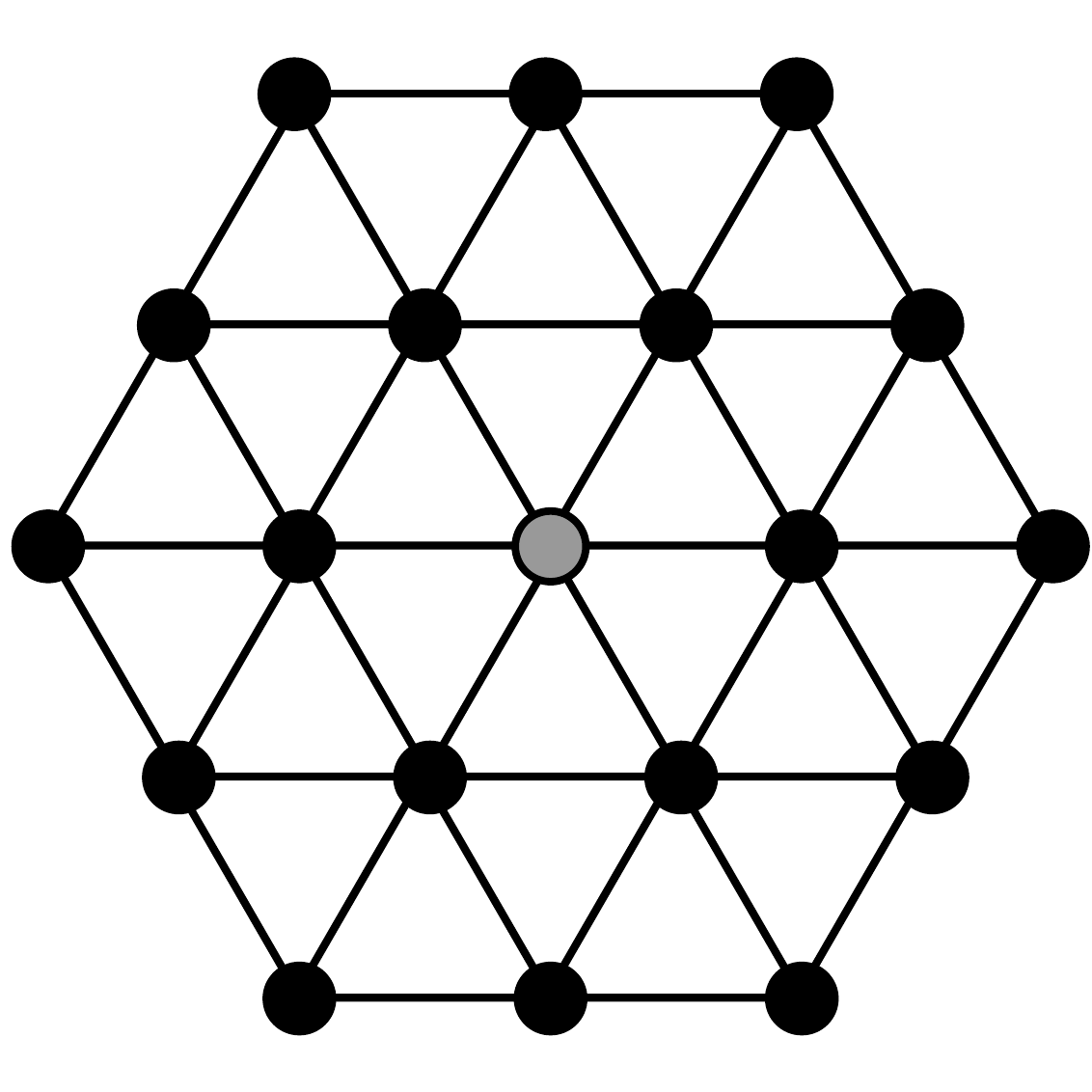} 
& 
\includegraphics[width=\constraintwidth]{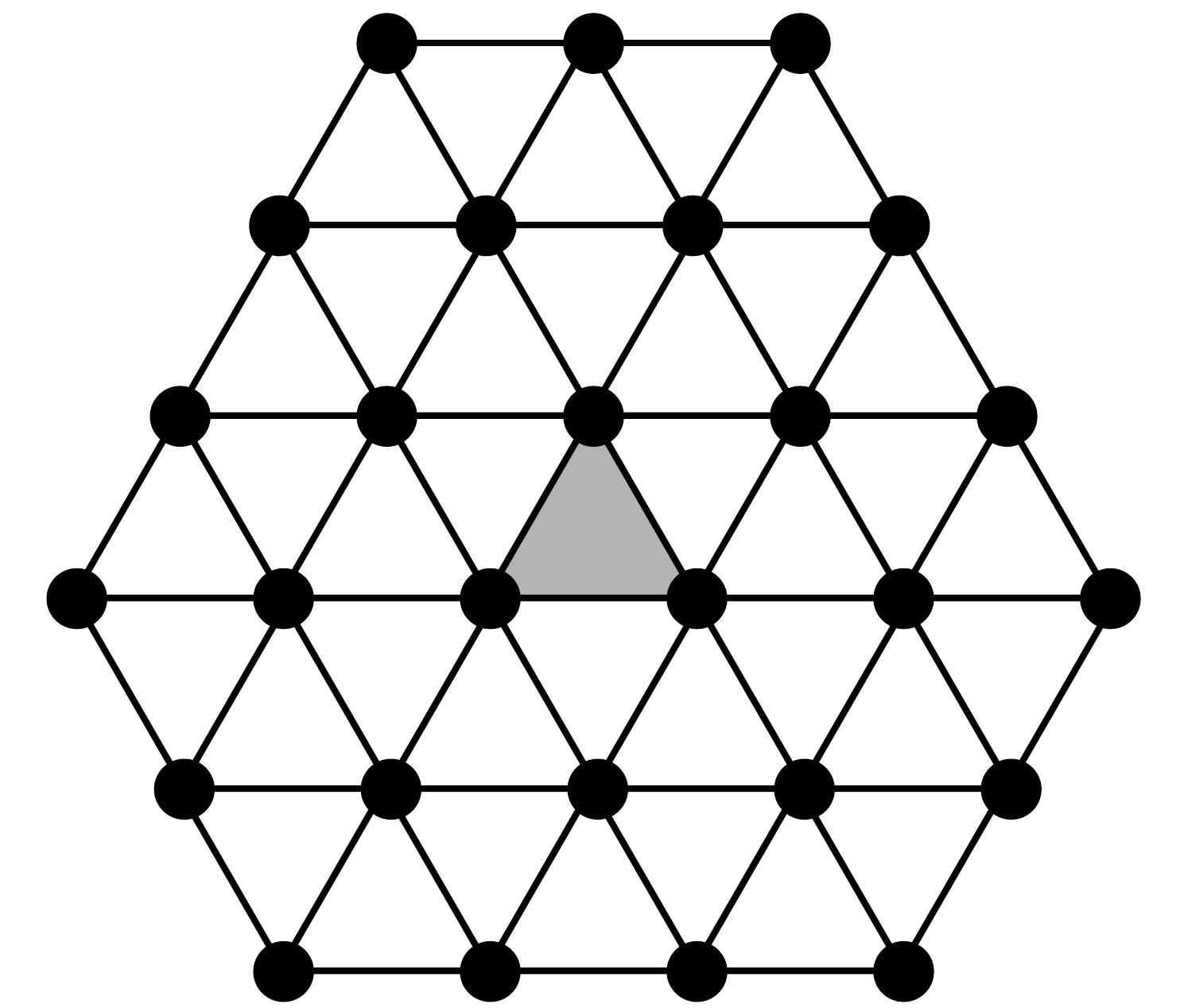} 
\\
\hline&&&\\[-2ex]
&& 
\multicolumn{2}{c|}{\multirow{4}{\bigconstraintwidth}{\includegraphics[width=\bigconstraintwidth]{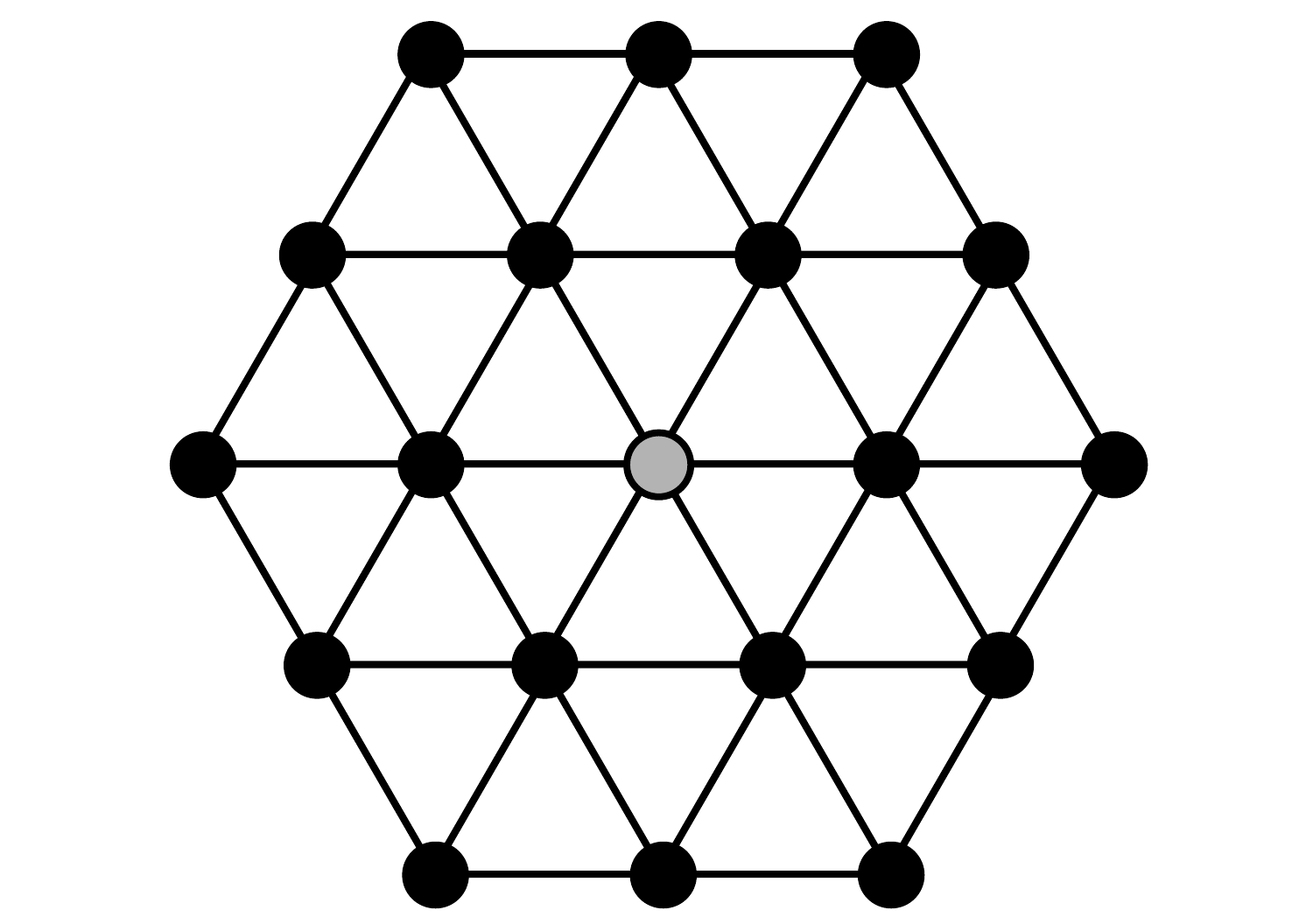}}}
\\[-3ex]
$C_{1}$ & 
\includegraphics[width=\rulefigwidth]{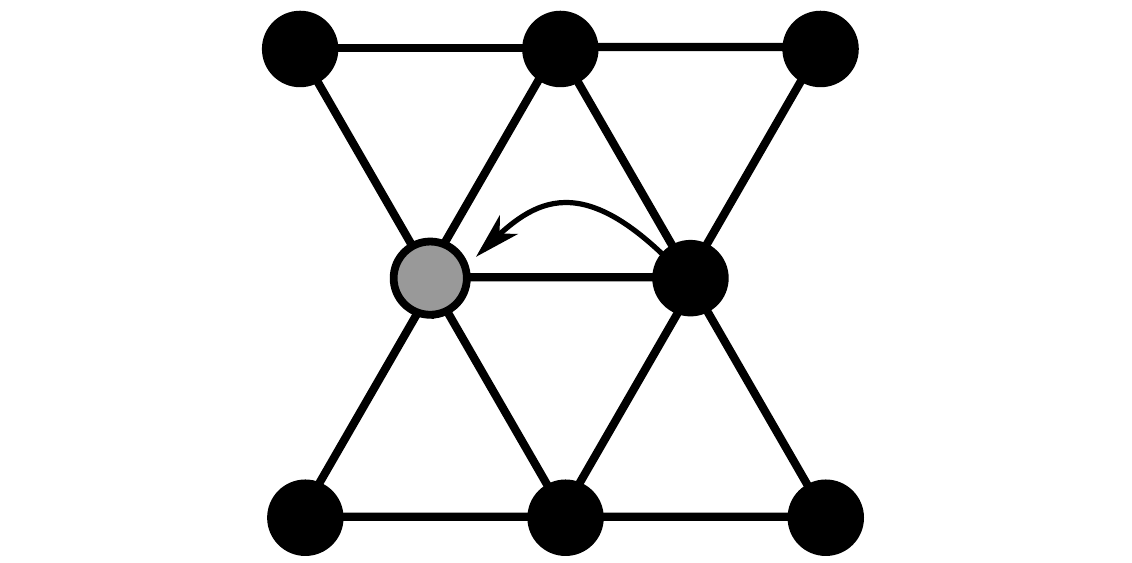} 
& &
 \\
\multirow{2}{0.1in}{$C_{2}$} & 
\includegraphics[width=\rulefigwidth]{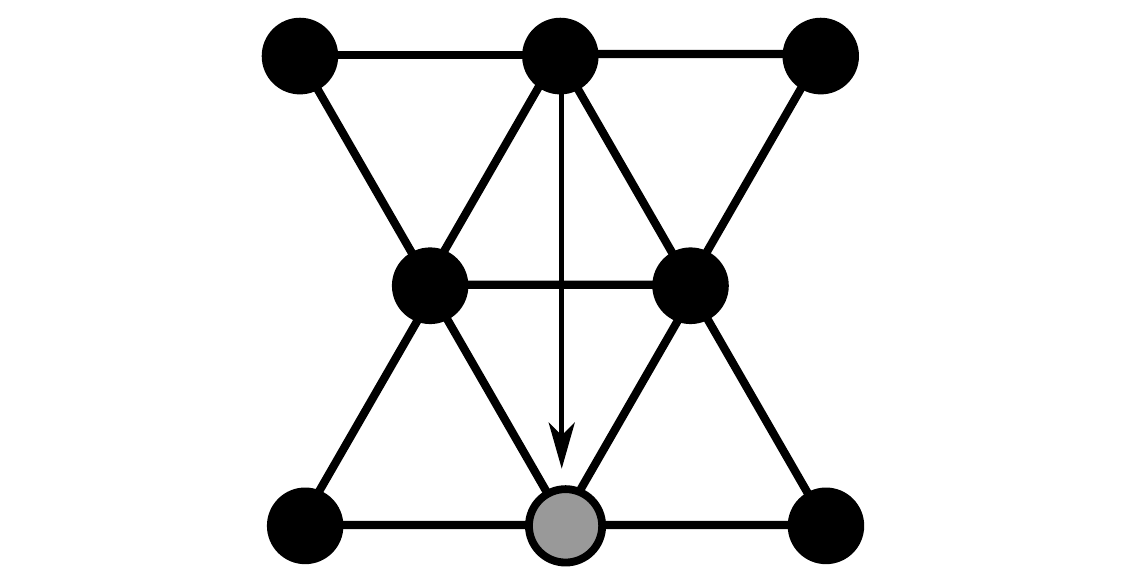} 
&& \\
& 
\includegraphics[width=\rulefigwidth]{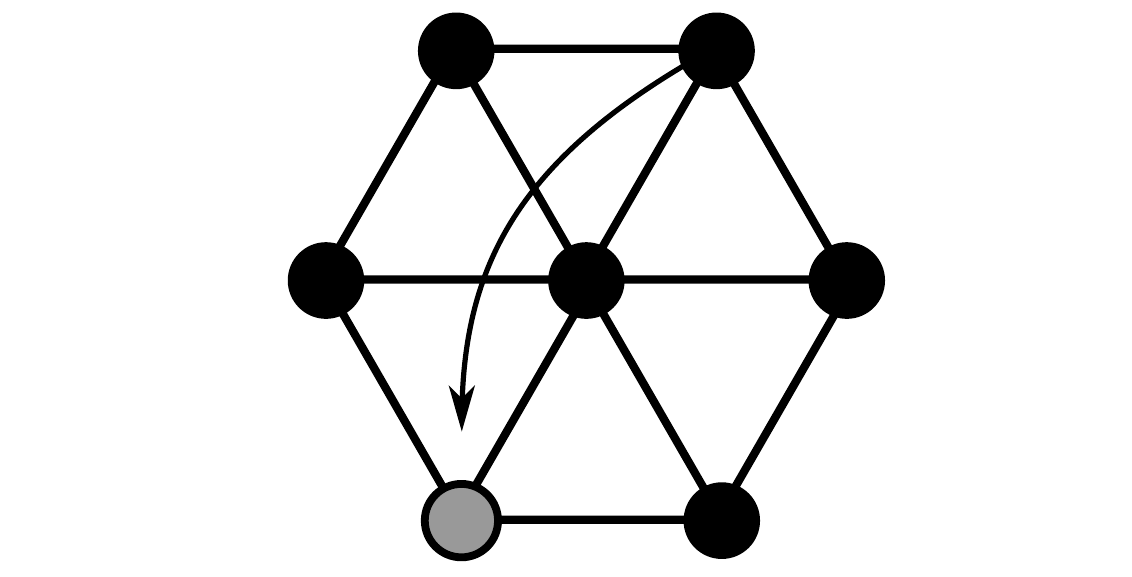} 
&& \\
\hline&&&\\[-2ex]
$C_{1}^+$ & 
\includegraphics[width=\rulefigwidth]{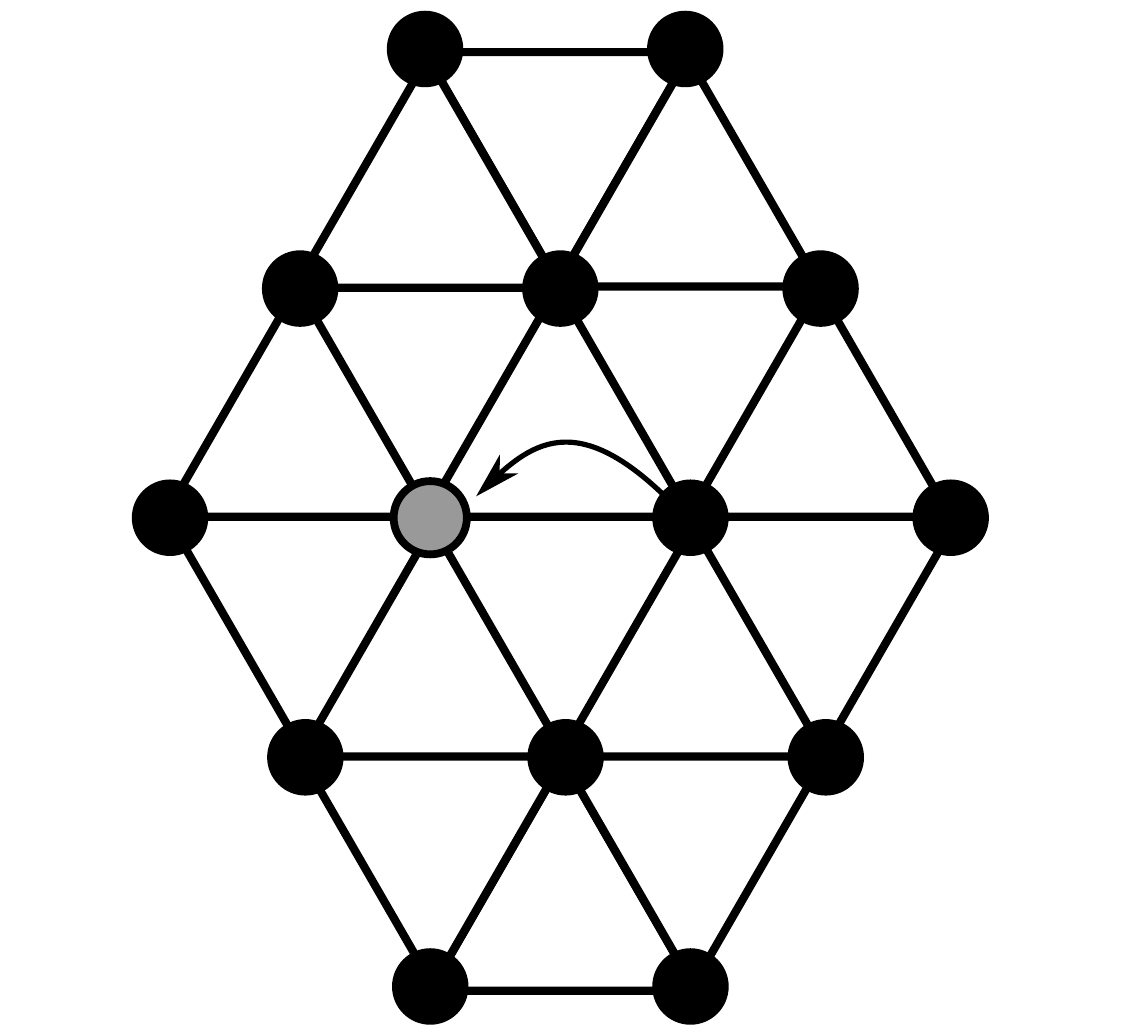} 
& 
\multicolumn{2}{c|}{\multirow{3}{\bigconstraintwidth}{\includegraphics[width=\bigconstraintwidth]{trigrid-constraint-vert-c1p2p.pdf}}} 
\\
\multirow{2}{0.1in}{$C_{2}^+$} & 
\includegraphics[width=\rulefigwidth]{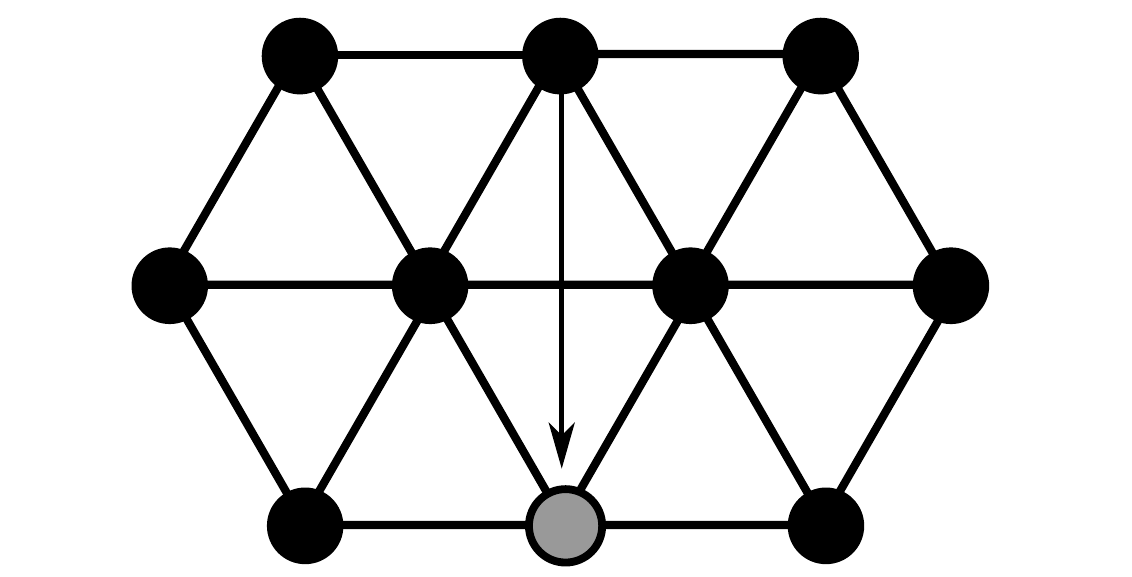} 
&& \\
& 
\includegraphics[width=\rulefigwidth]{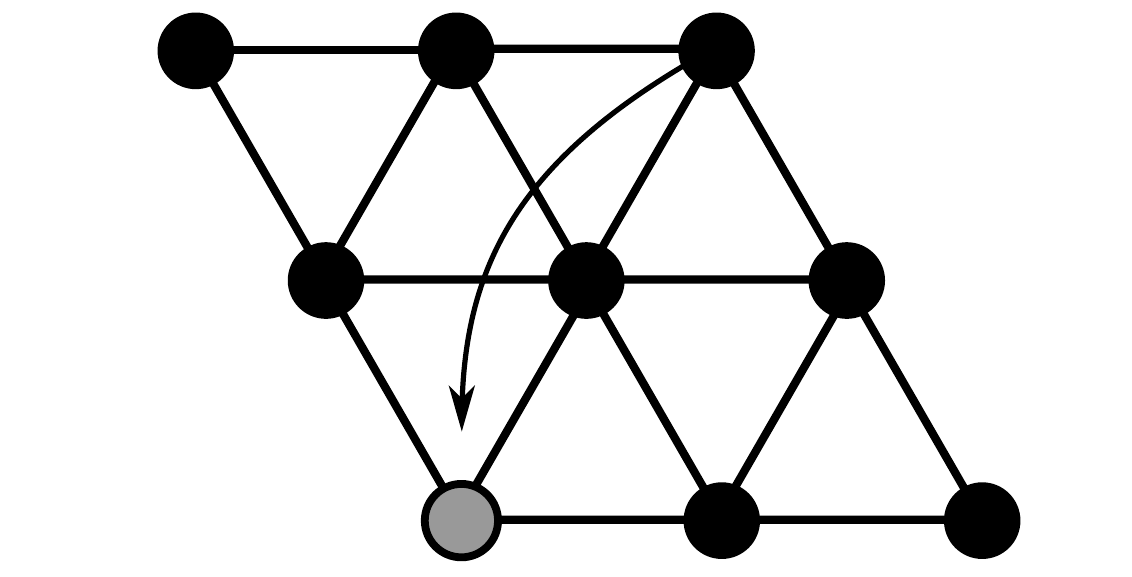} 
&& \\
\hline
\end{tabular}
\quad
\begin{tabular}[h]{|m{0.1in}m{\rulefigwidth}|m{\constraintwidth}m{\constraintwidth}|}
\hline
\multicolumn{2}{|c|}{\textbf{Rules} }
&
\multicolumn{2}{c|}{\textbf{Constraint Configurations}}\\
\hline&&&\\[-3ex]
\hline&&&\\[-2ex]
$C_1$ & 
\includegraphics[width=\rulefigwidth]{trigrid-rule-c1.pdf} 
& 
\multicolumn{2}{c|}{\multirow{4}{\bigconstraintwidth}{\includegraphics[width=\bigconstraintwidth]{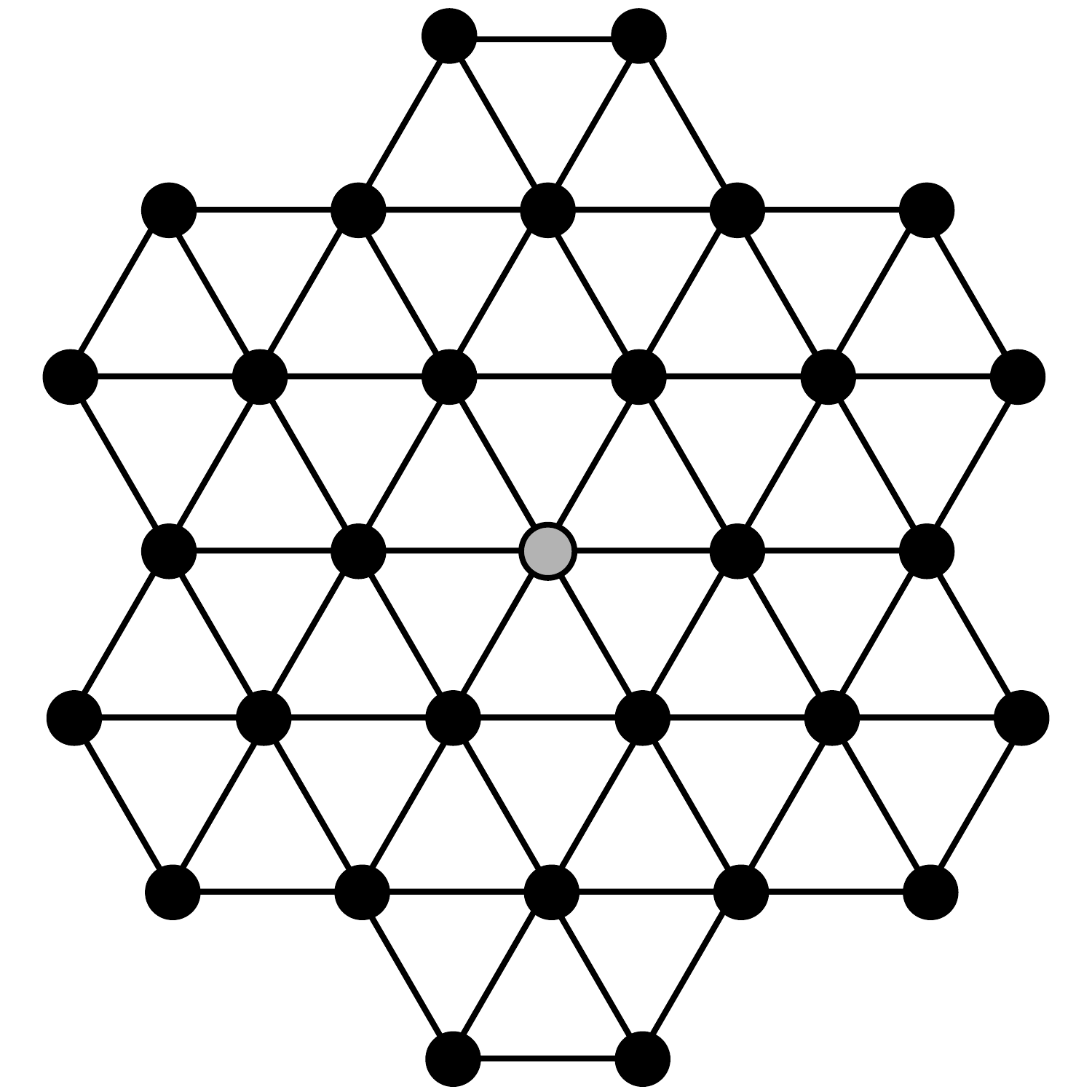}}} 
\\[4ex]
\multirow{2}{0.1in}{$C_2$} & 
\includegraphics[width=\rulefigwidth]{trigrid-rule-c2a.pdf} 
&& \\
& 
\includegraphics[width=\rulefigwidth]{trigrid-rule-c2b.pdf} 
&& \\[4ex]
\multirow{2}{0.1in}{$C_3$} & 
\includegraphics[width=\rulefigwidth]{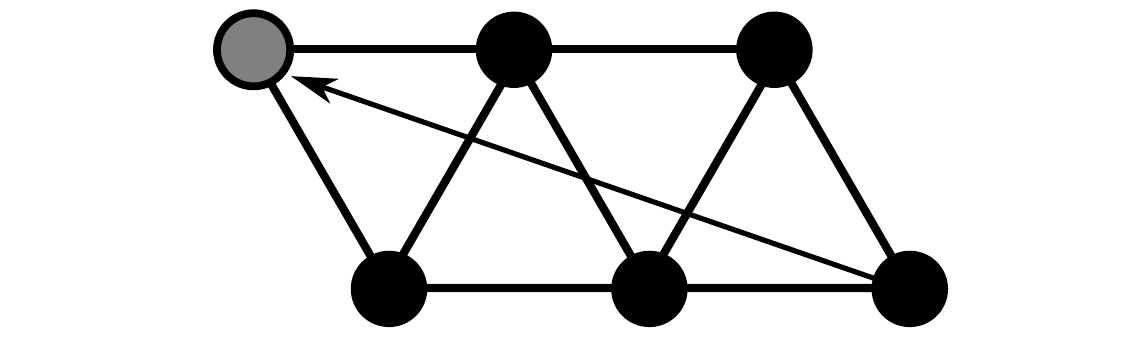} 
&& \\
 & 
\includegraphics[width=\rulefigwidth]{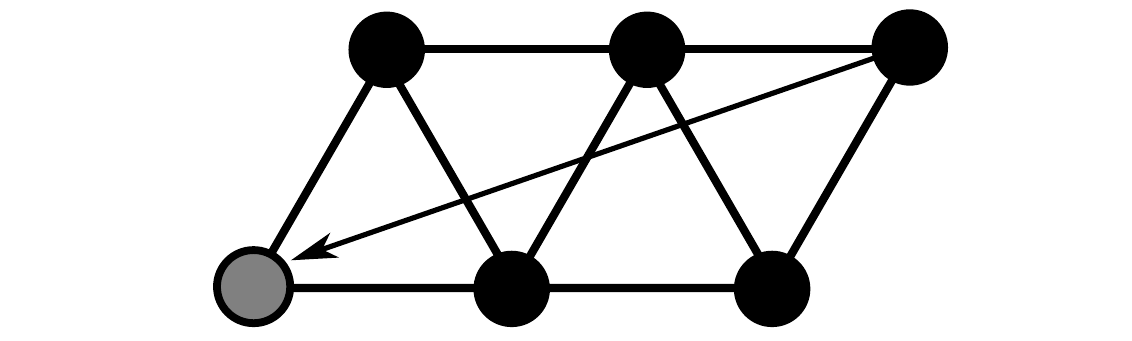} 
&& \\[5ex]
\hline&&&\\[-2ex]
$C_{1}^+$ & 
\includegraphics[width=\rulefigwidth]{trigrid-rule-c1p.pdf} 
& 
\multicolumn{2}{c|}{\multirow{4}{\bigconstraintwidth}{\includegraphics[width=\bigconstraintwidth]{trigrid-constraint-vert-c1p2p3.pdf}}} \\
\multirow{2}{0.1in}{$C_{2}^+$} & 
\includegraphics[width=\rulefigwidth]{trigrid-rule-c2ap.pdf} 
&& \\
& 
\includegraphics[width=\rulefigwidth]{trigrid-rule-c2bp.pdf} 
&& \\
\multirow{2}{0.1in}{$C_{3}^+$} & 
\includegraphics[width=\rulefigwidth]{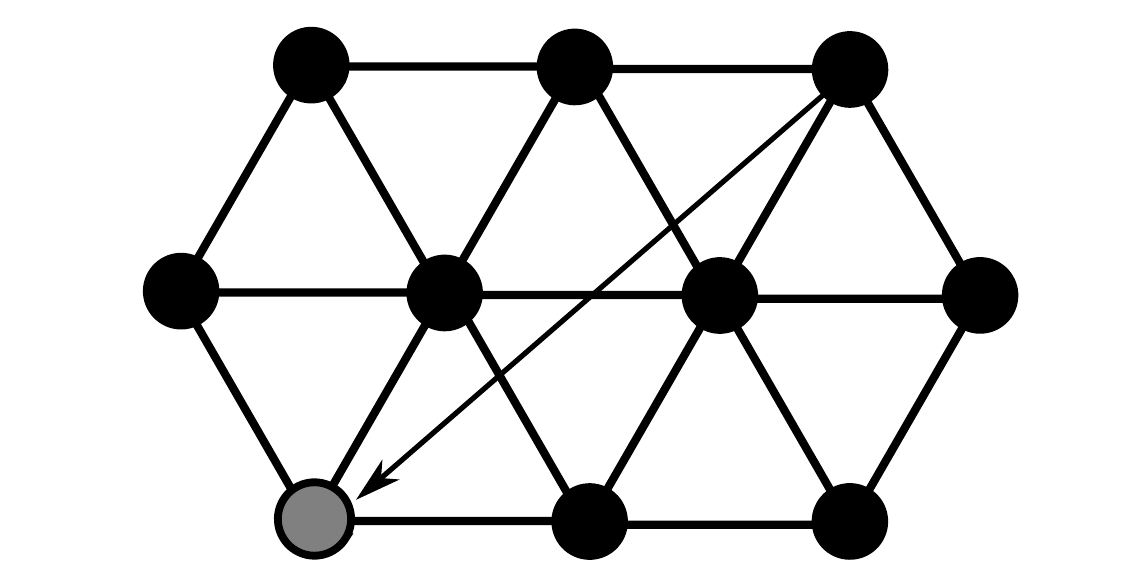} 
&& \\[4ex]
& 
\includegraphics[width=\rulefigwidth]{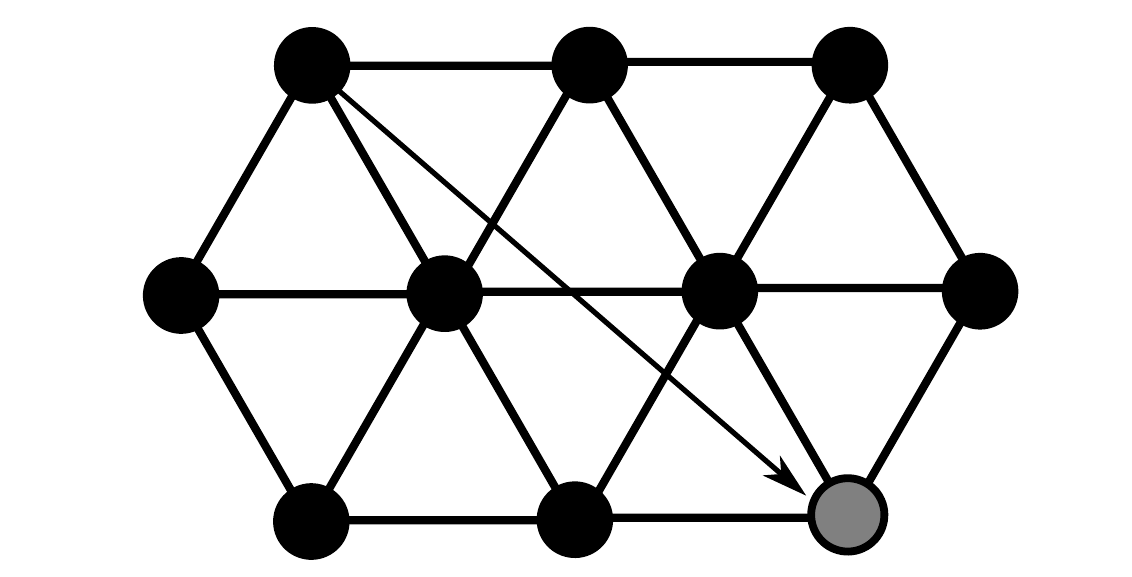} 
&& \\[4ex]
\hline&&&\\[-2ex]
$V_1$ & \includegraphics[width=\rulefigwidth]{trigrid-rule-v1.pdf} & 
\multirow{2}{\constraintwidth}{\includegraphics[width=\constraintwidth]{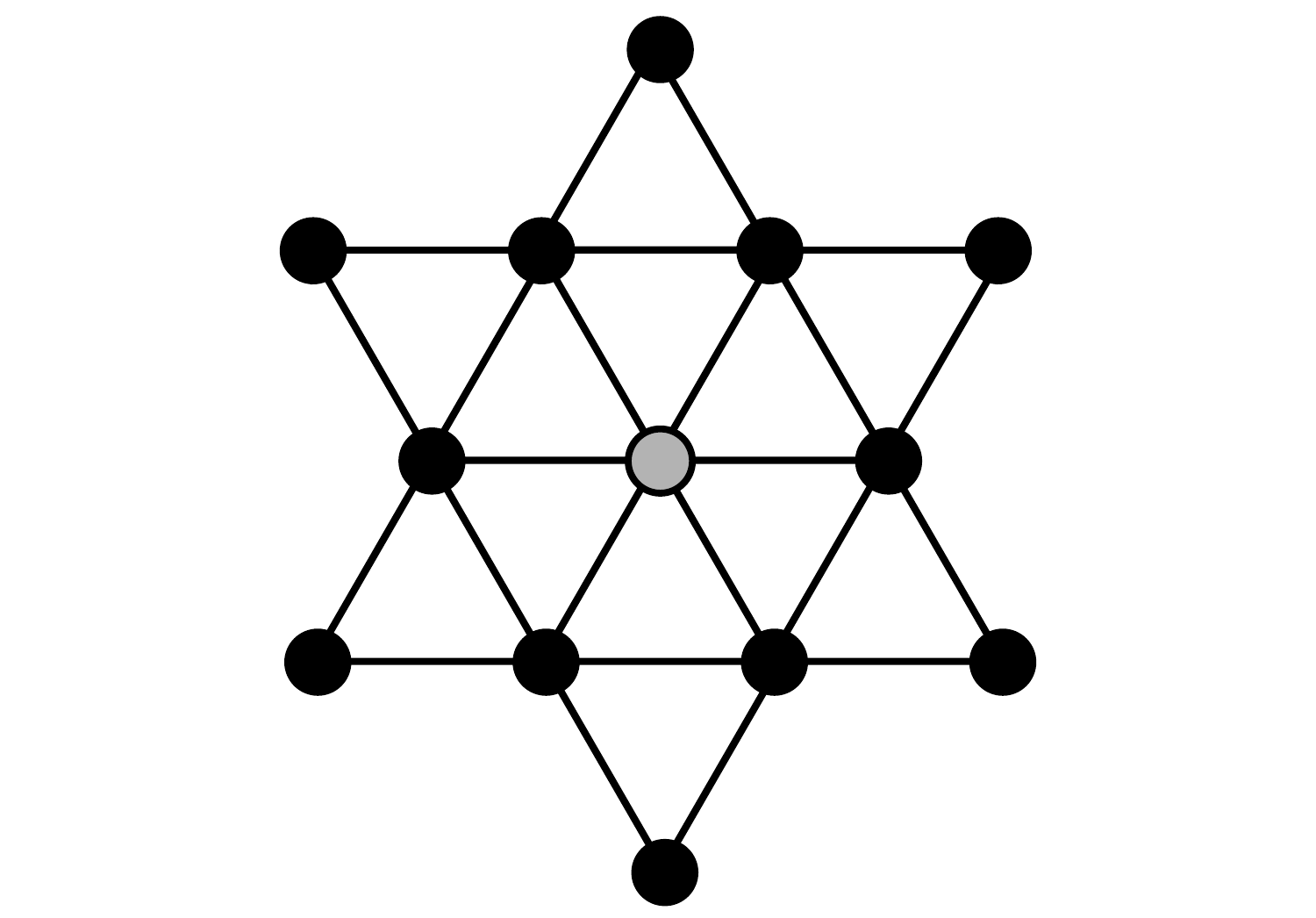}}
 &
\multirow{2}{\constraintwidth}{\includegraphics[width=\constraintwidth]{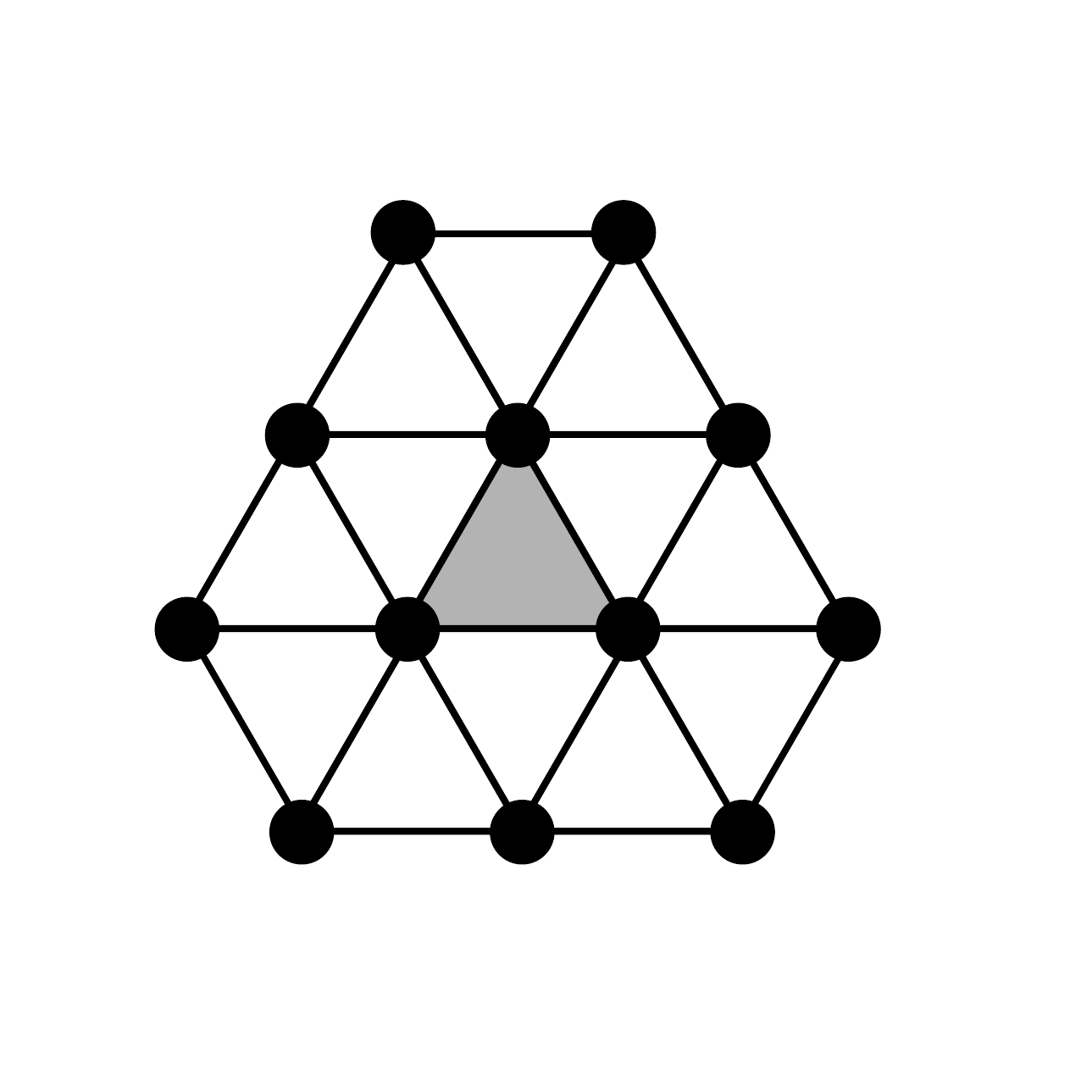}} \\[3ex]
$J_2$ & 
\includegraphics[width=\rulefigwidth]{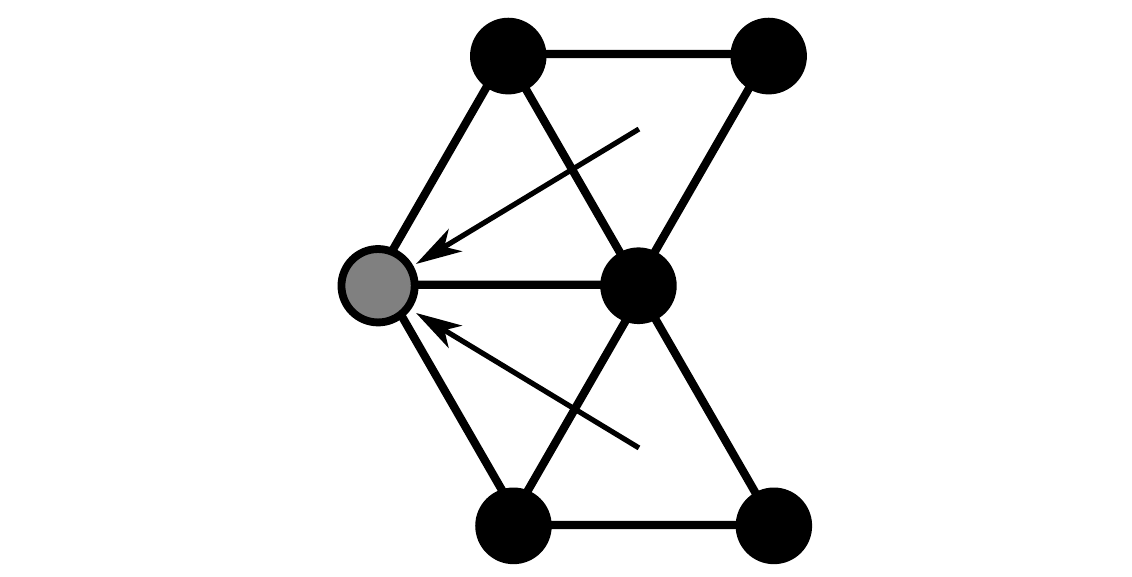} 
&& \\[5ex]
\hline&&&\\[-2ex]
$S$ & \includegraphics[width=\rulefigwidth]{trigrid-rule-star.pdf} & 
\multirow{2}{\constraintwidth}{\includegraphics[width=\constraintwidth]{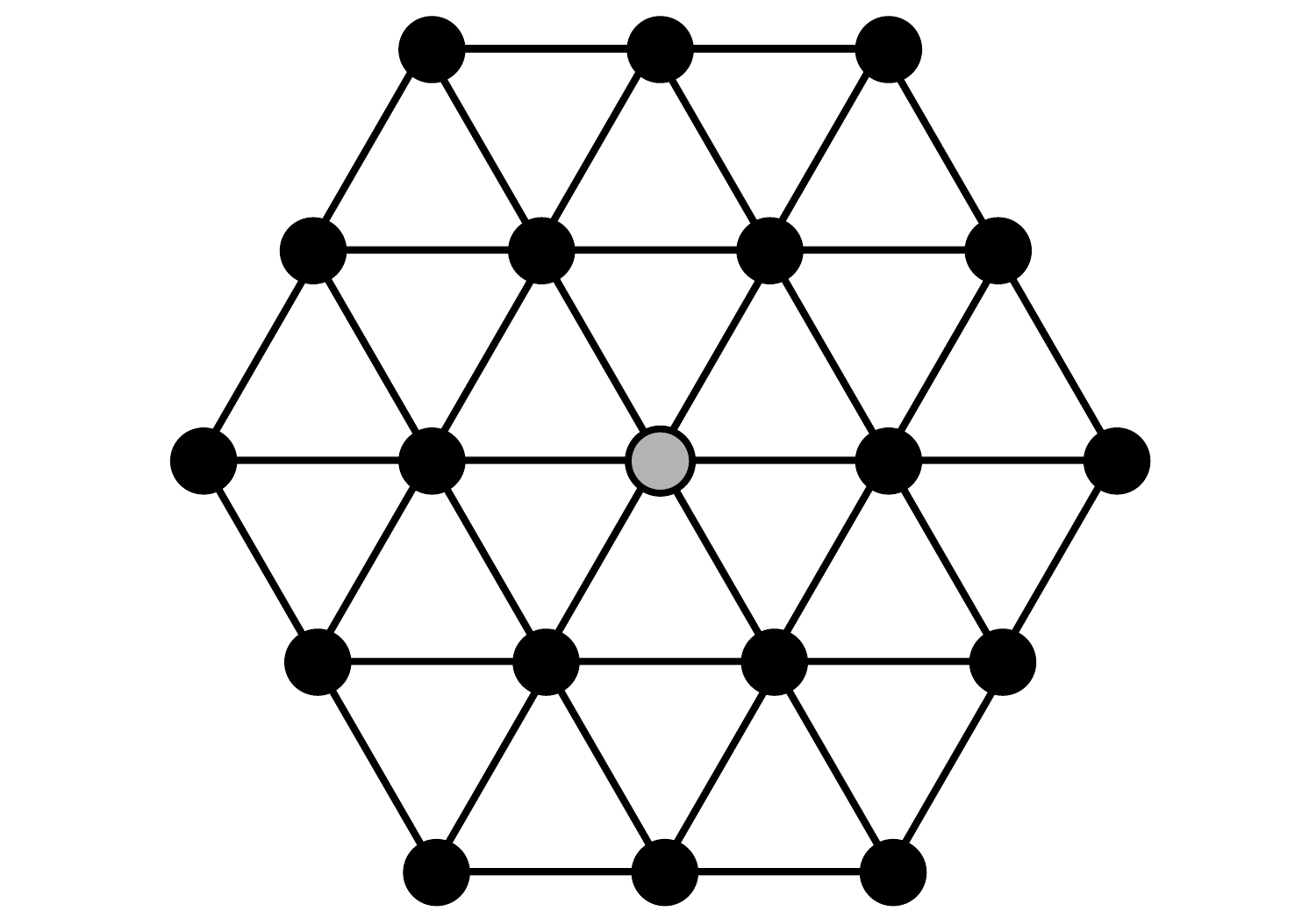}}
 &
\multirow{2}{\constraintwidth}{\includegraphics[width=\constraintwidth]{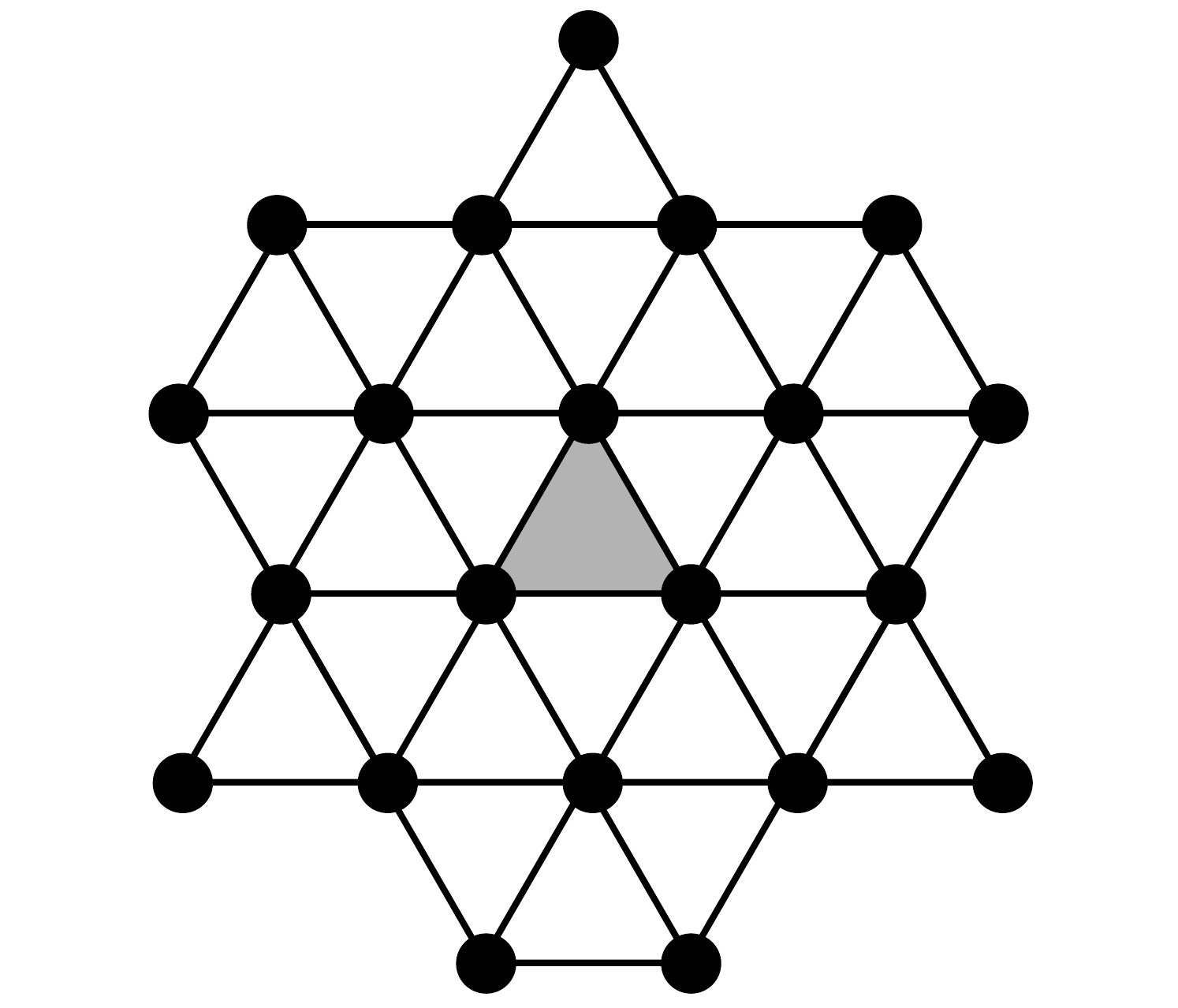}} 
\\
$J_{2}^+$ & 
\includegraphics[width=\rulefigwidth]{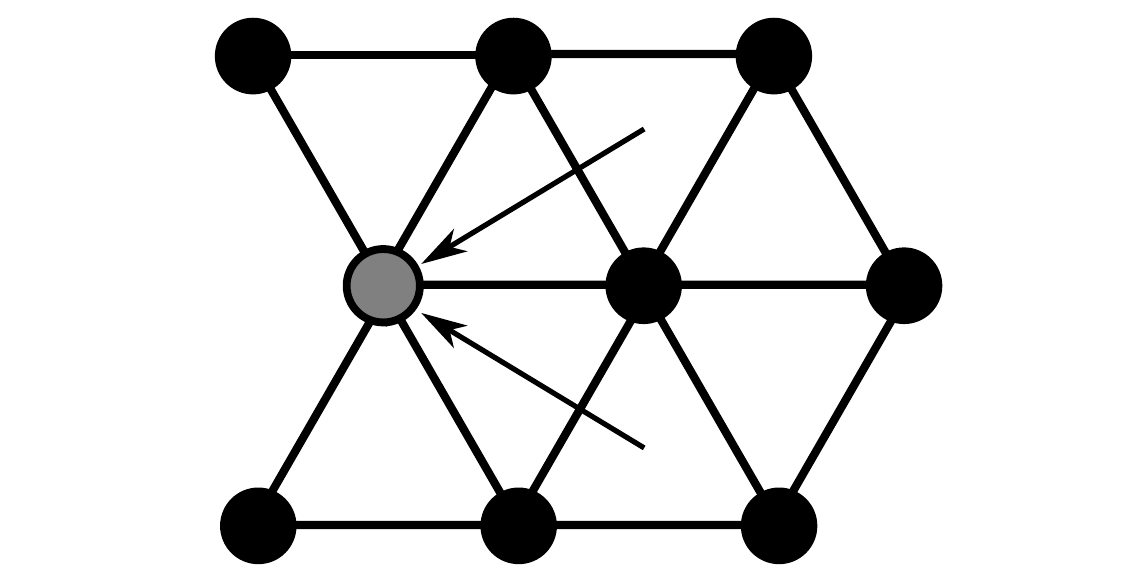} 
&& \\
\hline
\end{tabular}
}

\caption{\label{tab:triangularrules}Various Rules and Constraint Configurations in the Triangular Grid.}
\end{table}

\section{Bounds on Codes Using ADAGE}\label{sec:bounds}

\begin{table}[H]
{\small
\centering
\def\colonewidth{0.75in}
\renewcommand{\arraystretch}{1.2}
\begin{tabular}[h]{|c|rr@{\ $\approx$\ }l|rr@{\ $\approx$\ }l|rr@{\ $\approx$\ }l|}
\hline
\emph{Set Type} & \multicolumn{3}{c|}{\bf Hexagonal Grid} &\multicolumn{3}{c|}{\bf Square Grid} &\multicolumn{3}{c|}{\bf Triangular Grid} \\
\hline&\multicolumn{3}{c|}{\ }&\multicolumn{3}{c|}{\ }&\multicolumn{3}{c|}{\ }\\[-3ex]
\hline&\multicolumn{3}{c|}{\ }&\multicolumn{3}{c|}{\ }&\multicolumn{3}{c|}{\ }\\[-2ex]
\multirow{1}{\colonewidth}{Dominating Set} & 
$V_1$ & $\frac{1}{4}$ & 0.250000$^*$ & 
$V_1$ & $\frac{1}{5}$ & 0.200000$^*$ & 
$V_1$ & $\frac{1}{7}$ & 0.142857$^*$ \\[3ex]
\hline&\multicolumn{3}{c|}{\ }&\multicolumn{3}{c|}{\ }&\multicolumn{3}{c|}{\ }\\[-2ex]
\multirow{8}{\colonewidth}{Identifying Code} & 
$V_1$ & $\frac{2}{5}$ & 0.400000 & 
$V_1$ & $\frac{3}{10}$ & 0.300000 & 
$V_1$ & $\frac{1}{4}$ & 0.250000$^*$ \\
& $V_2$ &  $\frac{33}{80}$  &  0.412500 & 
$N$ &    $\frac{8}{25}$ &  0.320000  & 
$C_{1}^+$ & $\frac{1}{4}$ & 0.250000$^*$    \\
& $C_1$ & $\frac{33}{80}$ & 0.412500 & 
$V_2$ &   $\frac{7}{20}$ & 0.350000$^*$  &
$C_1, C_2$ & $\frac{1}{4}$ & 0.250000$^*$ 
\\
& $C_1, C_2^+$ & $\frac{33}{80}$ & 0.412500 & 
$C_{1}$ & $\frac{7}{20}$ & 0.350000$^*$ &
\multicolumn{3}{c|}{\ }
 \\ 
& $N$ &  $\frac{23}{55}$  & 0.418182$^\dagger$ & 
\multicolumn{3}{c|}{\ }
&
\multicolumn{3}{c|}{\ }
\\[0.5ex]
&$N, J_2$ &   $\frac{23}{55}$  & 0.418182$^\dagger$ & 
\multicolumn{3}{c|}{\ } & 
\multicolumn{3}{c|}{\ }\\
&\emph{Upper} \cite{cohen2000bounds}: & $\frac{3}{7}$   & 0.428571$^\ddagger$  & 
\emph{Upper} \cite{ben2005exact}: &   $\frac{7}{20}$ & 0.350000$^*$ & 
\emph{Upper} \cite{karpovsky1998new}: &  $\frac{1}{4}$  &  0.250000 \\[1ex]
\hline&\multicolumn{3}{c|}{\ }&\multicolumn{3}{c|}{\ }&\multicolumn{3}{c|}{\ }\\[-2ex]
\multirow{7}{\colonewidth}{Strong Identifying Code} & 
$V_1$ & $\frac{17}{40}$ & 0.425000 & 
$V_1$ & $\frac{1}{3}$ & 0.333333 & 
$V_1$ & $\frac{1}{4}$ & 0.250000 \\
& $V_2$ &  $\frac{8}{17}$ & 0.470588 & 
$V_2$ &     &  & 
$S$ &    $\frac{22}{73}$ & 0.301370  \\
& $C_1$ & $\frac{8}{17}$ & 0.470588 & 
$C_1$ & $\frac{22}{57}$ & 0.385964 & 
$C_{1}^{+}$ & $\frac{5}{16}$ & 0.312500\\
& $C_1, C_2$ & $\frac{8}{17}$ & 0.470588 &  
$C_1, C_2$ & $\frac{7}{18}$ &  0.388889 & 
$C_{1}^+, C_{2}^+$ & $\frac{4}{13}$ & 0.307692 \\
&\emph{Upper} \cite{honkala2010optimal}: &   $\frac{1}{2}$ & 0.50000$^*$  & 
\emph{Upper} \cite{honkala2010optimal}: &   $\frac{2}{5}$ & 0.400000$^*$  & 
\emph{Upper} \cite{honkala2010optimal}: &  $\frac{6}{19}$  & 0.315789$^*$  \\[1ex]
\hline&\multicolumn{3}{c|}{\ }&\multicolumn{3}{c|}{\ }&\multicolumn{3}{c|}{\ }\\[-2ex]
\multirow{6}{\colonewidth}{Locating-Dominating Code} & 
$V_1$ & $\frac{4}{13}$ & 0.307692 & 
$V_1$ & $\frac{1}{4}$ & 0.250000 & 
$V_1$ & $\frac{2}{11}$ & 0.181818 \\
& $V_2$ &  $\frac{1}{3}$ &0.333333$^*$  & 
$V_2$ &   $\frac{3}{10}$ & 0.300000$^*$ & 
$S$ & $\frac{9}{41}$   & 0.219512  \\
& $C_1$ & $\frac{1}{3}$&0.333333$^*$ & 
$C_1$ & $\frac{3}{10}$ & 0.300000$^*$ & 
$C_{1}^+$ & $\frac{2}{9}$ & 0.222222 \\
& $C_1, C_2$ & $\frac{1}{3}$&0.333333$^*$ & 
$C_1, C_2$ & $\frac{3}{10}$  &0.300000$^*$ & 
$C_1, C_2$ & $\frac{12}{53}$ & 0.226415 \\ 
&\emph{Upper} \cite{honkala2006locating}: & $\frac{1}{3}$   &  0.333333$^*$ & 
\emph{Upper} \cite{slater2002fault}: &   $\frac{3}{10}$ & 0.300000$^*$ & 
\emph{Upper} \cite{honkala2006optimal}: &    $\frac{13}{57}$ & 0.228070$^*$  \\[1ex]
\hline&\multicolumn{3}{c|}{\ }&\multicolumn{3}{c|}{\ }&\multicolumn{3}{c|}{\ }\\[-2ex]
\multirow{4}{\colonewidth}{Open-Locating-Dominating Code} & 
$V_1$ & $\frac{4}{9}$ & 0.444444 & 
$V_1$ & $\frac{1}{3}$ & 0.333333 & 
$V_1$ & $\frac{2}{9}$ & 0.222222 \\
& $V_2$ & $\frac{1}{2}$   & 0.500000$^*$ & 
$N$ & $\frac{1}{3}$   &  0.333333 & 
$C_1, C_2$ & $\frac{7}{23}$   &  0.304347  \\
& $C_{1}$ & $\frac{1}{2}$ &0.500000$^*$ & 
$C_{1}^+$ & $\frac{2}{5}$ & 0.400000$^*$   &
$C_{1}^+$ & $\frac{4}{13}$ & 0.307692$^*$ \\
&\emph{Upper} \cite{seo2010open}: &  $\frac{1}{2}$  &  0.500000$^*$ & 
\emph{Upper} \cite{seo2010open}: & $\frac{2}{5}$   & 0.400000$^*$ & 
\emph{Upper} \cite{kincaid2014optimal}: &  $\frac{4}{13}$&0.307692$^*$  \\[1ex]
\hline
\end{tabular}\\
}

{
\footnotesize
$^*$ Bound given is optimal lower bound on density.

$^\dagger$ Bound given is current-best lower bound on density, but may not be optimal.

$^\ddagger$ Bound given is current-best upper bound on density, but may not be optimal.
}

\caption{\label{tab:bounds}Density lower bounds for various set types in various grids, using various discharging rules.}
\end{table}

%
%
%
%
%
%
%
%
%
%
%

\label{apx:end}
\end{document}